\newcommand{\red}{\color{black}}
\title{Rewindable Quantum Computation and Its Equivalence to Cloning and Adaptive Postselection}
\author{Ryo Hiromasa\footnote{Hiromasa.Ryo@aj.MitsubishiElectric.co.jp}}{Mitsubishi Electric Corporation, Information Technology R\&D Center, 5-1-1 Ofuna, Kamakura, Kanagawa 247-8501, Japan}{}{}{}
\author{Akihiro Mizutani\footnote{mizutani@eng.u-toyama.ac.jp, This work was done in part while AM worked for Mitsubishi Electric.}}{Faculty of Engineering, University of Toyama, Gofuku 3190, Toyama 930-8555, Japan}{}{}{}
\author{Yuki Takeuchi\footnote{yuki.takeuchi@ntt.com}}
{NTT Communication Science Laboratories, NTT Corporation, 3-1 Morinosato Wakamiya, Atsugi, Kanagawa 243-0198, Japan\\ NTT Research Center for Theoretical Quantum Information, NTT Corporation, 3-1 Morinosato Wakamiya, Atsugi, Kanagawa 243-0198, Japan}{}{}{}
\author{Seiichiro Tani\footnote{seiichiro.tani@ntt.com}}{NTT Communication Science Laboratories, NTT Corporation, 3-1 Morinosato Wakamiya, Atsugi, Kanagawa 243-0198, Japan\\ NTT Research Center for Theoretical Quantum Information, NTT Corporation, 3-1 Morinosato Wakamiya, Atsugi, Kanagawa 243-0198, Japan\\ International Research Frontiers Initiative (IRFI), Tokyo Institute of Technology, Japan}{}{}{}
\authorrunning{R. Hiromasa, A. Mizutani, Y. Takeuchi, and S. Tani}
\keywords{Quantum computing, Postselection, Lattice problems}
\begin{document}
\maketitle

\begin{abstract}
We define rewinding operators that invert quantum measurements.
Then, we define complexity classes ${\sf RwBQP}$, ${\sf CBQP}$, and ${\sf AdPostBQP}$ as sets of decision problems solvable by polynomial-size quantum circuits with a polynomial number of rewinding operators, cloning operators, and adaptive postselections, respectively.
Our main result is that ${\sf BPP}^{\sf PP}\subseteq{\sf RwBQP}={\sf CBQP}={\sf AdPostBQP}\subseteq{\sf PSPACE}$.
As a byproduct of this result, we show that any problem in ${\sf PostBQP}$ can be solved with only postselections of {\red events that occur with probabilities} polynomially close to one.
Under the strongly believed assumption that ${\sf BQP}\nsupseteq{\sf SZK}$, or the shortest independent vectors problem cannot be efficiently solved with quantum computers, we also show that a single rewinding operator is sufficient to achieve tasks that are intractable for quantum computation.
{\red Finally, we show that rewindable Clifford circuits remain classically simulatable, but rewindable instantaneous quantum polynomial time circuits can solve any problem in ${\sf PP}$.}
\end{abstract}

\section{Introduction}
\subsection{\red Background}
It is believed that universal quantum computers outperform their classical counterparts.
There are two approaches to strengthening this belief.
The first is to introduce tasks that seem intractable for classical computers but can be efficiently solved with quantum computers.
For example, no known efficient classical algorithm can solve the integer factorization, but Shor's quantum algorithm~\cite{Sh97} can do it efficiently.
The second approach is to consider what happens if classical computers can efficiently simulate the behaviors of quantum computers.
So far, sampling tasks have often been considered in this approach~\cite{HM17}.
It has been shown that if any probability distribution obtained from some classes of quantum circuits (e.g., instantaneous quantum polynomial time (IQP) circuits~\cite{BJS11}) can be efficiently simulated with classical computers, then ${\sf PH}$ collapses to its second~\cite{FKMNTT18,MTN18} or third level~\cite{BJS11,AA11,MFF14,LLKROR14,BMS16,TTYT16,TT16,GWD17,HKSBSJ17,MSM17,BISBDJBMN18,HBSE18,BFNV19}, or ${\sf BQP}$ is in the second level of ${\sf PH}$~\cite{MTT19}.
Since the collapse of ${\sf PH}$ and the inclusion of ${\sf BQP}$ in ${\sf PH}$ are considered to be unlikely, these results imply quantum advantages.

{\red\subsection{Our Contribution}}
In this paper, we take the second approach.
If efficient classical simulation of quantum measurements is possible, then the measurements become invertible because classical computation {\red can be made} reversible~{\red\footnote{\red More concretely, if quantum measurements can be represented as logic circuits composed of AND and NOT, we can make them invertible by replacing each AND with the Toffoli gate and an ancillary bit.}}.
From the analogy of the rewinding technique used in zero-knowledge (see e.g., \cite{W09,ARU14,U16}), we call such measurements rewindable measurements.
They make quantum computation genuinely reversible and {\it incredibly} powerful{~\red\footnote{\red This result would imply the hardness on the efficient classical simulation of the whole behavior of quantum computers but does not say anything about the separation between ${\sf BQP}$ and ${\sf BPP}$.}}.
More formally, the following rewinding operator $R$ becomes possible. $R$ receives a post-measurement $n$-qubit quantum state $(|z\rangle\langle z|\otimes I^{\otimes n-1})|\psi\rangle$ with $z\in\{0,1\}$ and a {\red succinct} classical description $\mathcal{D}$ of a pre-measurement quantum state $|\psi\rangle$ and outputs the quantum state $|\psi\rangle$:
\begin{eqnarray}
\label{rewind}
R\left(\left(|z\rangle\langle z|\otimes I^{\otimes n-1}\right)|\psi\rangle\otimes|\mathcal{D}\rangle\right)=|\psi\rangle,
\end{eqnarray}
where $I\equiv|0\rangle\langle 0|+|1\rangle\langle 1|$ is the two-dimensional identity operator.
As an important point, $R$ requires the classical description $\mathcal{D}$ as an input.
If it requires only $(|z\rangle\langle z|\otimes I^{\otimes n-1})|\psi\rangle$ as an input, the output state cannot be uniquely determined.
For example, in the case of both $|\psi\rangle=|0\rangle|+\rangle$ and $(|0\rangle|+\rangle+|1\rangle|-\rangle)/\sqrt{2}$, the post-measurement state is $|0\rangle|+\rangle$ for $z=0$, where $|\pm\rangle\equiv(|0\rangle\pm|1\rangle)/\sqrt{2}$.
To circumvent this problem, we require the classical description $\mathcal{D}$ as information about $|\psi\rangle$.
As a concrete example, the classical descriptions of $|0\rangle|+\rangle$ and $(|0\rangle|+\rangle+|1\rangle|-\rangle)/\sqrt{2}$ are $I\otimes H$ and $CZ(H\otimes H)$, respectively.
Here, $H\equiv|+\rangle\langle 0|+|-\rangle\langle1|$ is the Hadamard gate, $CZ\equiv|0\rangle\langle 0|\otimes I+|1\rangle\langle 1|\otimes Z$ is the controlled-$Z$ ($CZ$) gate, and $Z\equiv|0\rangle\langle 0|-|1\rangle\langle 1|$ is the Pauli-$Z$ operator.
These descriptions are proper because $|0\rangle|+\rangle$ and $(|0\rangle|+\rangle+|1\rangle|-\rangle)/\sqrt{2}$ can be prepared by applying $I\otimes H$ and $CZ(H\otimes H)$ on the fixed initial state $|00\rangle$, respectively.
Furthermore, we define rewinding operators for only pure states, {\red and hence} their functionality is arbitrary for mixed states.
Due to this restriction, we can avoid contradictions with an ordinary interpretation of mixed states (see Sec.~\ref{IIIA}) and the no-signaling principle (see {\red Sec.}~\ref{B}).

It is strongly believed that the rewinding of measurements cannot be performed in ordinary quantum mechanics, i.e., the superposition is destroyed by measurements, and it cannot be recovered after measurements.
One may think that if the rewinding were possible against this belief, it could add extra computation power to universal quantum computers.
We show that this expectation is indeed correct.
More formally, we define ${\sf RwBQP}$ (${\sf BQP}$ with rewinding) as a set of decision problems solvable by polynomial-size quantum circuits with a polynomial number of rewinding operators and show ${\sf BQP}\subseteq{\sf BPP}^{\sf PP}\subseteq{\sf RwBQP}$.

The rewinding operator can be considered a probabilistic postselection.
By just repeating measurements and rewinding operations until the target outcome is obtained, we can efficiently simulate the postselection with high probability if the output probability of the target outcome is at least the inverse of some polynomial.
However, the original postselection enables us to deterministically obtain a target outcome even if the probability is exponentially small~\cite{A05}.
In this case, the above simple repeat-until-success approach requires an exponential number of rewinding operations on average.
Surprisingly, we show that it is possible to exponentially mitigate probabilities of nontarget outcomes with a polynomial number of rewinding operators.
By using this mitigation protocol, we can obtain the target outcome with high probability even if the output probability of the target outcome is exponentially small.
In this sense, the rewinding and postselection are equivalent.
More formally, we show that ${\sf RwBQP}$ is equivalent to the class ${\sf AdPostBQP}$ (${\sf BQP}$ with adaptive postselection) of decision problems solvable by polynomial-size quantum circuits with a polynomial number of adaptive postselections.
Here, an adaptive postselection is a projector $|b\rangle\langle b|$ such that the value of $b\in\{0,1\}$ depends on previous measurement outcomes.
From this equivalence, we also obtain ${\sf RwBQP}\subseteq{\sf PSPACE}$.

The rewinding is also related to cloning.
By strengthening our rewinding operator in Eq.~(\ref{rewind}), we define the cloning operator $C$ as follows:
\begin{eqnarray}
\label{clone}
C|\mathcal{D}\rangle=|\psi\rangle.
\end{eqnarray}
Unlike Eq.~(\ref{rewind}), this operator does not require the post-measurement state $(|z\rangle\langle z|\otimes I^{\otimes n-1})|\psi\rangle$.
Since it is easy to duplicate the classical description $\mathcal{D}$, we can efficiently duplicate $|\psi\rangle$, i.e., generate $|\psi\rangle^{\otimes 2}$ by simply applying $C\otimes C$ on $|\mathcal{D}\rangle^{\otimes 2}$.
Although the ordinary cloning operator $\tilde{C}$ is defined such that $\tilde{C}(|\psi\rangle|0^m\rangle)=|\psi\rangle^{\otimes 2}$ for some $m\in\mathbb{N}$~\cite{WZ82}, we define $C$ as an operator whose input is the classical description $\mathcal{D}$ of $|\psi\rangle$ rather than $|\psi\rangle$ itself.
This makes sense because we can always obtain a classical description of $|\psi\rangle$ in our setting\footnote{\red In this paper, we focus on complexity classes based on quantum circuits in uniform families. The extension to those involving prover(s) or advice is mentioned in Sec.~\ref{conclusion} as an open problem.}.
Note that it could be difficult to realize $C$ in quantum polynomial time because $|\psi\rangle$ might be prepared by using measurements.
More precisely, it may be defined as a quantum state prepared when the measurement outcome is $0$, e.g., $|\psi\rangle=U_2(I\otimes|0\rangle\langle 0|\otimes I^{\otimes n-2})U_1|0^n\rangle$ for some unitary operators $U_1$ and $U_2$.
We show that ${\sf RwBQP}$ is also equivalent to the class ${\sf CBQP}$ (${\sf BQP}$ with cloning) of decision problems solvable by polynomial-size quantum circuits with a polynomial number of cloning operators.
That is, the difference between Eqs.~(\ref{rewind}) and (\ref{clone}) does not matter to the computation power.
The following theorem summarizes our main results explained above:
\begin{result}[Theorem~\ref{BRQP=PP}]
\label{theorem1informal}
${\sf BPP}^{\sf PP}\subseteq{\sf RwBQP}={\sf CBQP}={\sf AdPostBQP}\subseteq{\sf PSPACE}$.
\end{result}
The computation power of the cloning has been addressed in \cite{ABFL14} as an open problem.
Result~\ref{theorem1informal} gives lower and upper bounds on our class ${\sf CBQP}$, and it seems to be a reasonable approach to capturing the power of cloning.
Note that more precisely speaking, in Corollary~\ref{nlemma4}, we obtain the lower bound ${\sf BQP}^{\sf PP}_{{\rm classical}}$, which may be slightly tighter than ${\sf BPP}^{\sf PP}$.
Here, the subscript ``classical'' is a symbol used in \cite{A05} and means that only classical queries are allowed.

All the above results assume that rewinding operators can be utilized a polynomial number of times.
Under the strongly believed assumption that the shortest independent vectors problem (SIVP)~\cite{R05} cannot be efficiently solved with universal quantum computers, we show that a single rewinding operator is sufficient to achieve a task that is intractable for universal quantum computation:
\begin{result}[Informal Version of Theorem~\ref{easy}]
\label{theorem2informal}
Assume that there is no polynomial-time quantum algorithm that solves the SIVP.
Then, there exists a problem such that it can be efficiently solved with a constant probability if a single rewinding operator is allowed for quantum computation, but the probability is super-polynomially small if it is not allowed.
\end{result}

We also show a superiority of a single rewinding operator under a different assumption:
\begin{result}[Informal Version of Corollary~\ref{new23}]
\label{result3}
Let ${\sf RwBQP}(1)$ be ${\sf RwBQP}$ with a single rewinding operator.
Then, ${\sf RwBQP}(1)\supset{\sf BQP}$ unless ${\sf BQP}\supseteq{\sf SZK}$.
\end{result}
It is strongly believed that ${\sf BQP}$ does not include ${\sf SZK}$.
At least, we can say that it is hard to show ${\sf BQP}\supseteq{\sf SZK}$ because there exists an oracle $A$ such that ${\sf BQP}^A\nsupseteq{\sf SZK}^A$~\cite{A02}.
For example, by assuming that the decision version of SIVP, gapSIVP, is hard for universal quantum computation, Result~\ref{result3} implies that a single rewinding operator is sufficient to achieve a task that is intractable for universal quantum computation.
This is because the gapSIVP (with an appropriate parameter) is in ${\sf SZK}$~\cite{PV08}.
As a difference from Result~\ref{theorem2informal}, Result~\ref{result3} shows the superiority of a single rewinding operator for decision problems.

As simple observations, we also consider the effect of rewinding operators for restricted classes of quantum circuits.
It has been shown that polynomial-size Clifford circuits are classically simulatable~\cite{G99}.
In {\red Sec.}~\ref{A}, we show that such circuits with rewinding operators are still classically simulatable.
It is also known that IQP circuits are neither universal nor classically simulatable under plausible complexity-theoretic assumptions~\cite{BMS16}.
In {\red Sec.}~\ref{B}, we show that IQP circuits with rewinding operators can efficiently solve any problem in ${\sf PP}$.

Our mitigation protocol used to show ${\sf AdPostBQP}\subseteq{\sf RwBQP}$ also has an application for ${\sf PostBQP}$~\cite{A05}, which is a class of decision problems solvable by polynomial-size quantum circuits with non-adaptive postselections.
By slightly modifying our mitigation protocol and replacing rewinding operators with postselections, we obtain the following corollary:
\begin{result}[Corollary~\ref{main1}]
\label{coroinfo}
For any polynomial function {\red$P(|x|)$} in the size $|x|$ of an instance $x$, ${\sf PP}={\sf PostBQP}$ holds even if only non-adaptive postselections of outputs whose probabilities are {\red $1-O(1/P(|x|))$} are allowed.
 \end{result}
The equality ${\sf PP}={\sf PostBQP}$ was originally shown in \cite{A05} by using postselections of outputs whose probabilities may be exponentially small.
Result~\ref{coroinfo} shows that such postselections can be replaced with those of outputs whose probabilities are polynomially close to one.
This result is {\it optimal} in the sense that polynomially many repetitions of non-adaptive postselections of outputs whose probabilities are $1-1/f(|x|)$ with a super-polynomial function $f(|x|)$ can be simulated in quantum polynomial time.
It is worth mentioning that when the probabilities are at least some constant, the above replacement is obvious in ${\sf PostBPP}$ (or ${\sf BPP}_{\sf path}$).
This is because any behavior of a probabilistic Turing machine can be represented as a binary tree such that each path is chosen with probability $1/2$.
However, in its quantum analogue ${\sf PostBQP}$, it was open as to whether such replacement is possible even if the probabilities are at least some constant.

\medskip
\noindent
{\bf Related Work.}
As a hypothetical ability that enables us to perform the cloning, a closed timelike curve (CTC) has already been studied~\cite{BWW13}.
It also rewinds the time in a sense.
Among several formulations of the CTC~\cite{D91,LMPGSPRDSSS11}, the Deutschian CTC~\cite{D91} is the major one.
Let $\mathcal{R}_{\rm CTC}$ and $\mathcal{Q}$ be quantum registers that are input to and not input to the CTC, respectively, and ${\rm Tr}_{\mathcal{R}}[\cdot]$ be the partial trace over the system $\mathcal{R}$.
Suppose that the register $\mathcal{Q}$ is initialized to $|0^m\rangle\langle 0^m|$ with $m\in\mathbb{N}$.
The Deutschian CTC prepares a quantum state\footnote{A fixed-point theorem guarantees the existence of such $\rho$ for any $U$~\cite{D91}.} $\rho$ on $\mathcal{R}_{\rm CTC}$ such that for any unitary operator $U$ acting on $\rho\otimes|0^m\rangle\langle 0^m|$,
\begin{eqnarray}
\label{CTC}
{\rm Tr}_{\mathcal{Q}}\left[U\left(\rho\otimes|0^m\rangle\langle 0^m|\right)U^\dag\right]=\rho.
\end{eqnarray}
Equation~(\ref{CTC}) means that the future state $\rho$ generated after applying $U$ is prepared before applying $U$ under the constraint of the causal consistency.
Aaronson and Watrous have shown that classes ${\sf P}_{\sf CTC}$ and ${\sf BQP}_{\sf CTC}$ of decision problems solvable by polynomial-size logic and quantum circuits with the Deutschian CTC are equivalent and that they are also equivalent to ${\sf PSPACE}$~\cite{AW08}.
From Result~\ref{theorem1informal}, the Deutschian CTC should be at least as powerful as the rewinding.

As another hypothetical ability, non-collapsing measurements, which allow us to obtain measurement outcomes without perturbing quantum states, have been considered in \cite{ABFL14,ABFL16}.
For the class ${\sf PDQP}$ of decision problems solvable by polynomial-size quantum circuits with non-collapsing measurements, Aaronson {\it et al.} have shown ${\sf SZK}\subseteq{\sf PDQP}\subseteq{\sf BPP}^{\sf PP}$~\cite{ABFL14}.
Therefore, Result~\ref{theorem1informal} implies that the rewinding should be at least as powerful as the non-collapsing measurements.

\subsection{Overview of Techniques}
\label{IB}
To obtain Result~\ref{theorem1informal}, we show (i) ${\sf RwBQP}\subseteq{\sf CBQP}$; (ii) ${\sf CBQP}\subseteq{\sf AdPostBQP}$; (iii) ${\sf AdPostBQP}\subseteq{\sf RwBQP}$; (iv) ${\sf BQP}^{\sf PP}_{{\rm classical}}\subseteq{\sf RwBQP}$, which immediately means ${\sf BPP}^{\sf PP}\subseteq{\sf RwBQP}$ because ${\sf BPP}^{\sf PP}\subseteq{\sf BQP}^{\sf PP}_{{\rm classical}}$; and (v) ${\sf AdPostBQP}\subseteq{\sf PSPACE}$.
The first inclusion (i) is obvious from the definitions of the rewinding operator $R$ and cloning operator $C$ (see Eqs.~(\ref{rewind}) and (\ref{clone})).
The fifth inclusion (v) can also be easily shown by using the Feynman path integral that is used to show ${\sf BQP}\subseteq{\sf PSPACE}$~\cite{BV97}.
In ${\sf BQP}$, measurements are only performed at the end of quantum circuits.
On the other hand, in ${\sf AdPostBQP}$, intermediate ordinary and postselection measurements are also allowed.
However, this difference does not matter in showing the inclusion in ${\sf PSPACE}$.

The second inclusion (ii) is a natural consequence from the simple observation that postselections can simulate the cloning operator $C$.
On the other hand, the third inclusion (iii) is nontrivial because we have to efficiently simulate postselection by using only a polynomial number of rewinding operators.
To this end, we give an efficient protocol to exponentially mitigate the amplitude of a nontarget state by using a polynomial number of rewinding operators.
Let $|\psi\rangle=\sqrt{2^{-p(n)}}|\psi_t\rangle+\sqrt{1-2^{-p(n)}}|\psi_t^\perp\rangle$, where $p(n)$ is some polynomial in the size $n$ of a given ${\sf AdPostBQP}$ problem, $|\psi_t\rangle$ is a target state that we would like to postselect, and $\langle\psi_t|\psi_t^\perp\rangle=0$.
By using our mitigation protocol, from $|\psi\rangle$, we can obtain $\sqrt{2^{-p(n)}}|\psi_t\rangle+\sqrt{2^{-p(n)}(1-2^{-p(n)})}|\psi_t^\perp\rangle$
up to a normalization factor.
Since $2^{-p(n)}$ is larger than $2^{-p(n)}(1-2^{-p(n)})$, we now obtain $|\psi_t\rangle$ with probability of at least $1/2$.
By repeating these procedures, we can simulate the postselection of $|\psi_t\rangle$ with high probability. 

Our mitigation protocol is also useful in showing the fourth inclusion (iv).
First, from ${\sf PP}={\sf PostBQP}$~\cite{A05}, we obtain ${\sf PP}\subseteq{\sf RwBQP}$ by using our mitigation protocol.
Then, we show that the completeness-soundness gap in ${\sf RwBQP}$ can be amplified to a value exponentially close to $1$, and ${\sf RwBQP}$ is closed under composition if only classical queries are allowed.
By combining these results, we obtain ${\sf BQP}^{\sf PP}_{{\rm classical}}\subseteq{\sf RwBQP}^{\sf RwBQP}_{{\rm classical}}={\sf RwBQP}$.
Note that ${\sf BQP}^{\sf PP}_{{\rm classical}}\subseteq{\sf RwBQP}^{\sf PP}_{{\rm classical}}$ is obvious from the definition of ${\sf RwBQP}$ (see Def.~\ref{BRQP}).

We show Result~\ref{theorem2informal} as follows.
Cojocaru {\it et al.} have shown that under the hardness of SIVP, there exists a family $\mathcal{F}\equiv\{f_K\}_{K\in\mathcal{K}}$ of functions that is collision resistant against quantum computers, i.e., no polynomial-time quantum algorithm can output a collision pair $(x,x')$ such that $x\neq x'$ and $f_K(x)=f_K(x')$~\cite{CCKW18}.
Here, $\mathcal{K}$ is a finite set of parameters uniquely specifying each function (see Sec.~\ref{IIC} for details).
We show that a collision pair can be output with a constant probability if only one rewinding operator is given.
From the construction of $\mathcal{F}$, the last bits of collision pairs are different, i.e., there exist $x_0$ and $x_1$ such that $x=(x_0,0)$ and $x'=(x_1,1)$.
Using the idea in \cite{CCKW19}, we can efficiently prepare
\begin{eqnarray}
\label{preimage}
\frac{|x_0\rangle|0\rangle+|x_1\rangle|1\rangle}{\sqrt{2}}
\end{eqnarray}
for some output value $y=f_K(x)=f_K(x')$.
Note that since the preparation of Eq.~(\ref{preimage}) includes a measurement, if we perform it again, we will obtain a quantum state in Eq.~(\ref{preimage}) for a different output value $y'$, and hence it is difficult to simultaneously obtain $x$ and $x'$ for the same $y$.
When we can use a rewinding operator, the situation changes.
By measuring the state in Eq.~(\ref{preimage}), we can obtain $x_0$ or $x_1$.
For simplicity, suppose that we obtain $x_0$.
Then, by performing the rewinding operator $R$ on $|x_0\rangle|0\rangle$ and a classical description of Eq.~(\ref{preimage}), we can prepare the quantum state in Eq.~(\ref{preimage}) for the {\it same} $y$.
From this state, we can obtain $x_1$ with probability $1/2$.
As an important point, since the last bits of $x$ and $x'$ differ, a single rewinding operator (i.e., the rewinding of a single qubit) is sufficient to find a collision pair with a constant probability.

Finally, we show Result~\ref{result3}.
To this end, we show that a ${\sf SZK}$-complete problem is in ${\sf RwBQP}(1)$ by using a technique inspired by \cite{ABFL16}.

\section{Preliminaries}
\label{II}
In this section, we review some preliminaries that are necessary to understand our results.
In Sec.~\ref{IIA}, we introduce a complexity class ${\sf PostBQP}$ and explain the postselection.
In Sec.~\ref{IIC}, we introduce the SIVP and a collision-resistant and $\delta-2$ regular family of functions.

\subsection{Quantum {\red C}omplexity {\red C}lass}
\label{IIA}
In this subsection, we review ${\sf PostBQP}$ and explain the postselection.
Then, we clarify a difference between ${\sf PostBQP}$ and our class ${\sf AdPostBQP}$ (see Def.~\ref{AdPostBQP}).
Note that we assume that readers are familiar with classical complexity classes~\cite{AB09}.
${\sf PostBQP}$ is defined as follows:
\begin{definition}[${\sf PostBQP}$~\cite{A05}]
\label{postbqpa}
A promise problem $L=(L_{\rm yes},L_{\rm no})\subseteq\{0,1\}^\ast$ is in ${\sf PostBQP}$ if and only if there exist polynomials $n$ and $q$ and a uniform family $\{U_x\}_x$ of polynomial-size quantum circuits, such that
\begin{itemize}
\item ${\rm Pr}[p=1]\ge1/2^q$ 
\item when $x\in L_{\rm yes}$, ${\rm Pr}[o=1\ |\ p=1]\ge2/3$
\item when $x\in L_{\rm no}$, ${\rm Pr}[o=1\ |\ p=1]\le1/3$,
\end{itemize}
where $o$ and $p$ are called {\red single-qubit} output and postselection registers, respectively.
Here, for any $z_1\in\{0,1\}$ and $z_2\in\{0,1\}$,
\begin{eqnarray}
{\rm Pr}[p=z_2]&\equiv&\langle 0^n|U_x^\dag\left(I\otimes |z_2\rangle\langle z_2|\otimes I^{\otimes n-2}\right)U_x|0^n\rangle\\
{\rm Pr}[o=z_1\ |\ p=z_2]&\equiv&\cfrac{\langle 0^n|U_x^\dag\left(|z_1z_2\rangle\langle z_1z_2|\otimes I^{\otimes n-2}\right)U_x|0^n\rangle}{{\rm Pr}[p=z_2]}.
\end{eqnarray}
In this definition, ``polynomial'' means the one in the length $|x|$ of the instance $x$.
\end{definition}
From Def.~\ref{postbqpa}, we notice that the postselection is to apply a projector.
In ${\sf PostBQP}$, it is allowed to apply the projector $|1\rangle\langle 1|$ to the qubit in the postselection register at the end of a quantum circuit.
Therefore, ${\sf PostBQP}$ is a set of promise problems solvable by polynomial-size quantum circuits (in uniform families) with a single non-adaptive postselection\footnote{Note that a polynomial number of postselections are allowed if they can be unified as a single non-adaptive postselection.}.
On the other hand, in ${\sf AdPostBQP}$, we allow the application of a polynomial number of intermediate measurements and projectors.
This means that the value $b\in\{0,1\}$ of a projector $|b\rangle\langle b|$ can depend on previous measurement outcomes, while it is determined before executing a quantum circuit in ${\sf PostBQP}$.

\subsection{\red Shortest Independent Vectors Problem (SIVP)}
\label{IIC}
The SIVP with approximation factor $\gamma$ (${\rm SIVP}_\gamma$) is defined as follows:
\begin{definition}[${\rm SIVP}_\gamma$]
Let $n$ be any natural number and $\gamma\ (\ge1)$ be any real number.
Given an $n$ bases of a lattice $L$, output a set of $n$ linearly independent lattice vectors of length at most $\gamma\cdot\lambda_n(L)$.
Here, $\gamma$ can depend on $n$, and $\lambda_n(L)$ is the $n$th successive minimum of $L$ (i.e., the smallest $r$ such that $L$ has $n$ linearly independent vectors of norm at most $r$).
\end{definition}
Since there is no known polynomial-time quantum algorithm to solve ${\rm SIVP}_\gamma$ for polynomial approximation factor, it is used as a basis of the security of lattice-based cryptography~\cite{R05}.

The hardness of the SIVP is also used to construct families of collision-resistant functions against universal quantum computers.
From \cite{CCKW18}, we can immediately obtain the following theorem:
\begin{theorem}[adapted from \cite{CCKW18}]
\label{crf}
Let $n$ be any natural number, $q=2^{5\lceil\log_2{n}\rceil+21}$, $m=23n+5n\lceil\log_2{n}\rceil$, $\mu=2mn\sqrt{23+5\log_2{n}}$, and $\mu'=\mu/m$, where $\lceil\cdot\rceil$ is the ceiling function.
Let $K\equiv(A,As_0+e_0)\in\mathcal{K}$ with $\mathcal{K}$ being the multiset $\{(A,As_0+e_0)\}_{A\in\mathbb{Z}_q^{n\times m},s_0\in\mathbb{Z}_q^n,e_0\in{\chi'}^m}$, where $\mathbb{Z}_q^{n\times m}$ be the set of $n\times m$ matrices each of whose entry is chosen from $\mathbb{Z}_q\equiv\{0,1,\ldots,q-1\}$, and $\chi'$ is the set of integers bounded in absolute value by $\mu'$.
Assume that there is no polynomial-time quantum algorithm that solves ${\rm SIVP}_{p(n)}$ for some polynomial $p(n)$ in $n$.
Then, the family $\mathcal{F}\equiv\{f_K:\mathbb{Z}_q^n\times\chi^m\times\{0,1\}\rightarrow\mathbb{Z}_q^m\}_{K\in\mathcal{K}}$ of functions
\begin{eqnarray}
\label{func}
f_K(s,e,c)\equiv As+e+c\cdot(As_0+e_0)\ ({\rm mod}\ q),
\end{eqnarray}
where $\chi$ is the set of integers bounded in absolute value by $\mu$, is collision resistant\footnote{Let $\mathcal{F}\equiv\{f_K:\mathcal{D}\rightarrow\mathcal{R}\}_{K\in\mathcal{K}}$ be a function family. We say that $\mathcal{F}$ is collision resistant if for any polynomial-time quantum algorithm $A$, which receives $K$ and outputs two bit strings $x,x'\in\mathcal{D}$, the probability ${\rm Pr}_K[A(K)=(x,x')\ {\rm such}\ {\rm that}\ x\neq x'\ {\rm and}\ f_K(x)=f_K(x')]$ is super-polynomially small. Note that $K$ is chosen from $\mathcal{K}$ uniformly at random, and the probability is also taken over the randomness in $A$.} and $\delta$-$2$ regular\footnote{Let $\mathcal{F}\equiv\{f_K:\mathcal{D}\rightarrow\mathcal{R}\}_{K\in\mathcal{K}}$ be a function family. For a fixed $K$, we say that $y\in\mathcal{R}$ has two preimages if there exist exactly two different inputs $x,x'\in\mathcal{D}$ such that $f(x)=f(x')=y$. Let $\mathcal{Y}_K^{(2)}$ be the set of $y$ having two preimages for $K$. The function family $\mathcal{F}$ is said to be $\delta$-$2$ regular when ${\rm Pr}_{K,x}[f_K(x)\in\mathcal{Y}_K^{(2)}]\ge\delta$, where $K$ and $x$ are chosen from $\mathcal{K}$ and $\mathcal{D}$, respectively, uniformly at random.} for a constant $\delta$.
\end{theorem}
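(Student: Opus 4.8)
The plan is to recognize $\mathcal{F}$ as the learning-with-errors (LWE) based function family of \cite{CCKW18} with the parameters $q,m,\mu,\mu'$ specialized, so that the bulk of the work is to verify that these specific parameters satisfy the hypotheses of that construction and then to invoke, with minor adaptation, its two core arguments. The structural observation driving everything is that for any valid input $(s,e,0)$ one has
\begin{equation}
f_K(s,e,0)=As+e=A(s-s_0)+(e-e_0)+(As_0+e_0)=f_K(s-s_0,e-e_0,1),
\end{equation}
so the intended preimage pairing is $(s,e,0)\leftrightarrow(s-s_0,e-e_0,1)$; in particular the last bit $c$ always differs between the two members of a colliding pair, which is exactly the feature exploited later in the proof of Theorem~\ref{easy}.

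For collision resistance I would argue by contradiction from a collision $(s,e,c)\neq(s',e',c')$ with $f_K(s,e,c)=f_K(s',e',c')$, splitting into two cases. If $c=c'$, the collision gives $A(s-s')=e'-e\pmod q$ with $e'-e$ bounded coordinatewise by $2\mu$; since $q$ is chosen large relative to $m$ and $\mu$, the lattice $\Lambda_q^\perp(A)$ contains no nonzero vector this short with overwhelming probability over $A$, forcing $(s,e)=(s',e')$, a contradiction. If $c\neq c'$, say $c=0,c'=1$, the collision yields $As_0+e_0=A(s-s')+(e-e')\pmod q$, i.e.\ a short-error decoding of the public LWE sample $As_0+e_0$; because $q$ dominates the error the decoding is unique, so one recovers $(s_0,e_0)$ and breaks search-LWE. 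Since Regev's worst-case-to-average-case reduction \cite{R05} reduces ${\rm SIVP}_{p(n)}$ to LWE with these parameters, either case contradicts the assumed hardness of ${\rm SIVP}_{p(n)}$, so no efficient quantum algorithm outputs collisions except with super-polynomially small probability.

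For $\delta$-$2$ regularity I would first establish that $f_K(\cdot,\cdot,c)$ is injective on $\mathbb{Z}_q^n\times\chi^m$ for each fixed $c$, which is the same short-vector argument as the $c=c'$ case above and holds with overwhelming probability over $K$. Injectivity within each $c$-block together with the pairing above shows that an image value has \emph{exactly} two preimages precisely when the partner input lies in the domain: the partner of $(s,e,0)$ is $(s-s_0,e-e_0,1)$, valid iff $e-e_0\in\chi^m$, and symmetrically for $c=1$. Because $\mu'=\mu/m$, for a uniformly random $e\in\chi^m$ each coordinate of $e\pm e_0$ stays within $\chi$ with probability at least $1-\mu'/\mu=1-1/m$, so all $m$ coordinates stay in range with probability at least $(1-1/m)^m\to e^{-1}$. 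This constant lower bound, combined with the overwhelming-probability injectivity over $K$, yields ${\rm Pr}_{K,x}[f_K(x)\in\mathcal{Y}_K^{(2)}]\ge\delta$ for a constant $\delta$.

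The main obstacle is parameter bookkeeping rather than any single hard idea: the same modulus $q$ and bounds $\mu,\mu'$ must simultaneously (i) be large enough for the LWE-to-${\rm SIVP}_{p(n)}$ reduction of \cite{R05,CCKW18} to apply, (ii) make $\Lambda_q^\perp(A)$ free of short vectors so that both the collision-resistance step and the per-block injectivity hold with overwhelming probability, and (iii) keep $\mu'$ small enough relative to $\mu$ and $m$ that the partner input remains in the domain for a constant fraction of inputs. Confirming that the explicit choices $q=2^{5\lceil\log_2 n\rceil+21}$, $m=23n+5n\lceil\log_2 n\rceil$, $\mu=2mn\sqrt{23+5\log_2 n}$, and $\mu'=\mu/m$ meet all three constraints at once, i.e.\ that this is a legitimate specialization of the \cite{CCKW18} construction, is where the real care is needed; once that is checked, both conclusions follow directly.
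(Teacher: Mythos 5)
The first thing to say is that the paper does not prove this theorem at all: the ``adapted from'' tag is literal, the result is imported from \cite{CCKW18}, and the only original content in the paper on this point is the closing remark that the trapdoored, statistically-close-to-uniform matrix $A$ of \cite{CCKW18} is replaced here by a uniformly random one. So there is no in-paper proof to compare against; what you have written is a reconstruction of an argument the authors deliberately outsource. As a reconstruction it follows the standard route and is sound in outline: your pairing $(s,e,0)\leftrightarrow(s-s_0,e-e_0,1)$ is exactly the collision pair the paper itself points out right after the theorem statement; the case split for collision resistance (statistical injectivity within each $c$-block, recovery of the LWE secret $(s_0,e_0)$ from a cross-block collision, then a worst-case-to-average-case reduction from ${\rm SIVP}$) is the argument of \cite{CCKW18}; and the $(1-1/m)^m$ computation is precisely the reason for the choice $\mu'=\mu/m$, yielding a constant $\delta$ once injectivity is in place. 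Three points to tighten. First, the injectivity step is a statement about the image lattice $\{As \bmod q\}$ containing no two points within $\ell_\infty$-distance $2\mu$ of one another, not about short vectors in the kernel lattice $\Lambda_q^{\perp}(A)$ as you wrote; and since $q$ is a power of two you must also handle non-primitive $s$ (e.g.\ $s$ with all entries divisible by $q/2$) in the union bound, which still goes through because $4\mu<q/2$. Second, $e_0$ is uniform over a bounded box rather than Gaussian, so invoking \cite{R05} directly is not quite enough; the reduction from ${\rm SIVP}_{p(n)}$ to LWE with this noise distribution (and with uniformly random rather than trapdoored $A$) is exactly the content supplied by \cite{CCKW18}, and it is the one place where your ``parameter bookkeeping'' caveat hides real work rather than arithmetic. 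Third, a cosmetic point: the per-coordinate failure probability is at most $\mu'/(2\mu+1)\le 1/(2m)$ rather than $\mu'/\mu$, which only improves your constant. None of these affects the overall correctness of the plan.
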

From Eq.~(\ref{func}), the function  $f_K$ has a collision pair\footnote{Since $q$ is larger than $\mu$, the second element $e+e_0$ of the second input may not be in the set $\chi^m$. Therefore, the probability of $f_K$ having a collision pair is not $1$.} $(s,e,1)$ and $(s+s_0,e+e_0,0)$, and Theorem~\ref{crf} shows that it is difficult to find them simultaneously.
This function family will be used to show that a single rewinding operator is sufficient to achieve a task that seems difficult for universal quantum computers.

Note that in \cite{CCKW18}, the matrix $A$ is constructed so that it has a trapdoor to efficiently invert $As+e$, and its distribution is statistically close to uniform over $\mathbb{Z}_q^{n\times m}$.
In Theorem~\ref{crf}, we consider a simplified variant of the function family of \cite{CCKW18} in which the matrix $A$ is chosen uniformly at random.

\section{Computational Complexity of Rewinding}
\label{IIIA}
In this section, we show Results~\ref{theorem1informal} and \ref{coroinfo}.

{\red\subsection{\red Our Complexity Classes}}
To this end, first, we define the rewinding operator $R$ and cloning operator $C$ as follows:
\begin{definition}[Rewinding and Cloning Operators]
\label{Rewinding}
Let $n$ be any natural number, $Q$ be any $n$-qubit linear operator composed of unitary operators and the $Z$-basis projective operators $\{|0\rangle\langle 0|,|1\rangle\langle 1|\}$, $\mathcal{D}$ be a classical description of the linear operator $Q$, and $I$ be the single-qubit identity operator.
The rewinding and cloning operators $R$ and $C$ are maps from a quantum state to a quantum state such that for any $s\in\{0,1\}$, when $\left(|s\rangle\langle s|\otimes I^{\otimes n-1}\right)Q|0^n\rangle\neq 0$,
\begin{eqnarray}
\label{rewinddef}
R\left(\frac{\left(|s\rangle\langle s|\otimes I^{\otimes n-1}\right)Q|0^n\rangle}{\sqrt{\langle 0^n|Q^\dag\left(|s\rangle\langle s|\otimes I^{\otimes n-1}\right)Q|0^n\rangle}}\otimes|\mathcal{D}\rangle\right)&=&\frac{Q|0^n\rangle}{\sqrt{\langle 0^n|Q^\dag Q|0^n\rangle}}\\
C|\mathcal{D}\rangle&=&\frac{Q|0^n\rangle}{\sqrt{\langle 0^n|Q^\dag Q|0^n\rangle}}.
\end{eqnarray}
For other input states, the functionality of $R$ and $C$ is undefined, that is outputs are arbitrary $n$-qubit states.
Particularly when it depends on a value of classical bits whether $R$ and $C$ are applied, we call them classically controlled rewinding and cloning operators, respectively.
\end{definition}
Note that since the linear operator $Q$ may include projective operators (e.g., $Q=U_2(I\otimes|0\rangle\langle 0|\otimes I^{\otimes n-2})U_1$ for some $n$-qubit unitary operators $U_1$ and $U_2$), in general, $Q^\dag Q\neq I^{\otimes n}$.
{\red From Def.~\ref{Rewinding}, it is easily observed that the cloning operator $C$ can efficiently simulate the rewinding operator $R$.}

An example of classically controlled rewinding operators is an operator such that if a measurement outcome is $0$, the identity operator is applied to the post-measurement state, but if the outcome is $1$, the rewinding operator $R$ is applied to it.
Classically controlled rewinding and cloning operators play an important role in giving our main {\red results}.
It is worth mentioning that $|\mathcal{D}\rangle$ is consumed by applying a single rewinding or cloning operator.
However, we can always retain $|\mathcal{D}\rangle$ by duplicating it before applying the rewinding or cloning operator, and hence subsequent rewinding or cloning operators can also be applied.
{\red These remarks are explained more explicitly in Fig.~\ref{circuit}.}

Simply speaking, the rewinding operator $R$ rewinds the state projected onto $|s\rangle$ to the state before the measurement.
As an important point, Def.~\ref{Rewinding} implies that the rewinding operator $R$ only works for pure states.
The following contradiction for an ordinary interpretation of mixed states occurs without the restriction to pure states.
Suppose that we measure a maximally mixed state $I/2$ in the computational basis\footnote{We sometimes call the $Z$ basis the computational basis.}, and then obtain the measurement outcome $0$.
In this case, it is natural that even if we rewind this measurement and perform the same measurement again, the outcome is always $0$.
However, if we define the rewinding operator $R$ so that it also works for mixed states, then we can obtain $I/2$ from $|0\rangle$ with the rewinding operator, and the measurement on it may output $1$.
In other words, if the rewinding operator works for a mixed state $\rho$, we can measure $\rho$ again and again, and thus we obtain its information as much as we want without changing $\rho$.
This situation contradicts with the natural interpretation that mixed states arise due to the lack of knowledge about them.
Furthermore, the restriction to pure states would be useful in circumventing the contradiction with the no-signaling principle as explained in {\red Sec.}~\ref{B}{\red.}

By using the rewinding and cloning operators and postselections, we define three complexity classes---${\sf RwBQP}$ (${\sf BQP}$ with rewinding), ${\sf CBQP}$ (${\sf BQP}$ with cloning), and ${\sf AdPostBQP}$ (${\sf BQP}$ with adaptive postselection)---as follows:
\begin{definition}[${\sf RwBQP}$ and ${\sf CBQP}$]
\label{BRQP}
Let $n$ and $k$ be any natural number, $\ell$ be a polynomial in $n$, and $0\le s<c\le1$.
A promise problem $L=(L_{\rm yes},L_{\rm no})\subseteq\{0,1\}^\ast$ is in ${\sf RwBQP}(c,s)(k)$ if and only if there exists a polynomial-time deterministic Turing machine that receives $1^n$ as an input and generates a $\ell$-bit description $\tilde{\mathcal{D}}$ of an operator $Q_n$ such that it consists of a polynomial number of elementary gates in a universal gate set, single-qubit measurements in the computational basis, and $k$ (classically controlled) rewinding operators $R$ defined in Def.~\ref{Rewinding} and satisfies, for the instance $x\in\{0,1\}^n$ and a polynomial $m$, that
\begin{itemize}
\item if $x\in L_{\rm yes}$, $\sum_{z\in A}\left|\left|\left(|1\rangle\langle 1|\otimes I^{\otimes n+m+\ell-1}\right)Q_n^{(z)}\left(|x\rangle|0^m\rangle|\tilde{\mathcal{D}}\rangle\right)\right|\right|^2\ge c$
\item if $x\in L_{\rm no}$, $\sum_{z\in A}\left|\left|\left(|1\rangle\langle 1|\otimes I^{\otimes n+m+\ell-1}\right)Q_n^{(z)}\left(|x\rangle|0^m\rangle|\tilde{\mathcal{D}}\rangle\right)\right|\right|^2\le s$,
\end{itemize}
where $A$ is the set of possible {\red outcomes of intermediate measurements}, $Q_n^{(z)}$ is the same as $Q_n$ except for that the $i$th measurement is replaced with $|z_i\rangle\langle z_i|$ for all $i$, and $z_i$ is the $i$th bit of $z$.
Here, $|||v\rangle||\equiv\sqrt{\langle v|v\rangle}$ for any vector $|v\rangle$, and ``polynomial'' is the abbreviation of ``polynomial in $n$.''
Particularly, for the set ${\rm poly}(n)$ of all polynomial functions, we denote $\bigcup_{k\in{\rm poly}(n)}{\sf RwBQP}(c,s)(k)$ and $\bigcup_{k\in{\rm poly}(n)}{\sf RwBQP}(2/3,1/3)(k)$ as ${\sf RwBQP}(c,s)$ and ${\sf RwBQP}$, respectively.

By replacing $R$ with the cloning operator $C$ defined in Def.~\ref{Rewinding}, ${\sf CBQP}(c,s)(k)$, ${\sf CBQP}(c,s)$, and ${\sf CBQP}$ are defined in a similar way.
\end{definition}
To perform a rewinding operator $R$ to recover an intermediate state $|\psi\rangle$, a classical description $\mathcal{D}$ of $|\psi\rangle$ is neccessary.
It can always be generated from $\tilde{\mathcal{D}}$ and measurement outcomes obtained before preparing $|\psi\rangle$.
As in the case of ${\sf BQP}$, computations performed to solve ${\sf RwBQP}$ problems can be written as uniform families of quantum circuits.

{\red Due to the addition of rewinding operators, it may be difficult to imagine quantum circuits used in ${\sf RwBQP}$.
To clarify them, as an example, we give a concrete circuit diagram for the following ${\sf RwBQP}$ computation.
Suppose that we would like to prepare a two qubit state $(|0\rangle\langle 0|\otimes I)U|00\rangle$ (up to normalization) for a two-qubit unitary operator $U$.
To this end, we use at most two classically controlled rewinding operators.
More precisely, the rewinding operator $R$ is applied if and only if the measurement outcome is $1$.
This computation can be depicted as a fixed quantum circuit in Fig.~\ref{circuit}.}

\begin{figure}[t]
\begin{center}
\includegraphics[width=12cm, clip]{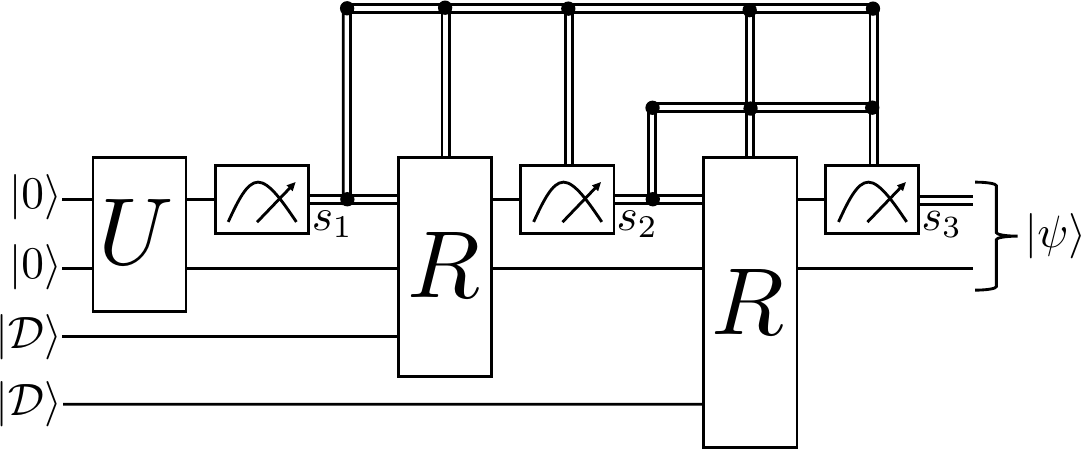}
\end{center}
\caption{Concrete example of ${\sf RwBQP}$ computation.
$\mathcal{D}$, $U$, $R$, and $|\psi\rangle$ are a classical description of $U|00\rangle$, a two-qubit unitary operator, the rewinding operator, and the output state, respectively.
More precisely, when $U=\prod_iu_i$ for elementary quantum gates $u_i$ in a universal gate set, $\mathcal{D}$ is a bit string representing $\prod_iu_i$.
Note that since $|\psi\rangle$ is prepared by using only the unitary operator $U$, its classical description $\mathcal{D}$ does not include projectors and can be generated from only $\tilde{\mathcal{D}}$.
Meter symbols represent the Pauli-$Z$ measurements, and $s_i\in\{0,1\}$ is the $i$th measurement outcome for $1\le i\le 3$.
We represent $|s_i\rangle$ as a classical bit $s_i$ to emphasize that it can be copied.
When the first measurement outcome $s_1$ is $1$, the first rewinding operator $R$ is applied. On the other hand, when $s_1=0$, we do not apply $R$, because the target state is obtained. Since the second and third measurements and the second rewinding operator are applied only when $s_1=1$, they are also conditioned on $s_1$. In a similar way, since it is not necessary to apply the second rewinding operator if $s_2=0$, the second rewinding operator and the third measurement are also conditioned on $s_2$. Finally, $|\psi\rangle$ becomes the target state when $s_1=0$, $(s_1,s_2)=(1,0)$, or $(s_1,s_2,s_3)=(1,1,0)$.}
\label{circuit}
\end{figure}

From Def.~\ref{BRQP}, we immediately obtain the following lemma:
\begin{lemma}
\label{nlemma1}
${\sf RwBQP}\subseteq{\sf CBQP}$.
\end{lemma}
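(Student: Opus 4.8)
The plan is to show that every ${\sf RwBQP}$ computation can be converted into a ${\sf CBQP}$ computation with identical acceptance probabilities by replacing each rewinding operator with a cloning operator. The starting observation, already noted after Def.~\ref{Rewinding}, is that $R$ and $C$ produce the \emph{same} output state $Q|0^n\rangle/\sqrt{\langle 0^n|Q^\dag Q|0^n\rangle}$; the only difference is that $R$ additionally consumes the normalized post-measurement state $(|s\rangle\langle s|\otimes I^{\otimes n-1})Q|0^n\rangle/\sqrt{\langle 0^n|Q^\dag(|s\rangle\langle s|\otimes I^{\otimes n-1})Q|0^n\rangle}$ as input, whereas $C$ needs only the classical description $|\mathcal{D}\rangle$. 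Since $C$ requires strictly less information to yield the same output, it can take over the role of $R$.

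First I would fix an arbitrary $L\in{\sf RwBQP}(c,s)(k)$ together with the polynomial-time deterministic Turing machine that, on input $1^n$, outputs the description $\tilde{\mathcal{D}}$ of an operator $Q_n$ built from elementary gates, computational-basis measurements, and $k$ (classically controlled) rewinding operators $R$. I would then define an operator $Q_n'$ identical to $Q_n$ except that each occurrence of $R$, acting on some $n$-qubit register together with a register holding $|\mathcal{D}\rangle$, is replaced by the following gadget: duplicate $|\mathcal{D}\rangle$, apply $C$ to one copy to regenerate $Q|0^n\rangle/\sqrt{\langle 0^n|Q^\dag Q|0^n\rangle}$ on a fresh $n$-qubit register, and discard the post-measurement register. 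As observed after Def.~\ref{BRQP}, the description $|\mathcal{D}\rangle$ that $R$ would have used is generated from $\tilde{\mathcal{D}}$ together with the measurement outcomes preceding the rewinding point, so exactly the same $|\mathcal{D}\rangle$ is available to $C$; duplicating it beforehand lets the classically controlled structure of $Q_n$ carry over to $Q_n'$ unchanged. The key step is then to verify that this local replacement preserves the joint state of the computation at each rewinding point: since the gadget outputs precisely the state that $R$ outputs, iterating the argument over all $k$ rewinding operators shows that $Q_n'$ reproduces the measurement statistics of $Q_n$ exactly, so the bounds $c$ and $s$ are preserved and $L\in{\sf CBQP}(c,s)(k)$.

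The single point requiring care — and the main, albeit mild, obstacle — is justifying that discarding the post-measurement register is harmless. This rests entirely on the pure-state restriction in Def.~\ref{Rewinding}: in any valid ${\sf RwBQP}$ computation, whenever $R$ is applied its $n$ target qubits are in a pure state unentangled with the remainder of the computation (otherwise $R$ is undefined and the computation is not legitimate). Consequently the post-measurement register is a separable pure state, and tracing it out destroys no coherence with the rest of the system, so the discard is lossless and the replacement faithful. If instead the post-measurement state could be entangled with other registers, tracing it out would decohere them and the simulation would fail; the pure-state requirement is exactly what rules this out and makes the inclusion go through.
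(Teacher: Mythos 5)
Your proposal is correct and takes essentially the same approach as the paper, which simply observes that the cloning operator $C$ can exactly simulate the rewinding operator $R$ since both produce the identical output state and $C$ needs only the classical description. Your additional care about discarding the post-measurement register (justified by the pure-state restriction in Def.~\ref{Rewinding}) is a valid elaboration of the detail the paper leaves implicit.
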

\begin{proof}
The only difference between ${\sf RwBQP}$ and ${\sf CBQP}$ is whether the rewinding or cloning operator is allowed.
Since the cloning operator $C$ can exactly simulate the rewinding operator $R$, this lemma holds.
\end{proof}

\begin{definition}[${\sf AdPostBQP}$]
\label{AdPostBQP}
Let $n$ be any natural number, $\ell$ be a polynomial in $n$, and $0\le s<c\le1$.
A promise problem $L=(L_{\rm yes},L_{\rm no})\subseteq\{0,1\}^\ast$ is in ${\sf AdPostBQP}(c,s)$ if and only if there exists a polynomial-time deterministic Turing machine that receives $1^n$ as an input and generates a $\ell$-bit description $\tilde{\mathcal{D}}$ of an operator $Q_n$ such that it consists of a polynomial number of elementary gates in a universal gate set, single-qubit measurements in the computational basis, and single-qubit projectors $|1\rangle\langle 1|$ and satisfies, for the instance $x\in\{0,1\}^n$ and a polynomial $m$, that
\begin{itemize}
\item if $x\in L_{\rm yes}$, $\sum_{z\in A}q_z\left|\left|\left(|1\rangle\langle 1|\otimes I^{\otimes n+m+\ell-1}\right)\mathcal{N}[Q_n^{(z)}\left(|x\rangle|0^m\rangle|\tilde{\mathcal{D}}\rangle\right)]\right|\right|^2\ge c$
\item if $x\in L_{\rm no}$, $\sum_{z\in A}q_z\left|\left|\left(|1\rangle\langle 1|\otimes I^{\otimes n+m+\ell-1}\right)\mathcal{N}[Q_n^{(z)}\left(|x\rangle|0^m\rangle|\tilde{\mathcal{D}}\rangle\right)]\right|\right|^2\le s$,
\end{itemize}
where $A$ is the set of {\red possible outcomes of intermediate measurements}, $Q_n^{(z)}$ is the same as $Q_n$ except for that the $i$th measurement is replaced with $|z_i\rangle\langle z_i|$ for all $i$, $z_i$ is the $i$th bit of $z$, $q_z$ is the probability of obtaining $z$, and $\mathcal{N}[\cdot]$ denotes the normalization of the vector in the square brackets.
Here, ``polynomial'' is the abbreviation of ``polynomial in $n$.''
Note that for $1\le i\le n$, a projector $|1\rangle\langle 1|_i$ on the $i$th qubit can be applied only when a quantum state $|\psi\rangle$ to be applied satisfies
\begin{eqnarray}
\label{condadpost}
\left|\left|\left(|1\rangle\langle 1|_i\right)|\psi\rangle\right|\right|^2\ge 2^{-p(n)}
\end{eqnarray}
for a polynomial $p(n)$ in $n$.
Particularly, we denote ${\sf AdPostBQP}(2/3,1/3)$ as ${\sf AdPostBQP}$.
\end{definition}
{\red Equation}~(\ref{condadpost}) can be automatically satisfied by using standard gate sets whose elementary gates involve only square roots of rational numbers.
From Defs.~\ref{BRQP} and \ref{AdPostBQP}, we notice that the main difference between ${\sf RwBQP}$, ${\sf CBQP}$, and ${\sf AdPostBQP}$ is whether the rewinding or cloning operators or projectors are allowed.
{\red We also mention a difference between ${\sf AdPostBQP}$ and ${\sf PostBQP}$ as the following remark:}
\begin{remark}
$Q_n$ can include adaptive postselections because depending on previous measurement outcomes, we can decide whether $X$ is applied before and after applying $|1\rangle\langle1|$.
Here, $X\equiv|1\rangle\langle 0|+|0\rangle\langle 1|$ is the Pauli-$X$ operator.
It is worth mentioning that it is unknown whether the adaptive postselection can be efficiently done in ${\sf PostBQP}$ as discussed in \cite{A05}.
Indeed, if it is possible, ${\sf SZK}\subseteq{\sf PP}$ should be immediately obtained from the argument in \cite{ABFL14,ABFL16}, while it is a long-standing problem.
The difference between ${\sf AdPostBQP}$ and ${\sf PostBQP}$ should arise from intermediate measurements allowed in ${\sf AdPostBQP}$.
\end{remark}

The following three corollaries {\red hold}:
\begin{corollary}
\label{corollary}
${\sf RwBQP}$, ${\sf CBQP}$, and ${\sf AdPostBQP}$ are closed under complement.
\end{corollary}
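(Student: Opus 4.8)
The plan is to prove all three statements at once, since the three definitions share the same acceptance structure. Recall that the complement of a promise problem $L=(L_{\rm yes},L_{\rm no})$ is $\bar L=(L_{\rm no},L_{\rm yes})$. Given a machine witnessing $L\in{\sf RwBQP}$ (resp.\ ${\sf CBQP}$, ${\sf AdPostBQP}$) with completeness $2/3$ and soundness $1/3$, I would build a machine for $\bar L$ by appending a single Pauli-$X$ gate to the output register (the first qubit) at the very end of $Q_n$, after all rewinding/cloning operators and all measurements. Writing $\Pi_b\equiv|b\rangle\langle b|\otimes I^{\otimes n+m+\ell-1}$ for the projector onto output bit $b$, the new accepting weight becomes $\sum_{z\in A}\|\Pi_1 X_1 Q_n^{(z)}(|x\rangle|0^m\rangle|\tilde{\mathcal{D}}\rangle)\|^2=\sum_{z\in A}\|\Pi_0 Q_n^{(z)}(|x\rangle|0^m\rangle|\tilde{\mathcal{D}}\rangle)\|^2$, where $X_1$ denotes $X$ acting on the first qubit.

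The core identity I must establish is that appending $X$ turns the accepting weight $p$ into $1-p$; equivalently, that $\sum_{z\in A}\|Q_n^{(z)}(|x\rangle|0^m\rangle|\tilde{\mathcal{D}}\rangle)\|^2=1$, so that $\sum_{z}\|\Pi_0 Q_n^{(z)}(\cdots)\|^2=1-\sum_{z}\|\Pi_1 Q_n^{(z)}(\cdots)\|^2$. For ${\sf AdPostBQP}$ (Def.~\ref{AdPostBQP}) this is immediate, since the weights $q_z$ and the normalization $\mathcal{N}[\cdot]$ make each summand a genuine conditional probability $q_z\cdot\Pr[o=b\mid z]$, and $\Pr[o=0\mid z]+\Pr[o=1\mid z]=1$ because projecting the first qubit onto $|0\rangle$ and $|1\rangle$ is a complete measurement. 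Thus the construction sends $p\mapsto1-p$ for ${\sf AdPostBQP}$, and appending $X$ does not disturb the postselection condition in Eq.~(\ref{condadpost}), which concerns only the internal projectors.

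For ${\sf RwBQP}$ and ${\sf CBQP}$ (Def.~\ref{BRQP}) the only real obstacle is verifying that this total is still $1$ despite the non-unitary rewinding and cloning operators. I would argue this branch by branch: unitaries preserve the norm; a computational-basis measurement replaced by $|z_i\rangle\langle z_i|$ multiplies the branch norm by $\sqrt{p_{z_i}}$; and the rewinding operator $R$, extended linearly from the normalized post-measurement state on which Def.~\ref{Rewinding} pins it down, sends the unnormalized projected vector $\left(|z_i\rangle\langle z_i|\otimes I^{\otimes n-1}\right)Q|0^n\rangle$ to $\left\|\left(|z_i\rangle\langle z_i|\otimes I^{\otimes n-1}\right)Q|0^n\rangle\right\|\cdot Q|0^n\rangle/\|Q|0^n\rangle\|$, which preserves the branch norm. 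Summing the resulting products of measurement probabilities over all outcome strings $z$ gives $1$ by the chain rule of probability, and the identical bookkeeping applies to $C$. Hence the new accepting weight equals $1-p$ for ${\sf RwBQP}$ and ${\sf CBQP}$ as well.

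With this identity in hand, the symmetry of the parameters finishes the argument: if $x\in\bar L_{\rm yes}=L_{\rm no}$ then $p\le1/3$, so the new weight is $\ge2/3$, while if $x\in\bar L_{\rm no}=L_{\rm yes}$ then $p\ge2/3$, so the new weight is $\le1/3$. Because appending one $X$ gate leaves the polynomial circuit size and the number $k$ of rewinding/cloning operators unchanged, the construction in fact shows that ${\sf RwBQP}(2/3,1/3)(k)$, ${\sf CBQP}(2/3,1/3)(k)$, and ${\sf AdPostBQP}(2/3,1/3)$ are each closed under complement, and taking unions over $k\in{\rm poly}(n)$ yields the corollary. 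I expect the only delicate step to be the norm-preservation check for $R$ and $C$, namely justifying that extending them linearly off their defining input is consistent with the branch-probability bookkeeping that underlies the acceptance conditions in Def.~\ref{BRQP}.
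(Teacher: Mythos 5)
Your proposal is correct and matches the paper's own proof, which likewise appends a Pauli-$X$ to the output register at the end of $Q_n$ so that the acceptance weight becomes $\sum_{z\in A}\left|\left|\left(|1\rangle\langle 1|\otimes I^{\otimes n+m+\ell-1}\right)\left(X\otimes I^{\otimes n+m+\ell-1}\right)Q_n^{(z)}\left(|x\rangle|0^m\rangle|\tilde{\mathcal{D}}\rangle\right)\right|\right|^2$ and the completeness and soundness bounds simply swap. The paper leaves the total-weight-equals-one bookkeeping implicit, so your explicit branch-norm check for $R$ and $C$ is just a more careful rendering of the same argument.
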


\begin{corollary}
\label{corollary2}
${\sf RwBQP}={\sf RwBQP}(1-2^{-p(n)},2^{-p(n)})$, ${\sf CBQP}={\sf CBQP}(1-2^{-p(n)},2^{-p(n)})$, and ${\sf AdPostBQP}={\sf AdPostBQP}(1-2^{-p(n)},2^{-p(n)})$ for any polynomial function $p(n)$ in the size $n$ of a given instance $x$.
\end{corollary}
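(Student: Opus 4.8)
The plan is to prove the two inclusions separately for each of the three classes; the three arguments are essentially identical, so I will describe the one for ${\sf RwBQP}$ and then indicate the modifications for ${\sf CBQP}$ and ${\sf AdPostBQP}$. The inclusion ${\sf RwBQP}(1-2^{-p(n)},2^{-p(n)})\subseteq{\sf RwBQP}$ is immediate by monotonicity of ${\sf RwBQP}(c,s)$ in its thresholds: whenever $1-2^{-p(n)}\ge 2/3$ and $2^{-p(n)}\le 1/3$ (which holds for $p(n)\ge 2$, the regime in which the promise $(1-2^{-p(n)},2^{-p(n)})$ is non-degenerate), any witness $Q_n$ for the tighter promise is also a witness for the $(2/3,1/3)$ promise, with the same polynomial number of rewinding operators. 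Thus the content of the corollary is the reverse inclusion ${\sf RwBQP}\subseteq{\sf RwBQP}(1-2^{-p(n)},2^{-p(n)})$, i.e.\ that the gap $(2/3,1/3)$ can be amplified to $(1-2^{-p(n)},2^{-p(n)})$.

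For this I would use standard majority-vote amplification adapted to the rewinding setting. Let $L\in{\sf RwBQP}$ be witnessed by an operator $Q_n$ that uses $k\in{\rm poly}(n)$ rewinding operators and has acceptance probability $\ge 2/3$ on yes-instances and $\le 1/3$ on no-instances. I would build a new operator $Q_n'$ that, \emph{sequentially}, runs $t=\Theta(p(n))$ independent copies of $Q_n$ — each on its own block of fresh ancillas and its own freely copied classical description $\tilde{\mathcal{D}}$ — and after each copy measures its output qubit in the computational basis to record a classical bit $b_j$. A reversible classical circuit then computes ${\rm maj}(b_1,\dots,b_t)$ and writes it into the output qubit of $Q_n'$. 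The total number of rewinding operators is $tk\in{\rm poly}(n)$, so $Q_n'$ is still a legal ${\sf RwBQP}$ witness. Because the $t$ copies are mutually independent, the bits $b_1,\dots,b_t$ are i.i.d.\ Bernoulli with mean $\ge 2/3$ (yes) or $\le 1/3$ (no), so the Hoeffding bound gives ${\rm Pr}[{\rm maj}=1]\ge 1-2^{-\Omega(t)}$ on yes-instances and $\le 2^{-\Omega(t)}$ on no-instances; choosing the constant in $t=\Theta(p(n))$ large enough yields the $(1-2^{-p(n)},2^{-p(n)})$ gap with the prescribed $p(n)$.

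The step that needs genuine care — and what I expect to be the main obstacle — is that $R$ is defined (Def.~\ref{Rewinding}) only on pure post-measurement states of the specific form $(|s\rangle\langle s|\otimes I^{\otimes n-1})Q|0^n\rangle\otimes|\mathcal{D}\rangle$ (suitably normalized), with arbitrary behavior elsewhere, so I must check that every rewinding operator inside $Q_n'$ is still fed exactly such a state. Running the copies sequentially makes this clean: at the moment a rewinding operator fires inside copy $j$, the live register holds precisely the state prescribed by the original analysis of $Q_n$, since the earlier, already-measured copies contribute only classical bits and do not entangle with the live register; hence every invocation lies inside the defined domain of $R$. (A parallel implementation would instead require the observation that the global state is a product across copies, whence $R\otimes I$ acts correctly on the $j$th factor.) The ${\sf CBQP}$ case is verbatim with the cloning operator $C$ in place of $R$, using that $\tilde{\mathcal{D}}$ is classical and hence freely duplicated. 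For ${\sf AdPostBQP}$ the same construction works once two points are verified: conditioned on all postselections across the $t$ copies succeeding, the output bits $b_j$ remain independent with the correct bias (the copies are independent and each postselection touches only its own block), so Hoeffding applies to the conditional distribution defining acceptance in Def.~\ref{AdPostBQP}; and each postselection in copy $j$ has success probability at least $2^{-{\rm poly}(n)}$ exactly as in the original run, so the admissibility bound Eq.~(\ref{condadpost}) is preserved throughout $Q_n'$. Verifying that the domain restrictions on $R$ and $C$ and the postselection-admissibility condition all survive the repetition is the crux; the probabilistic amplification itself is routine.
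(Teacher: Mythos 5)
Your proposal is correct and matches the paper's own proof, which likewise amplifies the $(2/3,1/3)$ gap by running the computation $m=\Theta(p(n))$ times sequentially and taking a majority vote, invoking the Chernoff bound. The additional care you take in checking that each rewinding/cloning operator is still applied within its defined domain and that Eq.~(\ref{condadpost}) survives the repetition is sound and goes slightly beyond the paper's terse argument, but does not change the approach.
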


\begin{corollary}
\label{corollary3}
When only classical queries are allowed, ${\sf RwBQP}$, ${\sf CBQP}$, and ${\sf AdPostBQP}$ are closed under composition.
{\red In other words, ${\sf RwBQP}^{\sf RwBQP}_{{\rm classical}}={\sf RwBQP}$, ${\sf CBQP}^{\sf CBQP}_{{\rm classical}}={\sf CBQP}$, and ${\sf AdPostBQP}^{\sf AdPostBQP}_{{\rm classical}}={\sf AdPostBQP}$ hold.}
\end{corollary}
Since they are obvious from Defs.~\ref{BRQP} and \ref{AdPostBQP} and can be shown by using standard techniques, proofs are given in Appendix~\ref{E}.

{\red\subsection{\red Relations with Classical Complexity Classes}}
From Def.~\ref{AdPostBQP}, we immediately obtain the following lemma:
\begin{lemma}
\label{lemmapspace}
${\sf AdPostBQP}\subseteq{\sf PSPACE}$.
\end{lemma}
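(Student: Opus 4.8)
The plan is to adapt the Feynman path-integral (sum-over-paths) argument that proves ${\sf BQP}\subseteq{\sf PSPACE}$~\cite{BV97} and to verify that the intermediate computational-basis measurements and the postselection projectors $|1\rangle\langle 1|$ permitted in Def.~\ref{AdPostBQP} do not obstruct it. Fix an instance $x\in\{0,1\}^n$ together with the description $\tilde{\mathcal{D}}$ of $Q_n$ output by the deterministic Turing machine. For every fixed outcome string $z\in A$, the operator $Q_n^{(z)}$ is a product of polynomially many explicit matrices: the elementary gate matrices, the rank-one projectors $|z_i\rangle\langle z_i|$ that fix the measurement outcomes, and the postselection projectors $|1\rangle\langle 1|$. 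Consequently, any amplitude $\langle y|Q_n^{(z)}\big(|x\rangle|0^m\rangle|\tilde{\mathcal{D}}\rangle\big)$ expands, in the usual way, as a sum over all computational-basis configurations of the wires between consecutive operations of the product of the corresponding matrix entries. Each such product is a product of polynomially many numbers and is computable to exponential precision in polynomial space; the configurations number exponentially many, but they are enumerated one at a time so that only polynomial space is reused across the sum.

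Writing $|\phi_z\rangle\equiv Q_n^{(z)}\big(|x\rangle|0^m\rangle|\tilde{\mathcal{D}}\rangle\big)$ and $P_{\mathrm{out}}\equiv|1\rangle\langle 1|\otimes I^{\otimes n+m+\ell-1}$, the quantity that the machine must classify is
\[
p_{\mathrm{acc}}=\sum_{z\in A}q_z\,\frac{\big\|P_{\mathrm{out}}\,|\phi_z\rangle\big\|^2}{\big\|\,|\phi_z\rangle\,\big\|^2}.
\]
Each squared norm is a sum over basis states $y$ of $\big|\langle y|\phi_z\rangle\big|^2$ or $\big|\langle y|P_{\mathrm{out}}|\phi_z\rangle\big|^2$, hence a further sum of squared magnitudes of the path sums of the previous paragraph, and is therefore computable to exponential precision in polynomial space by the same enumeration. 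The branch weight $q_z$ is the product of the conditional Born probabilities of the intermediate outcomes recorded in $z$, each of which is again a ratio of such norms. Since numerator and denominator of every ratio are individually PSPACE-computable to exponential precision, so is each ratio, and iterating the outer sum over $z\in A$ one term at a time (reusing space) yields $p_{\mathrm{acc}}$.

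Finally, the machine accepts iff the computed value satisfies $p_{\mathrm{acc}}\ge(c+s)/2$ with $c=2/3$ and $s=1/3$; because the completeness--soundness gap is a constant, a fixed inverse-polynomial additive precision already decides between the two cases, and polynomial space in fact affords exponential precision, so the comparison is robust. The one delicate point is the division introduced by the normalizations $\mathcal{N}[\cdot]$ and by the conditional probabilities $q_z$: a ratio is meaningful and numerically stable only when its denominator is not astronomically small. This is precisely where Eq.~(\ref{condadpost}) is used --- it guarantees that each postselection projector is applied only to a state on which the projected outcome has probability at least $2^{-p(n)}$, so every denominator arising from the postselections is bounded below by an explicit inverse-exponential quantity. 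Hence each ratio needs only polynomially many extra bits of precision beyond those used for the path sums, and the entire simulation runs in ${\sf PSPACE}$.
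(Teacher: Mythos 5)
Your proposal is correct and follows essentially the same route as the paper's proof in Appendix~C: a Feynman path-integral expansion of each amplitude, enumerated one path at a time in reusable polynomial space, with the outer sum over outcome strings $z$ handled the same way. You are in fact somewhat more explicit than the paper about how the normalizations and the branch weights $q_z$ are computed as ratios of path sums and why Eq.~(\ref{condadpost}) keeps the required precision polynomial, but this elaborates rather than alters the argument.
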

\begin{proof}
The proof is essentially the same as that of ${\sf BQP}\subseteq{\sf PSPACE}$~\cite{BV97}.
The details are given in Appendix~\ref{C}.
\end{proof}

In the rest of this {\red subsection}, we consider a relation between the rewinding, cloning, and postselection (i.e., ${\sf RwBQP}$, ${\sf CBQP}$, and ${\sf AdPostBQP}$), and also obtain lower and upper bounds on them.
More formally, we show the following theorem:
\begin{theorem}
\label{BRQP=PP}
${\sf BPP}^{\sf PP}\subseteq{\sf RwBQP}={\sf CBQP}={\sf AdPostBQP}\subseteq{\sf PSPACE}$.
\end{theorem}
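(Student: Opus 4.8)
The plan is to prove the chain by closing a cycle of inclusions to get the three equalities and then bracketing it with the lower and upper bounds. The upper bound ${\sf AdPostBQP}\subseteq{\sf PSPACE}$ is already Lemma~\ref{lemmapspace}. For the equalities I would establish the cycle
\[
{\sf RwBQP}\subseteq{\sf CBQP}\subseteq{\sf AdPostBQP}\subseteq{\sf RwBQP},
\]
whose first inclusion is Lemma~\ref{nlemma1}. Hence only two new inclusions, ${\sf CBQP}\subseteq{\sf AdPostBQP}$ and ${\sf AdPostBQP}\subseteq{\sf RwBQP}$, together with the lower bound ${\sf BPP}^{\sf PP}\subseteq{\sf RwBQP}$, remain.

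I would obtain ${\sf CBQP}\subseteq{\sf AdPostBQP}$ by directly simulating the cloning operator of Definition~\ref{Rewinding} with adaptive postselection: to realize $C|\mathcal{D}\rangle = Q|0^n\rangle/\|Q|0^n\rangle\|$, run the unitary part of $Q$ and replace each computational-basis projector $|s\rangle\langle s|$ appearing in $Q$ by a measurement postselected onto the committed outcome $s$, which reconstructs the normalized state $Q|0^n\rangle$. Since the projected value at a given step may depend on earlier outcomes, this is genuinely adaptive postselection; the probability condition in Eq.~(\ref{condadpost}) is a technicality handled by the same amplitude-spreading used below, so this direction is routine.

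For the lower bound I would start from ${\sf PP}={\sf PostBQP}$~\cite{A05}. A ${\sf PostBQP}$ computation is a special case of adaptive postselection, so ${\sf PP}={\sf PostBQP}\subseteq{\sf AdPostBQP}$, and once ${\sf AdPostBQP}\subseteq{\sf RwBQP}$ is available this gives ${\sf PP}\subseteq{\sf RwBQP}$. I would then amplify the completeness--soundness gap to $(1-2^{-p(n)},2^{-p(n)})$ by Corollary~\ref{corollary2} and invoke closure under composition for classical queries (Corollary~\ref{corollary3}) to collapse nested calls: since both ${\sf BQP}\subseteq{\sf RwBQP}$ and ${\sf PP}\subseteq{\sf RwBQP}$, a ${\sf BQP}$ machine making classical ${\sf PP}$-queries becomes ${\sf RwBQP}^{\sf RwBQP}_{\mathrm{classical}}={\sf RwBQP}$, yielding ${\sf BQP}^{\sf PP}_{\mathrm{classical}}\subseteq{\sf RwBQP}$; finally ${\sf BPP}^{\sf PP}\subseteq{\sf BQP}^{\sf PP}_{\mathrm{classical}}$ gives the claim.

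The crux, and the step I expect to be the main obstacle, is ${\sf AdPostBQP}\subseteq{\sf RwBQP}$: simulating a postselection whose success probability may be as small as $2^{-p(n)}$ using only polynomially many rewinding operators, since the naive repeat-until-success would need exponentially many rewinds. Here I would design an amplitude-mitigation protocol acting on $|\psi\rangle=\alpha|\psi_t\rangle+\beta|\psi_t^\perp\rangle$ with $|\alpha|^2=2^{-p(n)}$ and $\langle\psi_t|\psi_t^\perp\rangle=0$, whose effect leaves the target amplitude $\alpha$ intact while suppressing the nontarget branch by a factor $\alpha$, i.e.\ produces up to normalization $\alpha|\psi_t\rangle+\alpha\beta|\psi_t^\perp\rangle$, so that $|\psi_t\rangle$ is obtained with probability at least $1/2$. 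One step would measure a flag distinguishing $|\psi_t\rangle$ from $|\psi_t^\perp\rangle$ and, on the nontarget outcome, apply a rewinding operator $R$ (with the classical description of the relevant pre-measurement state supplied through $\tilde{\mathcal{D}}$) so as to re-inject target weight coherently rather than restarting; iterating $O(\mathrm{poly}(n))$ times drives the nontarget weight down to $2^{-p'(n)}$ for any desired polynomial $p'$, after which a final measurement returns $|\psi_t\rangle$ with overwhelming probability. Verifying that each step uses only a constant number of rewinding operators, that the amplitudes compose as claimed, and in particular that the description fed to $R$ genuinely matches the pre-measurement state so that $R$ is well-defined per Definition~\ref{Rewinding}, is the technical heart of the argument and drives the whole theorem, including the lower bound above.
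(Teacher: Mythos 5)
Your proposal is correct and follows essentially the same route as the paper: the cycle ${\sf RwBQP}\subseteq{\sf CBQP}\subseteq{\sf AdPostBQP}\subseteq{\sf RwBQP}$, the amplitude-mitigation protocol (doubling the target-to-nontarget probability ratio per successful step and rewinding on the bad outcome) as the technical core of ${\sf AdPostBQP}\subseteq{\sf RwBQP}$, and the lower bound via ${\sf PP}={\sf PostBQP}$ together with gap amplification and closure under classical-query composition. The only cosmetic difference is that you obtain ${\sf PP}\subseteq{\sf RwBQP}$ by routing ${\sf PostBQP}$ through ${\sf AdPostBQP}$, whereas the paper reruns Aaronson's construction directly with the mitigation protocol; both are sound given the mitigation step.
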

\begin{proof}
This theorem can be obtained by showing (i) ${\sf RwBQP}\subseteq{\sf CBQP}$; (ii) ${\sf CBQP}\subseteq{\sf AdPostBQP}$; (iii) ${\sf AdPostBQP}\subseteq{\sf RwBQP}$; (iv) ${\sf BQP}^{\sf PP}_{{\rm classical}}\subseteq{\sf RwBQP}$, which immediately means ${\sf BPP}^{\sf PP}\subseteq{\sf RwBQP}$ because ${\sf BPP}^{\sf PP}\subseteq{\sf BQP}^{\sf PP}_{{\rm classical}}$; and (v) ${\sf AdPostBQP}\subseteq{\sf PSPACE}$.
The inclusions (i) and (v) are already shown in Lemmas~\ref{nlemma1} and \ref{lemmapspace}, respectively.
The remaining inclusions (ii), (iii), and (iv) will be shown in Lemma~\ref{lemma1} and Corollary~\ref{nlemma4}.
\end{proof}

To simplify our argument in proofs of Lemma~\ref{lemma1} and Theorem~\ref{lemma2}, we particularly consider the universal gate set $\{X,CH,CCZ\}\cup\{H_k\ |\ k\in\mathbb{Z},-p(|x|)\le k\le p(|x|)\}$ with a polynomial $p(|x|)$ in the instance size $|x|$ of a given problem.
Here, $CH\equiv|0\rangle\langle 0|\otimes I+|1\rangle\langle 1|\otimes H$ is the controlled-Hadamard gate, $CCZ\equiv|0\rangle\langle 0|\otimes I^{\otimes 2}+|1\rangle\langle 1|\otimes CZ$ is the controlled-controlled-$Z$ ($CCZ$) gate, and $H_k$ is the generalized Hadamard gate such that $H_k|0\rangle=(|0\rangle+2^k|1\rangle)/\sqrt{1+4^k}$ and $H_k|1\rangle=(2^k|0\rangle-|1\rangle)/\sqrt{1+4^k}$.
Therefore, $H_0$ is the ordinary Hadamard gate $H$, and hence, from \cite{S02}, our gate set is universal.
By using our universal gate set, we can make output probabilities of any Pauli-$Z$ measurement in any polynomial-size quantum circuit $0$ or at least $2^{-q(|x|)}$ for some polynomial $q(|x|)$.
Due to this property, we can postselect any outcome for any polynomial-size quantum circuit [see Eq.~(\ref{condadpost})], which simplifies a proof of Lemma~\ref{lemma1}.
Furthermore, by using this gate set, we can perform all quantum operations required in a proof of Theorem~\ref{lemma2} without any approximation.
Note that our argument can also be applied to other universal gate sets such as $\{H,T,CZ\}$ with $T\equiv|0\rangle\langle 0|+e^{i\pi/4}|1\rangle\langle 1|$ by using the Solovay-Kitaev algorithm~\cite{DN06}.

We show the second inclusion (ii):
\begin{lemma}
\label{lemma1}
${\sf CBQP}\subseteq{\sf AdPostBQP}$.
\end{lemma}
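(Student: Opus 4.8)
The plan is to simulate each cloning operator $C$ by an adaptive postselection, exploiting the fact that $C$ merely reconstructs the pure state $Q|0^n\rangle/\|Q|0^n\rangle\|$ from its classical description $\mathcal{D}$. Since $\mathcal{D}$ is the description of a linear operator $Q=U_2(\cdots|0\rangle\langle 0|\cdots)U_1$ built from unitaries and $Z$-basis projectors, the only obstacle to applying $Q$ coherently is the presence of those projectors: a genuine measurement collapses the state, whereas $Q$ demands the branch where each projected qubit yields outcome $0$. The key observation is that in ${\sf AdPostBQP}$ we may replace each such projector $|0\rangle\langle 0|$ by a postselection $|1\rangle\langle 1|$ conjugated by $X$ (as noted after Def.~\ref{AdPostBQP}, applying $X$ before and after $|1\rangle\langle 1|$ realizes $|0\rangle\langle 0|$). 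Thus a single postselected run of the circuit described by $\mathcal{D}$ deterministically produces the normalized state $Q|0^n\rangle/\|Q|0^n\rangle\|$, which is exactly the output of $C|\mathcal{D}\rangle$.

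First I would take an arbitrary language $L\in{\sf CBQP}$ with its deterministic Turing machine generating the description $\tilde{\mathcal{D}}$ of an operator $Q_n$ that invokes $k=\mathrm{poly}(n)$ cloning operators. I would then describe an ${\sf AdPostBQP}$ machine that reads $\tilde{\mathcal{D}}$ and runs $Q_n$ gate by gate, where each elementary gate and ordinary measurement is executed verbatim, but each invocation of a cloning operator $C$ on some sub-description $\mathcal{D}$ is expanded inline: the machine allocates fresh ancillas, applies the circuit encoded in $\mathcal{D}$, and whenever that circuit calls for a $Z$-basis projector it instead performs an adaptive postselection (forcing the desired outcome). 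Because $\mathcal{D}$ is always computable from $\tilde{\mathcal{D}}$ together with the measurement outcomes seen so far — precisely the remark following Def.~\ref{BRQP} — the machine can generate on the fly the description needed at each cloning step, so the construction is uniform and polynomial-time.

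The crucial points to verify are that the postselection probabilities are not too small and that the simulation is exact. For the former I would invoke the universal gate set $\{X,CH,CCZ\}\cup\{H_k\}$ fixed just before the lemma: with this gate set every $Z$-basis measurement has outcome probability either $0$ or at least $2^{-q(|x|)}$, so condition~(\ref{condadpost}) in Def.~\ref{AdPostBQP} is met and each postselection is legitimate. Since $C|\mathcal{D}\rangle$ is defined only when $Q|0^n\rangle\neq 0$, the target branch always has nonzero (hence at least $2^{-q(|x|)}$) amplitude, so postselection succeeds on a valid branch. For exactness, conditioning the run of $\mathcal{D}$'s circuit on the all-zero projection outcomes yields exactly the normalized vector $Q|0^n\rangle/\sqrt{\langle 0^n|Q^\dag Q|0^n\rangle}$, matching Def.~\ref{Rewinding}. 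Therefore the acceptance probability of the ${\sf AdPostBQP}$ machine equals that of the original ${\sf CBQP}$ machine on every input, and the completeness and soundness thresholds $c$ and $s$ are preserved.

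The main obstacle I anticipate is bookkeeping rather than a genuine mathematical difficulty: one must be careful that the $q_z$-weighted sums over intermediate-measurement outcomes $z$ in Def.~\ref{AdPostBQP} agree with the unweighted norm-squared sums in the ${\sf CBQP}$ definition of Def.~\ref{BRQP}, and that replacing a projector inside a nested cloning call by a postselection does not alter the probabilities assigned to the genuine (non-postselected) intermediate measurements of the outer circuit. Since each cloning operator acts on a pure sub-state prepared from a fixed initial state and its success branch is reconstructed deterministically, the joint distribution over all ordinary measurement outcomes is unchanged, and the two formulations coincide; I would spell this out by comparing the two definitions term by term, which completes the proof.
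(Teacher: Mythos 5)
Your core construction is the same as the paper's: decompose $Q=U^{(m+1)}\prod_{i=1}^m\bigl(P^{(i)}U^{(i)}\bigr)$, rebuild the state from the description $\mathcal{D}$ by running the unitaries verbatim and replacing each $Z$-basis projector with a postselection, and invoke the gate set $\{X,H_k,CH,CCZ\}$ to guarantee that every postselected outcome has probability either $0$ or at least $2^{-q(|x|)}$ so that condition~(\ref{condadpost}) is met. That part is correct and matches the paper.

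However, there is one genuine gap: Definition~\ref{BRQP} explicitly allows \emph{classically controlled} cloning operators, i.e., whether $C$ is applied at all may depend on a previous measurement outcome $a$, and your simulation does not handle this case. In ${\sf AdPostBQP}$ the circuit $Q_n$ is a fixed sequence of gates, measurements, and projectors $|1\rangle\langle 1|$; the only adaptivity available is in the \emph{value} $b$ of a projector $|b\rangle\langle b|$, realized by conjugating the always-present projector with classically controlled $X$ gates. You cannot adaptively decide whether a projector appears in the circuit. So if you expand a conditionally applied cloning operator inline, its postselections would fire even on the branch $a=0$, where they would incorrectly project (or even annihilate) the state. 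The paper closes this hole by introducing, for each such projector $P^{(i)}=|b\rangle\langle b|$, a fresh ancilla prepared in $|b\rangle$ and a classically controlled SWAP gate: when $a=1$ the relevant qubit is swapped into the ancilla register and the (unconditional) projector acts on it as intended, while when $a=0$ the projector acts on the untouched ancilla $|b\rangle$ and succeeds trivially without disturbing the data. Your argument needs this (or an equivalent gadget) to cover the classically controlled case; for non-controlled cloning operators your proof is essentially complete.
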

\begin{proof}
To obtain this lemma, it is sufficient to show that for any polynomial-size linear operator $Q$ and its classical description $\mathcal{D}$, the cloning operator $C$ can be simulated in quantum polynomial time by using the {\red adaptive} postselection.
That is, our purpose is to perform the cloning operator $C$ on the input state $|\mathcal{D}\rangle$.
Let $m$ be the number of $Z$-basis projective operators included in $Q$.
By using $n$-qubit unitary operators $\{U^{(i)}\}_{i=1}^{m+1}$ and $Z$-basis projective operators $\{P^{(i)}\}_{i=1}^m$, $Q=U^{(m+1)}\prod_{i=1}^m(P^{(i)}U^{(i)})$.
We can obtain the classical description $\mathcal{D}$ of $Q$ by measuring the state $|\mathcal{D}\rangle$ in the Pauli-$Z$ basis.
The description $\mathcal{D}$ informs us about whether $P^{(i)}$ is $|0\rangle\langle 0|$ or $|1\rangle\langle 1|$ and how to construct $U^{(i)}$ from $\{X,H_k,CH,CCZ\}$ for all $i$.
Therefore, by using the postselection, we can prepare $Q|0^n\rangle$ (up to normalization) in quantum polynomial time.
When we would like to apply $U^{(i)}$, we just apply it.
On the other hand, when we apply $P^{(i)}$, we use the postselection.
Since we assume the universal gate set $\{X,H_k,CH,CCZ\}$, the postselection is possible in any case.
These efficient procedures simulate the non classically-controlled cloning operator $C$.

Next, we show that the above procedures can also be applied to simulate a classically controlled cloning operator.
Suppose that when $a\in\{0,1\}$ is $1$, we would like to apply the cloning operator $C$.
On the other hand, when $a=0$, we do not apply $C$.
Note that without loss of generality, we can assume that $C$ is controlled by a single bit $a$ because $C$ is {\red either} applied or not.
Only when $a=1$, we must apply $P^{(i)}$ to simulate the classically controlled cloning operator, which means that a classically controlled postselection seems to be necessary.
This problem can be resolved in the following way.
Let $P^{(i)}=|b\rangle\langle b|$ for $b\in\{0,1\}$.
Such classically controlled $P^{(i)}$ can be simulated by adding  an ancillary qubit $|b\rangle$ and applying the classically controlled SWAP gate as shown in Fig.~\ref{ccpost}.
Classically controlled quantum gates are allowed in ${\sf AdPostBQP}$ computation because any classically controlled quantum gate can be realized by combining elementary quantum gates in a universal gate set.

\begin{figure}[t]
\begin{center}
\includegraphics[width=12cm, clip]{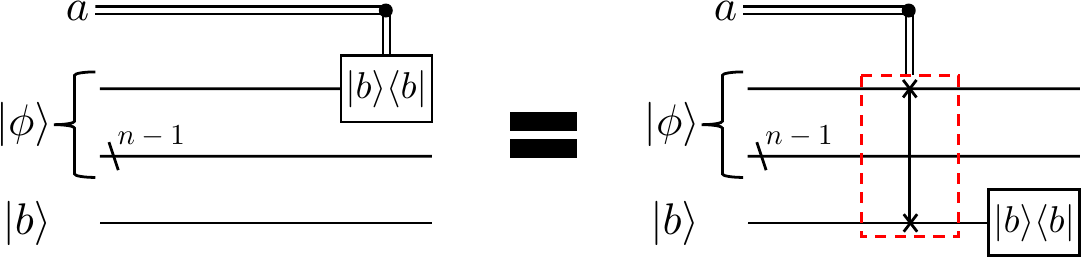}
\end{center}
\caption{Replacement of a classically controlled projector with the classically controlled SWAP gate. $|\phi\rangle$ is a quantum state immediately before applying $P^{(i)}=|b\rangle\langle b|$, where $b\in\{0,1\}$. The SWAP gate is depicted as a vertical line enclosed by a dotted red rectangle. The projector $|b\rangle\langle b|$ in the left circuit and the SWAP gate in the right circuit are applied only when $a\in\{0,1\}$ is $1$.}
\label{ccpost}
\end{figure}

In conclusion, we obtain ${\sf CBQP}\subseteq{\sf AdPostBQP}$.
\end{proof}

As the first step to obtain inclusions (iii) and (iv), we show the following theorem:
\begin{theorem}
\label{lemma2}
${\sf RwBQP}\supseteq{\sf PP}$.
\end{theorem}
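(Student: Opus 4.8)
The plan is to build on Aaronson's equality ${\sf PP}={\sf PostBQP}$~\cite{A05} and reduce the whole claim to simulating a single non-adaptive postselection by rewinding. Given a ${\sf PostBQP}$ computation with circuit $U_x$, running $U_x$ on $|0^n\rangle$ produces a state of the form $\sqrt{a}\,|\psi_t\rangle|1\rangle+\sqrt{1-a}\,|\psi_t^\perp\rangle|0\rangle$, where the last qubit is the postselection register, $|\psi_t\rangle$ carries the output register conditioned on postselection success, and $a=\Pr[p=1]\ge 2^{-q}$ for some polynomial $q$. Since postselecting on $p=1$ and reading the output register decides the problem, it suffices to exhibit a polynomial-size ${\sf RwBQP}$ procedure that, using only polynomially many rewinding operators, transforms this state into one in which the weight on $|\psi_t\rangle|1\rangle$ is at least $1/2$ (indeed exponentially close to $1$), after which an ordinary measurement of the flag returns the target branch with high probability.

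The heart of the argument is a coherent attenuation primitive driven by a single classically controlled rewinding operator. First I would append a fresh ancilla $|0\rangle$ and apply a rotation, controlled on the flag being $0$, that sends $|0\rangle\mapsto \frac{1}{\sqrt{2}}|0\rangle+\frac{1}{\sqrt{2}}|1\rangle$ and leaves the ancilla untouched on the flag-$1$ branch. Measuring the ancilla then yields outcome $0$ with probability $(1+a)/2\ge 1/2$, and in that case the post-measurement state is proportional to $\sqrt{a}\,|\psi_t\rangle|1\rangle+\sqrt{(1-a)/2}\,|\psi_t^\perp\rangle|0\rangle$; that is, the non-target amplitude is damped by $1/\sqrt{2}$ while the target branch is untouched. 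On the unfavourable outcome $1$ I would invoke the classically controlled rewinding operator $R$ of Def.~\ref{Rewinding}, which, given the classical description of the sub-circuit producing the pre-measurement state, restores that state so the ancilla can be re-measured. Because each attempt succeeds with probability at least $1/2$, a polynomially bounded number of attempts suffices except with exponentially small probability.

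Chaining $O(q)$ such attenuation rounds renormalizes the target weight upward: after a successful round the new target probability is $2a/(1+a)$, which grows geometrically while $a$ is small, so $O(q)$ rounds drive it from $2^{-q}$ past a constant, and a further polynomial number of rounds pushes the residual non-target weight below $2^{-\mathrm{poly}}$. This realizes exactly the mitigation promised in Section~\ref{IB}, where the single conceptual step damping the non-target amplitude by $\sqrt{a}$ corresponds to the composition of these rounds; combined with the gap amplification of Corollary~\ref{corollary2}, it produces a state with target weight exponentially close to $1$. Measuring the flag and reading the output register then decides the ${\sf PostBQP}$ instance, giving ${\sf PP}={\sf PostBQP}\subseteq{\sf RwBQP}$.

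The step I expect to be the main obstacle is the careful bookkeeping of normalization and of the non-linearity of $R$ within the $Q_n^{(z)}$ formalism of Def.~\ref{BRQP}. In that formalism each intermediate measurement is frozen to a fixed outcome $z_i$ and the acceptance probability is a sum over all outcome paths $z$; one must verify that applying $R$ to the \emph{unnormalized} projected branch multiplies it by exactly the branch amplitude while recovering the full pre-measurement state, that the per-round success and rewinding steps compose correctly across chained rounds (handled by letting the operator $Q$ absorb the earlier projectors and relabelling the rewound qubit), and that the total number of rewinding operators and the accumulated failure probability remain, respectively, polynomial and exponentially small.
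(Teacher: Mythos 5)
Your proposal is correct and matches the paper's proof in its essential technique: your attenuation primitive (a controlled rotation that puts a $(|0\rangle+|1\rangle)/\sqrt{2}$ ancilla on the non-target branch only, a $Z$-basis measurement of that ancilla, and a classically controlled rewind on the bad outcome, yielding the recursion $a\mapsto 2a/(1+a)$ with per-attempt success probability at least $1/2$) is exactly the paper's mitigation protocol, including the polynomial bound on the number of rounds coming from ${\rm Pr}[p=1]\ge 2^{-q}$. The only difference is packaging: you apply the mitigation as a black box to the single final postselection of a generic ${\sf PostBQP}$ circuit, whereas the paper inlines Aaronson's explicit ${\sf PP}$-complete construction and mitigates the preparation of the states $|\phi_{\beta/\alpha}\rangle$ directly (deriving the explicit bound $N\le 2n+3$ in Appendix~\ref{D}); the paper itself deploys your black-box version when proving ${\sf AdPostBQP}\subseteq{\sf RwBQP}$ in Corollary~\ref{nlemma4}.
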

\begin{proof}
We show this theorem by replacing the postselection used in the proof of ${\sf PP}\subseteq{\sf PostBQP}$ in \cite{A05} with a polynomial number of rewinding operators.
To this end, we consider the following ${\sf PP}$-complete problem~\cite{A05}: let $f:\{0,1\}^n\rightarrow\{0,1\}$ be a Boolean function computable in classical polynomial time, and $s\equiv\sum_{x\in\{0,1\}^n}f(x)$.
Decide $0<s<2^{n-1}$ or $s\ge 2^{n-1}$.
Note that it is promised that one of them is definitely satisfied.

To solve this problem with an exponentially small error probability using rewinding operators, first, we prepare 
\begin{eqnarray}
\label{prepost}
\sqrt{p}|\psi_t^\perp\rangle+\sqrt{1-p}|\psi_t\rangle,
\end{eqnarray}
where
\begin{eqnarray}
\label{p}
p&\equiv&\frac{2\alpha^2(2^n-s)^2+\beta^24^n}{2[(2^n-s)^2+s^2]}\\
|\psi_t^\perp\rangle&\equiv&\frac{\sqrt{2}\alpha(2^n-s)|0\rangle+\beta2^n|1\rangle}{\sqrt{2\alpha^2(2^n-s)^2+\beta^24^n}}\otimes |0\rangle\\
|\psi_t\rangle&\equiv&\frac{\sqrt{2}\alpha s|0\rangle+\beta(2^n-2s)|1\rangle}{\sqrt{2\alpha^2s^2+\beta^2(2^n-2s)^2}}\otimes |1\rangle
\end{eqnarray}
for positive real numbers $\alpha$ and $\beta$ such that $\alpha^2+\beta^2=1$ and $\beta/\alpha=2^k$, where $k$ is an integer whose absolute value is upper bounded by $n$.
{\red This preparation can be efficiently done without the postselection and rewinding operators by using a protocol in \cite{A05}.
First, we prepare 
\begin{eqnarray}
\label{TOCSstate1}
\frac{1}{\sqrt{2^n}}\sum_{x\in\{0,1\}^n}(H^{\otimes n}|x\rangle)|f(x)\rangle
\end{eqnarray}
in quantum polynomial time.
Then, we measure all $n$ qubits in the first register in the Pauli-$Z$ basis.
We repeat these procedures until we obtain the outcome $0^n$ or the repetition number reaches $n$.
In previous calculations~\cite{A05,AL98}, the probability of $0^n$ being output in each repetition is lower bounded by $1/4$. However, we tighten the lower bound by calculating as follows:
\begin{eqnarray}
\nonumber
&&\left[\frac{1}{\sqrt{2^n}}\sum_{y\in\{0,1\}^n}(\langle y|H^{\otimes n})\langle f(y)|\right]\left(|0^n\rangle\langle 0^n|\otimes I\right)\left[\frac{1}{\sqrt{2^n}}\sum_{x\in\{0,1\}^n}(H^{\otimes n}|x\rangle)|f(x)\rangle\right]\\
&=&\frac{1}{2^n}\sum_{x,y\in\{0,1\}^n}\frac{1}{2^n}\langle f(y)|f(x)\rangle\\
&=&\frac{M_0^2+M_1^2}{4^n}=\frac{M_0^2+(2^n-M_0)^2}{4^n}\ge\frac{1}{2},
\end{eqnarray}
where $M_b\equiv|\{x\in\{0,1\}^n: f(x)=b\}|$ for any $b\in\{0,1\}$, and we have used $M_0+M_1=2^n$.
Note that $M_1$ is equal to $s$.
Therefore, we can obtain at least one $0^n$ with probability of at least
\begin{eqnarray}
\label{prob31}
1-1/2^n.
\end{eqnarray}
When the measurement outcome is $0^n$, we obtain
\begin{eqnarray}
\label{TOCSstate2}
|\psi\rangle\equiv\cfrac{(2^n-s)|0\rangle+s|1\rangle}{\sqrt{(2^n-s)^2+s^2}}.
\end{eqnarray}
From this state, for any positive real numbers $\alpha$ and $\beta$ such that $\alpha^2+\beta^2=1$ and $\beta/\alpha=2^k$, we can prepare
\begin{eqnarray}
\label{TOCSstate3}
CH[(\alpha|0\rangle+\beta|1\rangle)|\psi\rangle]=\alpha|0\rangle|\psi\rangle+\beta|1\rangle H|\psi\rangle,
\end{eqnarray}
which is identical to the state in Eq.~(\ref{prepost}), in quantum polynomial time~\footnote{\red For clarity, we here explicitly write the procedure of Aaronson's state-preparation protocol in \cite{A05}. First, the state in Eq.~(\ref{TOCSstate1}) is prepared, and then the state in Eq.~(\ref{TOCSstate2}) is generated from it. Finally, it is transformed to the target state in Eq.~(\ref{TOCSstate3}).}.}

If the second qubit in Eq.~(\ref{prepost}) is projected onto $|1\rangle$, we can obtain
\begin{eqnarray}
\label{post}
|\phi_{\beta/\alpha}\rangle\equiv\frac{\sqrt{2}\alpha s|0\rangle+\beta(2^n-2s)|1\rangle}{\sqrt{2\alpha^2s^2+\beta^2(2^n-2s)^2}}.
\end{eqnarray}
{\red Aaronson has shown that when $0<s<2^{n-1}$, there exists an integer $k$ such that $|\langle +|\phi_{\beta/\alpha}\rangle|\ge(1+\sqrt{2})/\sqrt{6}$.
On the other hand, if $s\ge 2^{n-1}$, then $|\langle +|\phi_{\beta/\alpha}\rangle|\le1/\sqrt{2}$ holds for all $-n\le k\le n$.
Therefore, by simply measuring $n$ copies of $|\phi_{\beta/\alpha}\rangle$ in the Pauli-$X$ basis for all $k$, we can decide whether $0<s<2^{n-1}$ or $s\ge 2^{n-1}$ with an exponentially small error probability $p_{\rm err}$ in quantum polynomial time~\cite{A05}.}

However, since $p$ may be exponentially close to $1$, the efficient preparation of Eq.~(\ref{post}) is difficult without postselection.
We resolve this problem by using rewinding operators.
Our idea is to amplify the probability of $|1\rangle$ being observed by mitigating the probability of $|0\rangle$ being observed.
We propose {\red the mitigation protocol in Algorithm~\ref{mpa} (see also Fig.~\ref{miti}).}

{\red\begin{algorithm}
\begin{flushleft}
\caption{\red Mitigation protocol}
\label{mpa}
\begin{enumerate} 
\item Set $i=0$ and $c=0$.
\item By using the state in Eq.~(\ref{prepost}), prepare
\begin{eqnarray}
\label{prepare}
\sqrt{p_i}|\psi_t^\perp\rangle|+\rangle+\sqrt{1-p_i}|\psi_t\rangle|0\rangle,
\end{eqnarray}
where $p_0=p$, and measure the last register in the Pauli-$Z$ basis.
Let $z$ be the measurement outcome.
Furthermore, replace $c$ with $c+1$.
\item Depending on the values of $z$, $i$, and $c$, perform one of following steps:
\begin{enumerate}
\item[(a)] When $z=0$, replace $i$ with $i+1$, reset $c$ to $0$, and obtain
\begin{eqnarray}
\label{i+1}
\sqrt{p_{i+1}}|\psi_t^\perp\rangle+\sqrt{1-p_{i+1}}|\psi_t\rangle,
\end{eqnarray}
where
\begin{eqnarray}
\label{suc}
p_{i+1}=\frac{p_i}{2-p_i}.
\end{eqnarray}
If $i+1<2n+3$, do step 2 by using the state in Eq.~(\ref{i+1}).
On the other hand, if $i+1=2n+3$, output the state in Eq.~(\ref{i+1}) and halt the mitigation protocol.
\item[(b)] When $z=1$ and $c<3n$, apply the rewinding operator $R$ and do step 2 again for the same $i$.
\item[(c)] When $z=1$ and $c=3n$, answer $0<s<2^{n-1}$ or $s\ge2^{n-1}$ uniformly at random, and halt the mitigation protocol.
\end{enumerate}
\end{enumerate}
\end{flushleft}
\end{algorithm}}

\begin{figure}[t]
\begin{center}
\includegraphics[width=12cm, clip]{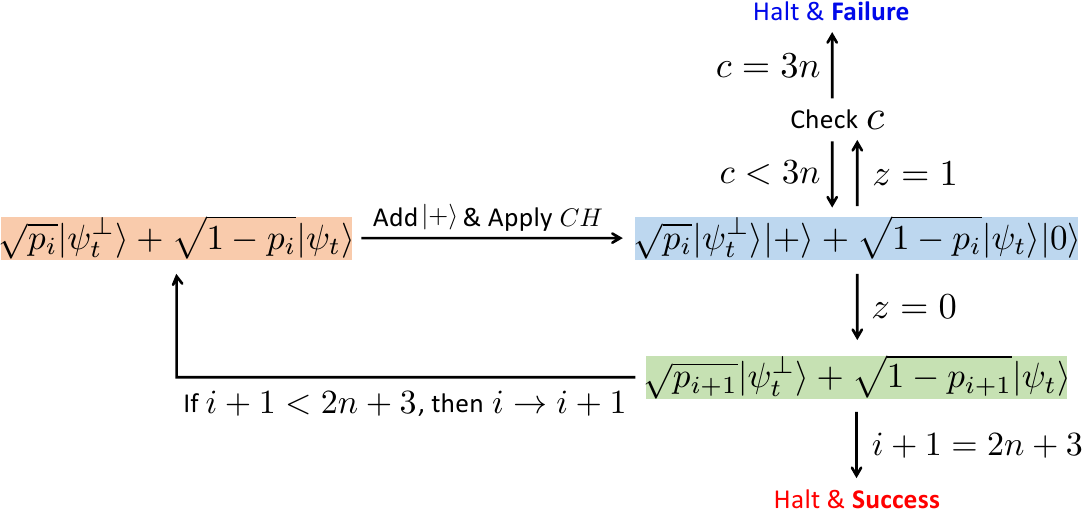}
\end{center}
\caption{Schematic diagram of our mitigation protocol. {\bf Failure} and {\bf Success} indicate step (c) and the case that the state in Eq.~(\ref{ideal}) is output, respectively.}
\label{miti}
\end{figure}

In this protocol, $i$ and $c$ count how many times the mitigation succeeds and how many times the mitigation fails for a single $i$, respectively.
From Eq.~(\ref{suc}),
\begin{eqnarray}
\label{suc2}
\frac{1-p_{i+1}}{p_{i+1}}=2\frac{1-p_i}{p_i},
\end{eqnarray}
and hence we succeed in mitigating the amplitude of the nontarget state $|\psi_t^\perp\rangle$.
{\red Note that Eq.~(\ref{suc}) is derived from Eq.~(\ref{projectionTOCS}) and the fact that the normalization factor of Eq.~(\ref{projectionTOCS}) is $\sqrt{1-p_i/2}$.}

To evaluate our mitigation protocol, we show the following Claim in Appendix~\ref{D}:
\begin{claim}
The mitigation protocol succeeds, i.e., the counter $i$ becomes $2n+3$ with probability of at least
\begin{eqnarray}
\label{success2}
1-\frac{5n}{8^n}.
\end{eqnarray}
In this case, the output state satisfies
\begin{eqnarray}
\label{mitigate2}
\frac{1-p_{2n+3}}{p_{2n+3}}\ge 1.
\end{eqnarray}
\end{claim}
From Eq.~(\ref{success2}),
\begin{eqnarray}
\label{ideal}
\sqrt{p_{2n+3}}|\psi_t^\perp\rangle+\sqrt{1-p_{2n+3}}|\psi_t\rangle
\end{eqnarray}
is output with probability of at least $1-5n/8^n$.
From Eq.~(\ref{mitigate2}), we can obtain the outcome $1$ with probability of at least $1/2$ by measuring the second qubit in Eq.~(\ref{ideal}).
If we obtain $0$, we do the same measurement again by using the rewinding operator.
Therefore, by repeating this procedure $n$ times, we obtain the outcome $1$ with probability of at least
\begin{eqnarray}
\label{prob32}
1-1/2^n.
\end{eqnarray}
In total, from Eqs.~(\ref{prob31}), (\ref{success2}), and (\ref{prob32}) and the fact that the mitigation protocol is used for all $-n\le k\le n$, with probability of at least
\begin{eqnarray}
p_{\rm suc}\equiv\left[\left(1-\frac{1}{2^n}\right)^2\left(1-\frac{5n}{8^n}\right)\right]^{n(2n+1)},
\end{eqnarray}
we obtain $n$ copies of $|\phi_{\beta/\alpha}\rangle$ for all $k$.
As a result, we can correctly decide whether $0<s<2^{n-1}$ or $s\ge 2^{n-1}$ in polynomial time with probability of at least $p_{\rm suc}(1-p_{\rm err})$ that is exponentially close to $1$.
\end{proof}

From Theorem~\ref{lemma2}, we obtain the following corollary:
\begin{corollary}
\label{nlemma4}
${\sf BQP}^{\sf PP}_{{\rm classical}}\subseteq{\sf AdPostBQP}\subseteq{\sf RwBQP}$.
\end{corollary}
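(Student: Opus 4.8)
The plan is to prove the two inclusions separately, first establishing ${\sf AdPostBQP}\subseteq{\sf RwBQP}$ (inclusion~(iii) from Sec.~\ref{IB}) and then ${\sf BQP}^{\sf PP}_{\rm classical}\subseteq{\sf RwBQP}$ (inclusion~(iv)). Once (iii) is in hand, Lemmas~\ref{nlemma1} and~\ref{lemma1} close the cycle ${\sf RwBQP}={\sf CBQP}={\sf AdPostBQP}$, so the class ${\sf RwBQP}$ appearing in (iv) may be rewritten as ${\sf AdPostBQP}$, yielding the full chain ${\sf BQP}^{\sf PP}_{\rm classical}\subseteq{\sf AdPostBQP}\subseteq{\sf RwBQP}$.

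For inclusion~(iii) I would simulate each adaptive postselection by the mitigation protocol already developed in the proof of Theorem~\ref{lemma2}. Consider one projector $|1\rangle\langle 1|$ of an ${\sf AdPostBQP}$ computation acting on a state whose $|1\rangle$-component has squared norm at least $2^{-p(n)}$ by Eq.~(\ref{condadpost}). Writing this state as $\sqrt{p}\,|\psi_t^\perp\rangle+\sqrt{1-p}\,|\psi_t\rangle$, with $|\psi_t\rangle$ the postselected branch and $1-p\ge 2^{-p(n)}$, the same protocol applies with essentially the same analysis: each successful round enforces $p_{i+1}=p_i/(2-p_i)$, equivalently doubling the ratio $(1-p)/p$ as in Eq.~(\ref{suc2}), so after $O(p(n))$ rounds one has $1-p_i\ge p_i$, and a computational-basis measurement then returns the target with probability at least $1/2$; on failure I rewind and retry, boosting this to $1-2^{-\Omega(n)}$. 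Because each postselection consumes only polynomially many rewinding operators and an ${\sf AdPostBQP}$ computation contains only polynomially many postselections, the total number of rewinding operators stays polynomial, and a union bound keeps the acceptance probabilities inside the gap demanded by Def.~\ref{BRQP}.

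For inclusion~(iv) I would compose as sketched in Sec.~\ref{IB}. Theorem~\ref{lemma2} gives ${\sf PP}\subseteq{\sf RwBQP}$, and Corollary~\ref{corollary2} amplifies the completeness-soundness gap of an ${\sf RwBQP}$ computation so that its error is at most $2^{-r(n)}$ for any desired polynomial $r$. A ${\sf BQP}^{\sf PP}_{\rm classical}$ machine issues polynomially many classical queries to its ${\sf PP}$ oracle; upgrading its base ${\sf BQP}$ machine to ${\sf RwBQP}$ and replacing each ${\sf PP}$ query by its amplified ${\sf RwBQP}$ simulation gives ${\sf BQP}^{\sf PP}_{\rm classical}\subseteq{\sf RwBQP}^{\sf RwBQP}_{\rm classical}$. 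Since all queries are classical, Corollary~\ref{corollary3} collapses this to ${\sf RwBQP}^{\sf RwBQP}_{\rm classical}={\sf RwBQP}$, so ${\sf BQP}^{\sf PP}_{\rm classical}\subseteq{\sf RwBQP}$; rewriting ${\sf RwBQP}$ as ${\sf AdPostBQP}$ completes the proof.

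The hard part will be the error bookkeeping in the composition rather than any individual gate construction. Each simulated oracle answer is correct only with probability exponentially close to, but not equal to, one, so I must amplify before composing in order that a union bound over all polynomially many queries still leaves the final gap exponentially close to one; this is precisely the role of Corollary~\ref{corollary2}. The restriction to classical queries is likewise essential, since closure under composition (Corollary~\ref{corollary3}) is available only in that regime: a quantum query would interrogate the oracle in superposition, and the branch-wise rewinding underlying the mitigation protocol would then no longer be well defined. The only remaining care, in inclusion~(iii), is to verify that the mitigation protocol---originally tuned to the specific state in the proof of Theorem~\ref{lemma2}---still succeeds for an arbitrary postselected branch once the bound $1-p\ge 2^{-p(n)}$ of Eq.~(\ref{condadpost}) is imposed.
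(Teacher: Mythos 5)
Your proposal is correct and follows essentially the same route as the paper: inclusion (iii) is obtained by simulating each single-qubit adaptive postselection with the mitigation protocol of Theorem~\ref{lemma2} (using the lower bound of Eq.~(\ref{condadpost}) to keep the number of rewinding rounds polynomial), and inclusion (iv) is obtained by composing Theorem~\ref{lemma2} with the amplification and classical-query composition results of Corollaries~\ref{corollary2} and~\ref{corollary3}, after which Lemmas~\ref{nlemma1} and~\ref{lemma1} let you place ${\sf AdPostBQP}$ in the middle of the chain. Your additional remarks on error bookkeeping and on why classical queries are essential are consistent with, and slightly more explicit than, the paper's own argument.
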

\begin{proof}
From Lemmas~\ref{nlemma1} and \ref{lemma1}, it is sufficient to show ${\sf BQP}^{\sf PP}_{{\rm classical}}\subseteq{\sf RwBQP}$ and ${\sf AdPostBQP}\subseteq{\sf RwBQP}$ to obtain this corollary.
First, we show the former inclusion.
From Def.~\ref{BRQP}, it is obvious that any process in ${\sf BQP}$ can be simulated by a process in ${\sf RwBQP}$ in polynomial time.
Furthermore, from Corollary~\ref{corollary2} and Theorem~\ref{lemma2}, the ${\sf PP}$ oracle can be replaced with the ${\sf RwBQP}$ oracle.
Therefore, from Corollary~\ref{corollary3}, ${\sf BQP}^{\sf PP}_{{\rm classical}}\subseteq{\sf RwBQP}^{\sf RwBQP}_{{\rm classical}}={\sf RwBQP}$.

To obtain the latter inclusion, we show that each postselection can be simulated by rewinding operators.
From Def.~\ref{AdPostBQP}, each postselection (i.e., each projector) acts on a single qubit.
Therefore, we can write a quantum state immediately before a postselection as $\alpha|0\rangle_p|\psi_0\rangle+\beta|1\rangle_p|\psi_1\rangle$ for some quantum states $|\psi_0\rangle$ and $|\psi_1\rangle$ and complex numbers $\alpha$ and $\beta$ such that $|\alpha|^2+|\beta|^2=1$.
Here, the subscript $p$ denotes the postselection register.
Although $|\beta|^2$ may be exponentially small, the postselection onto $|1\rangle$ can be simulated by using our mitigation protocol.
\end{proof}

As the most important difference between the proof of Theorem~\ref{lemma2} and that of ${\sf PostBQP}\supseteq{\sf PP}$ in \cite{A05}, we have not used postselections of outputs whose probabilities are exponentially small by proposing the mitigation protocol.
To state the difference more explicitly on the technical level, we show the following corollary:
\begin{corollary}
\label{main1}
For any polynomial function {\red$P(|x|)$} in the size $|x|$ of an instance $x$, ${\sf PP}={\sf PostBQP}$ holds even if only non-adaptive postselections of outputs whose probabilities are {\red $1-O(1/P(|x|))$} are allowed.
\end{corollary}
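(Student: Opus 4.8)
The plan is to reuse the mitigation protocol of Theorem~\ref{lemma2}, but to (i) replace each rewinding-based retry by an honest postselection and (ii) weaken every postselection so that its success probability stays polynomially close to one. The containment of the restricted class in ${\sf PP}$ is immediate, since restricting the allowed postselection probabilities only shrinks the class and hence keeps it inside the unrestricted ${\sf PostBQP}={\sf PP}$ of \cite{A05}. Thus I would argue only the reverse containment, starting from Aaronson's construction as packaged in the proof of Theorem~\ref{lemma2}: one prepares the state of Eq.~(\ref{prepost}), namely $\sqrt{p}\,|\psi_t^\perp\rangle+\sqrt{1-p}\,|\psi_t\rangle$, and must project its indicator qubit onto the target branch to obtain $|\phi_{\beta/\alpha}\rangle$ of Eq.~(\ref{post}). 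The difficulty is that the target weight $1-p$ may be exponentially small, so the naive projection is an exponentially unlikely postselection.

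The core idea is to factor this single hard postselection into polynomially many gentle ones. At each step I would adjoin a fresh ancilla $|0\rangle$ and, controlled on the indicator qubit lying in the nontarget branch, rotate the ancilla to $\cos\theta\,|0\rangle+\sin\theta\,|1\rangle$ while leaving the target branch's ancilla at $|0\rangle$; then postselect the ancilla onto $|0\rangle$. This is exactly the step of Eq.~(\ref{prepare}) except that the aggressive ancilla $|+\rangle$, which yields success probability only $1-p_i/2\ge 1/2$, is replaced by a nearly trivial rotation. A one-line computation gives the success probability $1-p_i\sin^2\theta\ge 1-\sin^2\theta$ and updates the target-to-nontarget odds by the factor $\sec^2\theta\ge 1$, the gentle analogue of the doubling in Eq.~(\ref{suc}). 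Choosing $\sin^2\theta=\Theta(1/p(|x|))$, realizable within our gate set by a suitable $H_k$ with $k\approx-\tfrac12\log_2 p(|x|)$, makes every postselection succeed with probability at least $1-\Omega(1/p(|x|))$, as required.

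It then remains to count steps. Since the chosen gate set forces every nonzero outcome probability to be at least $2^{-\mathrm{poly}(|x|)}$, the initial odds $\tfrac{1-p}{p}$ are at least $2^{-\mathrm{poly}(|x|)}$; multiplying them by $\sec^2\theta=1+\Theta(1/p(|x|))$ per step, a number $N=O\!\left(p(|x|)\cdot\mathrm{poly}(|x|)\right)$ of steps suffices to drive the nontarget weight below $1/p(|x|)$. A final postselection of the indicator onto $|1\rangle$ then has probability $1-p_N\ge 1-1/p(|x|)$ and produces $|\phi_{\beta/\alpha}\rangle$ exactly. Crucially, every operation is fixed in advance: the rotations are plain gates and each postselection targets a fixed qubit and value, so the whole procedure is non-adaptive. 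Preparing all $n$ copies of $|\phi_{\beta/\alpha}\rangle$ for each $-n\le k\le n$ uses polynomially many such postselections, which unify into a single non-adaptive postselection in the sense of Def.~\ref{postbqpa}. Running Aaronson's remaining computation on these states decides the ${\sf PP}$-complete problem, and because every projector is applied exactly, the completeness--soundness gap is untouched.

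The hard part will be reconciling gentleness with non-adaptivity: ${\sf PostBQP}$, unlike ${\sf AdPostBQP}$, forbids postselections that depend on earlier outcomes, so I cannot decide on the fly whether to retry as the rewinding proof effectively does; I must instead lay down a fixed chain of postselections and verify that forcing each gentle outcome still compresses an exponentially strong postselection. Quantitatively this is the check that a per-step odds gain of only $1+\Theta(1/p(|x|))$ together with $\sin^2\theta=\Theta(1/p(|x|))$ simultaneously yields per-postselection probability $\ge 1-\Omega(1/p(|x|))$ and a polynomial step count $N$. I also need the gate-set remark to guarantee that the angle $\theta$ is exactly implementable via $H_k$, so that no approximation error leaks into the gap.
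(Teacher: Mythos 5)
Your proposal is correct and follows essentially the same route as the paper: the paper likewise replaces the single exponentially unlikely postselection by a chain of $m$ gentle ones, attaching ancillas $\sqrt{q}\,|0\rangle+\sqrt{1-q}\,|1\rangle$ (your $\cos\theta\,|0\rangle+\sin\theta\,|1\rangle$ with $q=\cos^2\theta=1-\Theta(1/p(|x|))$) to the nontarget branch and postselecting each onto $|0\rangle$, with the same odds gain of $1/q$ per step and the same count $m=O(n\cdot p(|x|))$. The only cosmetic differences are that the paper prepares all $m$ ancillas at once before postselecting them one by one and stops once the target amplitude reaches $\sqrt{1/2}$, whereas you interleave the steps and continue until the final postselection is itself gentle.
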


\begin{proof}
We can obtain this corollary by slightly modifying our mitigation protocol and showing that it can be realized with non-adaptive postselections of outputs whose probabilities are at least $q\equiv1-{\red 1/P(|x|)}$.
For some natural number $m$, we prepare
\begin{eqnarray}
\sqrt{p}|\psi_t^\perp\rangle\left(\sqrt{q}|0\rangle+\sqrt{1-q}|1\rangle\right)^{\otimes m}+\sqrt{1-p}|\psi_t\rangle|0\rangle^{\otimes m}
\end{eqnarray}
instead of the state in Eq.~(\ref{prepare}).
By postselecting $m$ qubits in the second register onto $|0\rangle$ one by one, we obtain
\begin{eqnarray}
\label{new24}
\frac{\sqrt{pq^m}|\psi_t^\perp\rangle+\sqrt{1-p}|\psi_t\rangle}{\sqrt{1-p+pq^m}}.
\end{eqnarray}
These $m$ postselections are non-adaptive ones of outputs whose probabilities are at least $q$.
If the amplitude of $|\psi_t\rangle$ in Eq.~(\ref{new24}) is at least $\sqrt{1/2}$, we can obtain $|\psi_t\rangle$ with at least a constant probability, and hence we can solve the ${\sf PP}$-complete problem.
Such the amplitude is realized by setting $m\ge\log_2{[p/(1-p)]}/\log_2{(1/q)}$.
Since $\log_2{[p/(1-p)]}/\log_2{(1/q)}\le2(n+1)/\log_2{(1/q)}\le2(n+1)O({\red P(|x|)})$ from {\red Eq.~(\ref{ubTOCS}) in} Appendix~\ref{D}, where $n$ is at most a polynomial function in $|x|$, a polynomial number of postselections are sufficient in the above argument.
\end{proof}
 
\section{Restricted Rewindable Quantum Computation}
{\red\subsection{\red Power of Single Rewinding Operator}}
In Sec.~\ref{IIIA}, a polynomial number of rewinding operators was available.
If the number is restricted to a constant, the question is: how is the rewinding useful for universal quantum computation?
We show that a single rewinding operator is sufficient to solve the following problem with a constant probability, which seems hard for universal quantum computation:
\begin{definition}[Collision-finding Problem]
\label{problem}
Given the function family $\mathcal{F}\equiv\{f_K\}_{K\in\mathcal{K}}$ in Theorem~\ref{crf} and a parameter $K$ for $\mathcal{F}$, output a pair $(x,x')$ with $x,x'\in\mathbb{Z}_q^n\times \chi^m\times\{0,1\}$ such that (i) $x\neq x'$ and (ii) $f_K(x)=f_K(x')$.
\end{definition}

\begin{theorem}
\label{easy}
Assume that a rewinding operator can be applied in one step, and there is no polynomial-time quantum algorithm solving ${\rm SIVP}_{p(n)}$ for some polynomial $p(n)$ in $n$.
Then, the problem in Def.~\ref{problem} can be solved with probability of at least $\delta/2(1-o(1))$ by uniformly generated polynomial-size quantum circuits with a single rewinding operator, but it cannot be achieved without rewinding operators.
Here, the probability is taken over the uniform distribution on $\mathcal{K}$ and the randomness used in a quantum circuit to solve the problem.
\end{theorem}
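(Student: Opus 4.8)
The plan is to exploit both the $\delta$-$2$ regularity and the collision resistance of the function family $\mathcal{F}$ guaranteed by Theorem~\ref{crf}, using a single rewinding to extract both preimages of one output value. First I would prepare the uniform superposition over the domain $\mathbb{Z}_q^n\times\chi^m\times\{0,1\}$, compute $f_K$ coherently into an auxiliary output register to obtain $\sum_x|x\rangle|f_K(x)\rangle$, and measure the output register in the computational basis. Following the idea of \cite{CCKW19}, whenever the observed value $y$ lies in $\mathcal{Y}_K^{(2)}$---so that $y$ has exactly the two preimages $(x_0,0)$ and $(x_1,1)$ of Theorem~\ref{crf}, which differ only in their last bit---the residual state on the domain register is, up to normalization, the equal superposition of Eq.~(\ref{preimage}). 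By $\delta$-$2$ regularity the measured $y$ satisfies $y\in\mathcal{Y}_K^{(2)}$ with probability at least $\delta$, and the $(1-o(1))$ factor in the statement absorbs the inexactness of preparing a uniform superposition over the non-dyadic set $\chi^m$.

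Next I would extract a collision with a single rewinding of a single qubit. Measuring the domain register of Eq.~(\ref{preimage}) yields, say, the preimage $(x_0,0)$, collapsing the state to $|x_0\rangle|0\rangle$. The crucial structural observation is that, because $y$ has exactly two preimages with distinct last bits, this collapsed state coincides with the state that measuring \emph{only} the last qubit of Eq.~(\ref{preimage}) would produce; consequently, supplying the rewinding operator $R$ of Def.~\ref{Rewinding} with $|x_0\rangle|0\rangle$ together with a classical description $\mathcal{D}$ of the preparation of Eq.~(\ref{preimage}) recovers Eq.~(\ref{preimage}) for the \emph{same} output value $y$, using only the rewinding of that one qubit. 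A second measurement of the domain register then returns the other preimage $(x_1,1)$ with probability $1/2$, and I output the pair $((x_0,0),(x_1,1))$, which is a valid collision by construction.

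Combining the two stages, the overall success probability is at least $\delta$ (the measured output has two preimages) times $1/2$ (the rewound measurement flips the last bit), i.e.\ $\delta/2\,(1-o(1))$, as claimed. For the converse part---that this cannot be done without rewinding---I would invoke the collision resistance from Theorem~\ref{crf}: under the assumed hardness of ${\rm SIVP}_{p(n)}$, every polynomial-time quantum algorithm outputs a collision pair only with super-polynomially small probability, whereas $\delta/2\,(1-o(1))$ is a constant. The step I expect to be the main obstacle is making rigorous that the single-qubit rewinding restores Eq.~(\ref{preimage}) for the \emph{same} $y$ rather than resampling a fresh output, which is precisely where one combines the definition of $R$ in Eq.~(\ref{rewinddef}) with the fact that the two preimages differ only in their last bit, so that collapsing the full domain register is equivalent to collapsing the last qubit alone.
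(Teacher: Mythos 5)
Your proposal is correct and follows essentially the same route as the paper's proof in Appendix~\ref{G3}: prepare $\sum_x|x\rangle|f_K(x)\rangle$, measure the image register, use $\delta$-$2$ regularity to land on a two-preimage $y$ with probability at least $\delta$, and exploit the fact that the two preimages differ only in the last bit so that a single-qubit rewinding restores Eq.~(\ref{two}) for the same $y$, yielding the other preimage with probability $1/2$; the converse follows from collision resistance exactly as you say. The only detail the paper spells out that you gloss over is how the $(1-o(1))$ factor arises, namely by flagging and postselecting/repeating to cut the dyadic superposition down to the non-dyadic domain $\mathbb{Z}_q^n\times\chi^m\times\{0,1\}$.
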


The sketch of the proof is explained in Sec.~\ref{IB}, and the rigorous proof is given in Appendix~\ref{G3}.
Here, we explain the implication of Theorem~\ref{easy}.
In Sec.~\ref{IIIA}, we have shown the equivalence between the postselection and rewinding.
In contrast, Theorem~\ref{easy} may represent their difference.
A possible approach to solving the problem in Def.~\ref{problem} is to generate two copies of the state in Eq.~(\ref{preimage}) (more precisely, Eq.~(\ref{two}) in Appendix~\ref{G3}) by using the postselection.
As a straightforward way, this can be achieved by postselecting the second register in the state $\sum_x|x\rangle|f_K(x)\rangle$ (more precisely, Eq.~(\ref{firststate}) in Appendix~\ref{G3}) onto the same $f_K(x)$.
However, it requires the postselection of a polynomial number of qubits (or the postselection of states whose amplitudes are exponentially small), while a single qubit is sufficient for the rewinding.
Furthermore, since we do not know which $f_K(x)$ has exactly two different preimages, the postselection applied to the second copy of $\sum_x|x\rangle|f_K(x)\rangle$ needs to be adaptive, i.e., it depends on $f_K(x)$ obtained from the first copy.
On the other hand, a non-adaptive (i.e., non classically-controlled) rewinding operator is sufficient for solving the problem (with a constant probability).
Although there may be other ways to solve the problem by using a non-adaptive postselection of a single qubit, the above discussion may imply that the rewinding is superior to the postselection in some situations where the number of qubits to be rewound or postselected is restricted, and copies (i.e., $(\sum_x|x\rangle|f_K(x)\rangle)^{\otimes 2}$) are not processed collectively.

We also show a superiority of a single rewinding operator under a different assumption.
To this end, we use the statistical difference (SD) problem, which is ${\sf SZK}$-complete~\cite{SV03}, and show the following theorem:
\begin{definition}[Statistical Difference Problem~\cite{SV03}]
\label{SDP}
Given classical descriptions of two Boolean circuits $C_0,C_1:\{0,1\}^n\rightarrow\{0,1\}^m$ with natural numbers $n$ and $m$, let $P_0$ and $P_1$ be distributions of $C_0(x)$ and $C_1(x)$ with uniformly random inputs $x\in\{0,1\}^n$, respectively.
Decide whether $D_{\rm TV}(P_0,P_1)<2^{-O(n^c)}$ or $D_{\rm TV}(P_0,P_1)>1-2^{-O(n^c)}$ for some positive constant $c$, where $D_{\rm TV}(\cdot,\cdot)$ is the total variation distance.
\end{definition}

\begin{theorem}
\label{newlemma}
The SD problem defined in Def.~\ref{SDP} is in ${\sf RwBQP}(1/2-2^{-O(n^c)},2\cdot2^{-O(n^c)})(1)$, where $n$ and $c$ are the problem size and some positive constant as defined in Def.~\ref{SDP}.
\end{theorem}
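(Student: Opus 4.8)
The plan is to adapt the ``two i.i.d.\ samples from a single state'' idea behind ${\sf SZK}\subseteq{\sf PDQP}$ in \cite{ABFL16}, replacing the non-collapsing measurement there by one application of the rewinding operator $R$. I take the convention that the {\sf yes} instances of the SD problem are the close ones ($D_{\rm TV}(P_0,P_1)<2^{-O(n^c)}$); the other labelling follows from Corollary~\ref{corollary}. Since $C_0,C_1$ are classical circuits, I would first make each reversible and prepare, in exact quantum polynomial time from the all-zero state,
\[
|\Psi\rangle\equiv\frac{1}{\sqrt{2}}\sum_{b\in\{0,1\}}|b\rangle_{\rm sel}\otimes\frac{1}{\sqrt{2^n}}\sum_{x\in\{0,1\}^n}|x\rangle_A|C_b(x)\rangle_B,
\]
by placing the selector qubit in $|+\rangle$, register $A$ in a uniform superposition, and computing $C_0$ or $C_1$ into register $B$ conditioned on the selector (with garbage uncomputed). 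With the gate set $\{X,CH,CCZ\}\cup\{H_k\}$ this preparation is exact, so no approximation error enters.

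Next I would measure register $B$ in the computational basis, obtaining an outcome $y$ with probability $P(y)=\tfrac12(P_0(y)+P_1(y))$. Writing $N_b(y)\equiv|\{x:C_b(x)=y\}|=2^nP_b(y)$, the post-measurement pure state $|\Psi_y\rangle$ has selector amplitudes proportional to $\sqrt{N_b(y)}$, so a computational-basis measurement of the selector yields $0$ with probability $p_0(y)\equiv N_0(y)/(N_0(y)+N_1(y))$. I measure the selector once to get a bit $b_1$, apply the single rewinding operator $R$ to restore $|\Psi_y\rangle$, and measure the selector again to get $b_2$; the circuit then sets its output qubit to $b_1\oplus b_2$. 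Because $R$ recovers the exact pre-measurement pure state, $b_1$ and $b_2$ are i.i.d.\ given $y$, and only one rewinding is used, matching ${\sf RwBQP}(\cdot,\cdot)(1)$.

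The analysis reduces to one identity. Using $2p_0(1-p_0)={\rm Pr}[b_1\neq b_2\mid y]$ and $N_b=2^nP_b$,
\[
{\rm Pr}[b_1\neq b_2]=\sum_y P(y)\bigl(2p_0(y)(1-p_0(y))\bigr)=\sum_y\frac{P_0(y)P_1(y)}{P_0(y)+P_1(y)},
\]
and, completing the square,
\[
\sum_y\frac{P_0(y)P_1(y)}{P_0(y)+P_1(y)}=\frac12-\frac14\sum_y\frac{(P_0(y)-P_1(y))^2}{P_0(y)+P_1(y)}.
\]
For completeness I would use $\frac{(P_0-P_1)^2}{P_0+P_1}\le|P_0-P_1|$, so the last sum is at most $2D_{\rm TV}(P_0,P_1)$, giving ${\rm Pr}[b_1\neq b_2]\ge\tfrac12-\tfrac12 D_{\rm TV}(P_0,P_1)\ge\tfrac12-2^{-O(n^c)}$ in the close case. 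For soundness I would use $\frac{P_0P_1}{P_0+P_1}\le\min(P_0,P_1)$, so ${\rm Pr}[b_1\neq b_2]\le\sum_y\min(P_0(y),P_1(y))=1-D_{\rm TV}(P_0,P_1)<2^{-O(n^c)}\le2\cdot2^{-O(n^c)}$ in the far case. These yield the stated parameters.

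The part needing the most care is the formal fit with Def.~\ref{Rewinding}: I must present the stage producing $|\Psi_y\rangle$ as $Q|0^{\cdots}\rangle/\|Q|0^{\cdots}\rangle\|$ where $Q$ is the preparation of $|\Psi\rangle$ followed by the $Z$-basis projector $|y\rangle\langle y|$ on $B$, so that the selector measurement and its undoing are exactly the single-qubit projector/$R$ pair of Eq.~(\ref{rewinddef}); the required description $\mathcal{D}$ is that of this $Q$, which is known once $y$ is recorded. One should also verify that retaining the classical bit $b_1$ after rewinding is legitimate (only the quantum disturbance is undone) and that register $B$ is measured but never postselected, so the acceptance probability is genuinely the $y$-average computed above.
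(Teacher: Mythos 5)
Your proposal is correct and follows essentially the same route as the paper's proof in Appendix~\ref{newH}: prepare $\frac{1}{\sqrt{2^{n+1}}}\sum_{b,x}|b\rangle|x\rangle|C_b(x)\rangle$, measure the output register to get $y$, measure the selector qubit, rewind once, measure again, and accept iff $b_1\neq b_2$, with $\Pr[b_1\neq b_2]=\sum_y P_0(y)P_1(y)/(P_0(y)+P_1(y))$ bounded on both sides via $D_{\rm TV}$. Your algebra (completing the square for completeness, $\frac{P_0P_1}{P_0+P_1}\le\min(P_0,P_1)$ for soundness) is a slightly cleaner packaging of the paper's bounds and in fact gives a marginally tighter soundness constant, but the argument is the same.
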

The proof is given in Appendix~\ref{newH}.
From Theorem~\ref{newlemma}, we obtain the following corollary:
\begin{corollary}
\label{new23}
Let ${\sf RwBQP}(1)\equiv\bigcup_{1/(c-s)\in{\rm poly}(|x|)}{\sf RwBQP}(c,s)(1)$ for the set ${\rm poly}(|x|)$ of all polynomial functions in the size $|x|$ of an instance $x$.
Then, ${\sf RwBQP}(1)\supset{\sf BQP}$ unless ${\sf BQP}\supseteq{\sf SZK}$.
\end{corollary}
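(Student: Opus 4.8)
The plan is to combine Theorem~\ref{newlemma} with the standard completeness property of ${\sf SZK}$ to derive the strict separation under the stated assumption. The key observation is that Theorem~\ref{newlemma} places the SD problem---which is ${\sf SZK}$-complete---inside ${\sf RwBQP}(1)$ with the specific gap parameters $(1/2 - 2^{-O(n^c)},\, 2\cdot 2^{-O(n^c)})$. First I would check that these parameters qualify the SD problem for membership in ${\sf RwBQP}(1)$ as defined in the corollary's statement: we need $1/(c-s)$ to be polynomial, i.e. the completeness-soundness gap to be at least an inverse polynomial. Since $c - s = 1/2 - 2^{-O(n^c)} - 2\cdot 2^{-O(n^c)} = 1/2 - O(2^{-O(n^c)})$, which is bounded below by a constant (close to $1/2$) for large $n$, the reciprocal of the gap is certainly polynomial. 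Hence the SD problem, and therefore the whole class ${\sf SZK}$ (by ${\sf SZK}$-completeness and closure of ${\sf RwBQP}(1)$-type classes under polynomial-time reductions), is contained in ${\sf RwBQP}(1)$.

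Next I would establish the separation by a standard contrapositive argument. Trivially ${\sf BQP} \subseteq {\sf RwBQP}(1)$, since a ${\sf BQP}$ circuit is simply a rewindable circuit that uses zero (hence at most one) rewinding operators and already enjoys a constant gap. So it remains to rule out equality. Suppose for contradiction that ${\sf RwBQP}(1) = {\sf BQP}$. Then from ${\sf SZK} \subseteq {\sf RwBQP}(1) = {\sf BQP}$ we would obtain ${\sf BQP} \supseteq {\sf SZK}$, which is exactly the excluded hypothesis. Therefore, unless ${\sf BQP} \supseteq {\sf SZK}$, the inclusion must be strict, giving ${\sf RwBQP}(1) \supset {\sf BQP}$.

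The one point requiring slight care---and what I expect to be the main subtlety rather than a genuine obstacle---is verifying that ${\sf SZK} \subseteq {\sf RwBQP}(1)$ follows cleanly from SD-membership. The class ${\sf RwBQP}(1)$ is defined as a union over problems with inverse-polynomial gaps, and ${\sf SZK}$-completeness of SD is under polynomial-time many-one (Karp) reductions. I would confirm that a Karp reduction from an arbitrary ${\sf SZK}$ language to SD can be absorbed into the deterministic Turing machine that generates the operator description $\tilde{\mathcal{D}}$ in Def.~\ref{BRQP}, so that the composed procedure is still a single-rewinding computation with an inverse-polynomial gap. Because the reduction is purely classical and precedes the quantum circuit, it does not consume any rewinding operator and does not degrade the gap beyond a polynomial factor; thus the composition stays within ${\sf RwBQP}(1)$. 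With this verified, the corollary follows immediately.
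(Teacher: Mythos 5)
Your proposal is correct and follows essentially the same route as the paper: the paper's own proof is the one-line contrapositive ``from Theorem~\ref{newlemma}, if ${\sf RwBQP}(1)\subseteq{\sf BQP}$ then ${\sf SZK}\subseteq{\sf BQP}$,'' which is exactly your argument with the details (constant gap of the SD instance, ${\sf SZK}$-completeness under Karp reductions, triviality of ${\sf BQP}\subseteq{\sf RwBQP}(1)$) left implicit. Your extra care about absorbing the reduction into the Turing machine generating $\tilde{\mathcal{D}}$ is a reasonable filling-in of those implicit steps, not a different approach.
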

\begin{proof}
From Theorem~\ref{newlemma}, if ${\sf RwBQP}(1)\subseteq{\sf BQP}$, then ${\sf SZK}\subseteq{\sf BQP}$.
\end{proof}

For example, by assuming that the decision version of SIVP, gapSIVP, is hard for universal quantum computation, Corollary~\ref{new23} implies that a single rewinding operator is sufficient to achieve a task that is intractable for universal quantum computation.
This is because the gapSIVP (with an appropriate parameter) is in ${\sf SZK}$~\cite{PV08}.
Therefore, Corollary~\ref{new23} shows the superiority of a single rewinding operator for promise problems, while Theorem~\ref{easy} shows it for the search problem.

As a simple observation, a single cloning operator should also be sufficient to exceed universal quantum computation.
This is because any problem in ${\sf PostBQP}$ is efficiently solvable with a single cloning operator.
By applying the cloning operator $C$ on a classical description of the output state
\begin{eqnarray}
\label{CPostBQP}
\frac{\left(I\otimes |1\rangle\langle 1|\otimes I^{\otimes n-2}\right)U_x|0^n\rangle}{\sqrt{{\rm Pr}[p=1]}}
\end{eqnarray}
of a ${\sf PostBQP}$ computation (see Def.~\ref{postbqpa}), the state in Eq.~(\ref{CPostBQP}) is prepared in one step.
Then, by measuring its first qubit in the computational basis, we can solve the ${\sf PostBQP}$ problem.

{\red\subsection{Rewindable Clifford Circuits}
\label{A}}
{\red As another observation}, we consider how useful rewinding operators are for restricted classes of quantum computation.
Particularly, {\red in this subsection,} we focus on Clifford circuits defined as follows:
\begin{definition}[Clifford Circuits]
\label{clifford}
Let $n$ be any natural number.
An $n$-qubit Clifford circuit $C_n$ is a quantum circuit such that
\begin{enumerate}
\item The initial state is $|0^n\rangle$.
\item An applied $n$-qubit unitary operator $U$ consists of elementary quantum gates chosen from $\{H,S,CZ\}$, where $H$ is the Hadamard gate, $S\equiv|0\rangle\langle 0|+i|1\rangle\langle 1|$, and $CZ$ is the $CZ$ gate.
\item All or a part of qubits in the final state $U|0^n\rangle$ are measured in the Pauli-$Z$ basis $\{|0\rangle,|1\rangle\}$.
\end{enumerate}
Particularly when $U$ consists of a polynomial number of elementary quantum gates in $n$, we say that the size of $C_n$ is the polynomial.
\end{definition}
In Definition~\ref{clifford}, the unitary operator $U$ is in the Clifford group that is defined as the group of unitary operators normalizing the Pauli group.
Here, $n$-qubit Pauli group ${\bf P}_n$ is defined as ${\bf P}_n\equiv\{e^{ik\pi/2}\bigotimes_{i=1}^n\sigma^{(j_i)}\ |\ k,j_i\in\{0,1,2,3\}\ {\rm for}\ 1\le i\le n\}$, where $\sigma^{(0)}=I$, $\sigma^{(1)}{\red=}X$, $\sigma^{(2)}=Z$, and $\sigma^{(3)}=iXZ$.
Therefore, we call $C_n$ a Clifford circuit.

It is well known that Clifford circuits are classically simulatable.
More formally, the Gottesman-Knill theorem~\cite{G99} states as follows:
\begin{theorem}
\label{GK}
Let $n$ be any natural number and $P_{S}$ be any rank-one $Z$-basis projector applied on qubits in the subset $S$ of $n$ qubits.
Any $n$-qubit polynomial-size Clifford circuit $C_n$ can be strongly simulated in classical polynomial time in $n$.
That is, for any $S$ and Clifford unitary $U$ in $C_n$, the output probability $\langle 0^n|U^\dag P_SU|0^n\rangle$ can be exactly computed in classical polynomial time in $n$.
\end{theorem}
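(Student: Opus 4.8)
The plan is to prove this via the stabilizer formalism, tracking the state in the Heisenberg picture. First I would represent the initial state $|0^n\rangle$ by its stabilizer group, generated by the $n$ mutually commuting Pauli operators $Z_1,\ldots,Z_n$, where $Z_j$ denotes Pauli-$Z$ on qubit $j$ and identity elsewhere. Each generator is stored as a binary \emph{check vector} of $2n$ bits recording its $X$- and $Z$-components, together with a single sign bit; the whole stabilizer tableau thus occupies $O(n^2)$ bits.

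Next I would propagate the Clifford unitary $U=g_L\cdots g_1$ gate by gate. Since each $g_\ell\in\{H,S,CZ\}$ normalizes the Pauli group, conjugation $P\mapsto g_\ell P g_\ell^\dag$ sends each stabilizer generator to another Pauli operator, and the induced transformation of the check vector and sign bit under $H$, $S$, and $CZ$ is given by fixed, explicit linear rules over $\mathbb{F}_2$ (together with a deterministic sign update). Applying these rules to all $n$ generators costs $O(n)$ time per gate, so after $L=\mathrm{poly}(n)$ gates we obtain, in polynomial time, the stabilizer tableau of $U|0^n\rangle$.

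Finally, writing $P_S=\bigotimes_{j\in S}|z_j\rangle\langle z_j|$ for the target bit string $z$, I would compute $\langle 0^n|U^\dag P_S U|0^n\rangle$ as the probability of observing $z$ when the qubits in $S$ are measured in the $Z$ basis on $U|0^n\rangle$, evaluating it by simulating the measurements one qubit at a time and multiplying the conditional probabilities. To measure $Z_j$, I would test, via Gaussian elimination over $\mathbb{F}_2$ on the check vectors, whether $Z_j$ commutes with every generator. If it does, then $\pm Z_j$ lies in the stabilizer group, the outcome is deterministic, and expressing $\pm Z_j$ as a product of generators reveals the forced value: this contributes a factor $1$ if it equals $z_j$ and $0$ otherwise (in which case the whole probability is $0$). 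If instead $Z_j$ anticommutes with some generator, the outcome is uniformly random, contributing a factor $\tfrac12$, and I would update the tableau by the standard rule---multiply every other anticommuting generator by the chosen anticommuting one, then replace the latter by $(-1)^{z_j}Z_j$---to reflect the post-measurement state. The product over all $j\in S$ is the desired probability.

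The main obstacle is the measurement step: correctly classifying each outcome as deterministic or random and, in the deterministic case, recovering the forced sign, since a single sign error flips a conditional probability between $0$ and $1$. This demands careful bookkeeping of the sign bits through both the gate propagation and the post-measurement updates. All such operations nonetheless reduce to linear algebra over $\mathbb{F}_2$ and constant-size sign arithmetic, each running in $\mathrm{poly}(n)$ time; multiplying the at most $|S|\le n$ conditional probabilities then yields the exact output probability $\langle 0^n|U^\dag P_S U|0^n\rangle$ in classical polynomial time.
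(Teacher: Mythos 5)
Your proposal is correct and takes essentially the same route as the paper, which also proves Theorem~\ref{GK} via the stabilizer formalism: tracking the generators $\{UZ_iU^\dag\}$ under Clifford conjugation and simulating each $Z$-basis measurement by distinguishing the deterministic case (all generators commute with $Z_i$, forced outcome read off from whether $\pm Z_i$ lies in the group) from the uniformly random case (replace an anticommuting generator by $(-1)^{o_i}Z_i$). Your version merely spells out the $\mathbb{F}_2$ check-vector bookkeeping that the paper leaves implicit.
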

Simply speaking, Theorem~\ref{GK} has been shown by using the stabilizer formalism.
The initial state $|0^n\rangle$ is a unique common $+1$ eigenstate of $\langle\{Z_i\}_{i=1}^n\rangle$, where $Z_i$ is the Pauli-$Z$ operator applied on the $i$th qubit.
Here, $\langle\cdot\rangle$ means that $\{Z_i\}_{i=1}^n$ are $n$ independent generators of a stabilizer group. 
The time evolution from $|0^n\rangle$ to $u_1|0^n\rangle$ with an elementary gate $u_1$ can be simulated by replacing $\langle\{Z_i\}_{i=1}^n\rangle$ with $\langle\{u_1Z_iu_1^\dag\}_{i=1}^n\rangle$.
Since stabilizer groups are abelian subgroups of the Pauli group and $u_1$ is in the Clifford group, this replacement can be done in classical polynomial time.
By repeating a similar procedure for other elementary gates, we can uniquely represent $U|0^n\rangle$ by $\langle\{UZ_iU^\dag\}_{i=1}^n\rangle$.
When the outcome of the Pauli-$Z$ measurement on the $i$th qubit is $o_i\in\{0,1\}$, the measurement can be simulated by replacing a generator\footnote{Since the stabilizer group is a subgroup of the Pauli group, each generator commutes or anticommutes with $Z_i$. Furthermore, if there exist two or more generators that anticommute with $Z_i$, we can reduce the number to one by multiplying them.} that anticommutes with $Z_i$ with $(-1)^{o_i}Z_i$.
This replacement can also be done in classical polynomial time.
Furthermore, it can be easily shown that the probability of obtaining $o_i$ is $1/2$ for any $o_i\in\{0,1\}$.
Note that if all generators commute with $Z_i$, the measurement does not change the generators.
In this case, the stabilizer group certainly contains $Z_i$ or $-Z_i$, and hence the measurement outcome is definitely $0$ or $1$, respectively.
It is possible to check which, $Z_i$ or $-Z_i$, is contained in the stabilizer group in classical polynomial time.
In short, since $n$ generators of any stabilizer group can be represented as a polynomial-size bit string $y$, Clifford circuits can be efficiently simulated by classically processing $y$.

We show that this classical simulatability holds even if rewinding operators are supplied:
\begin{theorem}
\label{cliffordrew}
Let $n$ be any natural number, $P_S$ be any rank-one $Z$-basis projector applied on qubits in the subset $S$ of $n$ qubits, and $Q_{\rm Clifford}$ be any $n$-qubit operator composed of a polynomial number of $\{H,S,CZ\}$, single-qubit measurements in the computational basis, and the (classically controlled) rewinding operator $R$.
Let $Q_{\rm Clifford}^{(z)}$ be the same as $Q_{\rm Clifford}$ except for that the $i$th measurement is replaced with $|z_i\rangle\langle z_i|$ for all $i$, and $z_i$ is the $i$th bit of $z$.
For any $z$ and $S$, the output probability $\langle 0^n|Q_{\rm Clifford}^{(z)\dag} P_SQ_{\rm Clifford}^{(z)}|0^n\rangle$ can be exactly computed in classical polynomial time in $n$.
\end{theorem}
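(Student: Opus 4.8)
The plan is to run the stabilizer simulation behind Theorem~\ref{GK}, augmented with a rule for the rewinding operator. Because $z$ is fixed, the whole execution of $Q_{\rm Clifford}^{(z)}$ is deterministic: each single-qubit measurement is the projector $|z_i\rangle\langle z_i|$, and since every classically controlled rewinding is conditioned only on measurement outcomes, it is already determined from $z$ which rewinding operators fire and what classical description $\mathcal{D}$ each one receives. Hence the task reduces to simulating a fixed sequence of operations, each being a Clifford gate in $\{H,S,CZ\}$, a computational-basis projector, or a rewinding operator $R$. I would prove by induction on this sequence that the vector produced so far equals $c\,|\phi\rangle$ for an efficiently computable scalar $c\ge 0$ and a stabilizer state $|\phi\rangle$, maintaining a stabilizer tableau for $|\phi\rangle$ together with the running scalar $c$.

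For the two standard kinds of operation this is exactly Gottesman--Knill. A Clifford gate updates the tableau in polynomial time and leaves $c$ unchanged. A projector $|z_i\rangle\langle z_i|$ either annihilates $|\phi\rangle$ (in which case the whole quantity is $0$) or, by the usual measurement rule for stabilizer states, produces a new stabilizer state whose conditional probability is $1$ or $1/2$; I then update the tableau and multiply $c$ by $1$ or $1/\sqrt{2}$ accordingly. Each such step is polynomial time, and $c^2$ stays a product of $1$'s and $1/2$'s.

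The only genuinely new step is the rewinding operator, and this is where I expect the crux to lie. The key observation is that the state $R$ restores is $Q|0^n\rangle$ up to normalization, where $Q$ is the operator named by $\mathcal{D}$; in the Clifford-with-rewinding setting $Q$ is built solely from Clifford unitaries and $Z$-basis projectors, so $Q|0^n\rangle$ (when nonzero) is a stabilizer state, because $|0^n\rangle$ is a stabilizer state and both Clifford gates and nonzero computational-basis projections map stabilizer states to stabilizer states. Consequently I can compute its tableau in polynomial time directly from $\mathcal{D}$, or equivalently retrieve the tableau I stored at the corresponding pre-measurement point. Since $R$ sends the \emph{normalized} post-measurement vector to the \emph{normalized} pre-measurement vector, applying it to $c\,|\phi\rangle$ yields $c$ times the restored state, so the inductive invariant is preserved by replacing $|\phi\rangle$ with the restored stabilizer state while leaving $c$ unchanged. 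The point to check carefully is precisely this one---that rewinding never leaves the efficiently representable stabilizer states and that $c$ is updated consistently with the normalization built into Definition~\ref{Rewinding} (so that $c^2$ correctly accumulates the probabilities of all measurements, including those later rewound)---but no difficulty beyond the standard formalism arises.

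After processing every operation, the final vector is $c\,|\phi_{\rm final}\rangle$ with $|\phi_{\rm final}\rangle$ a stabilizer state, so
\begin{eqnarray}
\langle 0^n|Q_{\rm Clifford}^{(z)\dag}P_SQ_{\rm Clifford}^{(z)}|0^n\rangle=c^2\left|\left|P_S|\phi_{\rm final}\rangle\right|\right|^2.
\end{eqnarray}
The right-hand factor is the joint probability of the $Z$-basis outcomes fixed by the rank-one projector $P_S$ in the stabilizer state $|\phi_{\rm final}\rangle$, which the stabilizer formalism evaluates in polynomial time; multiplying by the already-computed $c^2$ yields the value in classical polynomial time, completing the proof.
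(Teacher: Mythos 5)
Your proposal is correct and follows essentially the same route as the paper: simulate the fixed-outcome circuit with the stabilizer formalism, store (duplicate) the pre-measurement tableau, and implement each rewinding by restoring that stored tableau. Your explicit bookkeeping of the scalar $c$ accumulating the square roots of the measurement probabilities is a slightly more careful rendering of what the paper leaves implicit in its statement that each simulated measurement outcome occurs with probability $1/2$ (or deterministically), and it correctly reflects the normalization convention of Definition~\ref{Rewinding}.
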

\begin{proof}
From the stabilizer formalism explained above, we can simulate Clifford unitary operators and Pauli-$Z$ measurements with outcomes $\{z_i\}_i$ in classical polynomial time by processing polynomial-length bit strings.
Therefore, the remaining task is to show that the rewinding operator applied on a quantum state generated by Clifford unitary operators and Pauli-$Z$ measurements can also be simulated in classical polynomial time.

Suppose that we now have an $n$-qubit state $Q|0^n\rangle$ (up to normalization), where $Q$ is an $n$-qubit linear operator composed of a polynomial number of Clifford unitary operators and Pauli-$Z$ measurements.
Then, we measure its first qubit in the $Z$ basis and obtain the post-measurement state $(|z\rangle\langle z|\otimes I^{\otimes n-1})Q|0^n\rangle$ (up to normalization) with $z\in\{0,1\}$.
Note that since the SWAP gate is a Clifford unitary, without loss of generality, we can assume that the first qubit is measured.
Let $a\in\{0,1\}$.
Finally, if $z=a$, we apply the rewinding operator $R$ on the post-measurement state.
Otherwise, we do nothing.
For any $z,a\in\{0,1\}$, the above procedures can be classically simulated in the following way:
\begin{enumerate}
\item By using the stabilizer formalism, represent $Q|0^n\rangle$ by a polynomial-length bit string $y$.
\item From $y$, make a duplicate $y'$.
\item Simulate the Pauli-$Z$ measurement on $Q|0^n\rangle$ by replacing $y$ with another polynomial-length bit string $y_z$.
\item When $z=a$, replace $y_z$ with the duplicate $y'$.
Otherwise, do nothing.
\end{enumerate}
The above argument can be used even if the rewinding operator is conditioned on a number of bits or is not classically controlled.
\end{proof}

Since Clifford circuits are classically simulatable even under postselection, Theorem~\ref{cliffordrew} also implies an equivalence between the postselection and rewinding.

{\red\subsection{Rewindable IQP Circuits}
\label{B}}
In previous sections, we have considered the effect of the rewinding on universal quantum computation and classically simulatable quantum computation.
How about sub-universal quantum computing models that are (believed to be) neither universal nor classically simulatable?
Particularly, we consider IQP circuits~\cite{BJS11} as a sub-universal quantum computing model.
They are defined by replacing Clifford unitary operators $U$ in Clifford circuits with unitary operators $E$ diagonalizable in the Pauli-$X$ basis $\{|+\rangle,|-\rangle\}$:
\begin{definition}[IQP Circuits~\cite{BJS11}]
\label{IQP}
Let $n$ be any natural number.
$n$-qubit IQP circuits are quantum circuits such that
\begin{enumerate}
\item The initial state is $|0^n\rangle$.
\item An applied $n$-qubit unitary operator is $E\equiv H^{\otimes n}DH^{\otimes n}$, where D is a unitary operator diagonalizable in the Pauli-$Z$ basis $\{|0\rangle,|1\rangle\}$.
\item All or some of the qubits in the final state $E|0^n\rangle$ are measured in the Pauli-$Z$ basis.
\end{enumerate}
\end{definition}
In particular, we consider the case that $D$ consists of a polynomial number of elementary gates chosen from $\{R_z(\theta),CZ\ |\ 0\le\theta<2\pi\}$, where $R_z(\theta)\equiv|0\rangle\langle0|+e^{i\theta}|1\rangle\langle 1|$.
Since all elementary gates commute with each other, we can represent $D$ by $(\bigotimes_{i=1}^nR_z(\theta_i))D_{CZ}$, where $D_{CZ}$ is a unitary operator composed of only $CZ$'s.

When $D_{CZ}$ satisfies some property mentioned later, the IQP circuits can be regarded as measurement-based quantum computation (MBQC)~\cite{RB01}.
To define MBQC, we first introduce the brickwork state as follows:
\begin{definition}[Brickwork State~\cite{BFK09}]
Let $n$ and $m$ be natural numbers.
A brickwork state $|B\rangle$ is an $nm$-qubit entangled state with $m\equiv 5\ ({\rm mod}\ 8)$ that is constructed as follows:
\begin{enumerate}
\item Prepare $|+\rangle^{\otimes nm}$ and assign an index $(i,j)$ to each qubit, where $|+\rangle\equiv(|0\rangle+|1\rangle)/\sqrt{2}$, $1\le i\le n$, and $1\le j\le m$.
\item For any $1\le i\le n$ and $1\le j\le m-1$, apply $CZ$ to two qubits labeled by $(i,j)$ and $(i,j+1)$.
\item For any odd $i$ and $j\equiv 3\ ({\rm mod}\ 8)$, apply $CZ$ to two qubits labeled by $(i,j)$ and $(i+1,j)$ or $(i,j+2)$ and $(i+1,j+2)$.
\item For any even $i$ and $j\equiv 7\ ({\rm mod}\ 8)$, apply $CZ$ to two qubits labeled by $(i,j)$ and $(i+1,j)$ or $(i,j+2)$ and $(i+1,j+2)$.
\end{enumerate}
\end{definition}
We define MBQC by using the brickwork state $|B\rangle$ as follows:
\begin{definition}[MBQC]
\label{MBQC}
Let $m\equiv 5n\ ({\rm mod}\ 8)$ for any natural number $n$.
MBQC is a quantum computing model proceeding as follows:
\begin{enumerate}
\item Prepare the $m$-qubit brickwork state $|B\rangle$.
\item Measure each qubit in $|B\rangle$ one by one in the basis $\{|+_\theta\rangle,|+_{\theta+\pi}\rangle\}$, where $|+_\theta\rangle\equiv(|0\rangle+e^{i\theta}|1\rangle)/\sqrt{2}$ for any $\theta\in\mathbb{R}$.
Each measurement basis (i.e., the value of $\theta$) is calculated from all the previous measurement outcomes in classical polynomial time.
\end{enumerate}
\end{definition}
The universality of this computing model has been shown in \cite{BFK09}.

From Defs.~\ref{IQP} and \ref{MBQC}, we notice that when $D_{CZ}(|+\rangle^{\otimes n})=|B\rangle$, the measurement basis for the $i$th qubit is $\{|+_{\theta_i}\rangle,|+_{\theta_i+\pi}\rangle\}$, and all measurement bases do not depend on previous measurement outcomes, MBQC becomes an IQP circuit.
In short, (important subclasses of) IQP circuits are MBQC with non-adaptive single-qubit measurements.

Suppose that we would like to perform quantum computation by measuring all qubits in some $n$-qubit state $|\psi\rangle$ in the Pauli-$Z$ basis.
To simulate this quantum computation by MBQC, we prepare the $m\ (\ge n)$-qubit brickwork state $|B\rangle$ whose qubits can be divided into two sets, $M$ and $O$, each of which includes $(m-n)$ and $n$ qubits, respectively.
By measuring all qubits in the set $M$ in appropriate bases, the state of $n$ qubits in the set $O$ becomes $H^{\otimes n}|\psi\rangle$ (up to a byproduct operator\footnote{Since byproduct operators are tensor products of Pauli operators, they can be efficiently removed by classical postprocessing.}).
Therefore, by measuring all qubits in $O$ in the Pauli-$X$ basis, we can simulate the target quantum computation.
As an important point, for any single-qubit measurement on any qubit in the set $M$, the probability of the outcome being $0$ (corresponding to $|+_\theta\rangle\langle +_\theta|$) is exactly $1/2$, while it may not be the case for qubits in the set $O$.
Furthermore, when all outcomes of qubits in the set $M$ are $0$, adaptive measurements are not necessary.
These properties are immediately obtained from the gate teleportation (see e.g., Fig. 1 in \cite{MDF17}) and will be used to show that if rewinding operators are given, then universal quantum computation is possible by MBQC with non-adaptive single-qubit measurements.

We show that polynomial-size IQP circuits can solve any ${\sf PP}$ problem if a polynomial number of rewinding operators are supplied.
To formally state this, we define rewindable IQP circuits as follows:
\begin{definition}[Rewindable IQP Circuits]
\label{rewIQP}
Let $n$ be any natural number.
$n$-qubit rewindable IQP circuits are quantum circuits such that
\begin{enumerate}
\item The initial state is $|0^n\rangle$.
\item An applied $n$-qubit unitary operator is $E\equiv H^{\otimes n}DH^{\otimes n}$, where D is a unitary operator diagonalizable in the Pauli-$Z$ basis $\{|0\rangle,|1\rangle\}$.
\item All or some of the qubits in the final state $E|0^n\rangle$ are measured in the Pauli-$Z$ basis.
For each qubit, if the outcome\footnote{Without loss of generality, we can assume that an undesirable outcome is $1$ because the bit-flip operation can be performed in IQP circuits.} is $1$, it is allowed to apply a rewinding operator and then perform the measurement again and again.
\end{enumerate}
Particularly when $D$ consists of a polynomial number of elementary gates diagonalizable in the Pauli-$Z$ basis and the number of uses of rewinding operators grows at most polynomially with $n$, we say that $n$-qubit rewindable IQP circuits are polynomial size.
\end{definition}

By using Def.~\ref{rewIQP}, our result is formalized as follows:
\begin{theorem}
\label{rewindIQP}
Let $L=(L_{\rm yes},L_{\rm no})\subseteq\{0,1\}^\ast$ be a promise problem in ${\sf PP}$.
Then, there exists a uniform family of polynomial-size rewindable IQP circuits that decides $x\in L_{\rm yes}$ or $x\in L_{\rm no}$ with an exponentially small error probability for a given instance $x$.
\end{theorem}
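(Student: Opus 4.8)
The plan is to show that a rewindable IQP circuit can run the very ${\sf RwBQP}$ computation that solves the ${\sf PP}$-complete counting problem in the proof of Theorem~\ref{lemma2}; combined with Corollary~\ref{corollary2}, which drives the error exponentially small, this gives the theorem. The simulation rests on two ingredients prepared earlier in this appendix. First, every unitary portion of that computation is compiled into MBQC on a brickwork state $|B\rangle$ (Def.~\ref{MBQC}). Since $|B\rangle$ is produced from $|0^{nm}\rangle$ by $H^{\otimes nm}$ followed by a network $D_{CZ}$ of $CZ$ gates, and since an MBQC measurement of qubit $i$ in the basis $\{|+_{\theta_i}\rangle,|+_{\theta_i+\pi}\rangle\}$ is reproduced by placing the diagonal rotation $R_z(-\theta_i)$ in $D$ and reading out in the Pauli-$Z$ basis after the terminal $H^{\otimes nm}$, the whole pattern has exactly the IQP shape $H^{\otimes nm}DH^{\otimes nm}$ of Def.~\ref{rewIQP} with $D=(\bigotimes_i R_z(-\theta_i))D_{CZ}$ built from $\{R_z(\theta),CZ\}$.

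To meet the \emph{non-adaptivity} demanded of an IQP circuit, I would split the qubits of $|B\rangle$ into computation qubits $M$ and output qubits $O$ and use the rewinding permitted in Def.~\ref{rewIQP} to force every outcome in $M$ to $0$: measure a qubit of $M$ at its default angle, and on outcome $1$ apply $R$ and remeasure. Each such outcome is $0$ with probability exactly $1/2$, so a loop of $O(n)$ rewinds pins it to $0$ except with probability $2^{-\Omega(n)}$; a union bound over the polynomially many qubits of $M$ keeps the cumulative failure exponentially small and the number of rewinding operators polynomial. Once all outcomes in $M$ are $0$, no byproduct operator appears, every measurement angle is fixed in advance, and the pattern is a genuine (non-adaptive) IQP circuit whose output register $O$ carries $H^{\otimes n}|\psi\rangle$ for the intended logical state $|\psi\rangle$.

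It remains to reproduce the measurement-and-rewind loops of the Theorem~\ref{lemma2} computation: the repeat-until-$0^n$ state preparation (Appendix~\ref{I}), the mitigation protocol, and the final read-out all have the form ``measure one qubit, and on the undesirable outcome apply $R$ and repeat.'' Each of these is a single-qubit Pauli-$Z$ measurement on a qubit of $O$, so after a bit flip (available by the footnote to Def.~\ref{rewIQP}, making the undesirable outcome $1$) it is exactly the rewinding that a rewindable IQP circuit may perform. Thus the state preparations are carried out by the non-adaptive MBQC above and every rewinding is carried out natively, so the whole ${\sf RwBQP}$ computation runs as a polynomial-size rewindable IQP circuit. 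Multiplying the two exponentially-close-to-one success probabilities yields the required exponentially small error.

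The step I expect to be the main obstacle is making the non-adaptivity airtight: I must check that forcing the $M$-outcomes to $0$ genuinely leaves every measurement angle independent of earlier outcomes---so the circuit lies inside Def.~\ref{rewIQP} rather than being an adaptive computation in disguise---and that the measurement order, together with the interleaving of the two kinds of rewinding (forcing $M$ to $0$ versus the logical rewindings on $O$), composes while keeping the total rewind count polynomial.
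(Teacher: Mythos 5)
Your first ingredient (compiling the unitary part into non-adaptive MBQC on a brickwork state and using rewinding to force every outcome in $M$ to $0$) is exactly the paper's first step, and it is sound. The gap is in the second half: you propose to run the ${\sf RwBQP}$ algorithm of Theorem~\ref{lemma2} ``natively'' inside the rewindable IQP circuit, claiming that its loops all have the form ``measure one qubit, rewind on the bad outcome, repeat.'' They do not. Definition~\ref{rewIQP} permits \emph{no unitaries after measurements begin} --- only a fixed $E=H^{\otimes n}DH^{\otimes n}$ followed by measure-and-rewind on individual qubits --- whereas the Theorem~\ref{lemma2} computation interleaves measurements with subsequently applied gates: each round of the mitigation protocol attaches a fresh ancilla and applies a new controlled gate to produce Eq.~(\ref{prepare}) from Eq.~(\ref{i+1}); the repeat-until-$0^n$ preparation of Appendix~\ref{I} re-runs a unitary after each failed $n$-qubit measurement; and Aaronson's decision procedure applies further gates to the prepared copies of $|\phi_{\beta/\alpha}\rangle$. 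This is precisely the obstruction the paper itself flags when it explains why it should be hard to show that ${\sf RwBQP}$ problems in general are solvable by rewindable IQP circuits: deferred measurement must still hold even with rewinding. You name this as ``the main obstacle'' but do not resolve it, and the resolution is the substantive content of the proof rather than a routine check.

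The paper's route avoids the obstacle by \emph{not} simulating the ${\sf RwBQP}$ algorithm. Instead it starts from the ${\sf PostBQP}$ circuit $U_x$ for the ${\sf PP}$ problem (a single unitary, a single terminal postselection, a single output measurement), and batches the entire mitigation into the up-front unitary: it prepares $\bigl(\prod_{j=1}^{q}CH_{2,n+j}\bigr)\bigl[U_x|0^n\rangle\otimes|+\rangle^{\otimes q}\bigr]$ as Eq.~(\ref{IQPPP}), so that everything occurring after the unitary is measure-and-rewind --- forcing the $q$ ancillas to $0$, forcing the postselection qubit to $1$, then reading the output --- which is exactly what Def.~\ref{rewIQP} allows; the exponentially small error then comes from amplification inside ${\sf PostBQP}$, not from Corollary~\ref{corollary2}. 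To repair your argument you would need to show that, because every intermediate outcome is forced to a fixed value, all of the adaptively applied unitaries become deterministic and commute past the relevant projectors, so they can be moved in front of all measurements; carried out, this restructuring essentially reproduces the paper's construction.
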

\begin{proof}
First, we show that polynomial-size rewindable IQP circuits can prepare output states of polynomial-size quantum circuits in any uniform family.
To this end, we use the relation between IQP circuits and MBQC with non-adaptive single-qubit measurements explained above.
Due to the relation, our purpose is to show that if a polynomial number of rewinding operators are given, then we can perform universal quantum computation (with an exponentially small error) by MBQC with non-adaptive single-qubit measurements.
Let $|\psi\rangle$ be any $n$-qubit quantum state that can be generated in quantum polynomial time.
From \cite{BFK09}, $|\psi\rangle$ can also be generated in quantum polynomial time by using the $m\ (\ge n)$-qubit brickwork state (see Def.~\ref{MBQC}) with $m$ being a polynomial in $n$. 
As explained above, $m$ qubits can be divided into a set $M$ of $(m-n)$ qubits and a set $O$ of the remaining $n$ qubits.
When all measurement outcomes in the set $M$ are $0$, the $n$-qubit state in the set $O$ exactly becomes $|\psi\rangle$ without adaptive measurements.
This implies that to achieve our purpose, it is sufficient to show that we can make all measurement outcomes in the set $M$ $0$'s with probability exponentially close to $1$ by using a polynomial number of rewinding operators.
Due to the property that the probability of $0$ being output is exactly $1/2$ for the measurements\footnote{From Def.~\ref{MBQC}, we consider only single-qubit measurements in the $x$-$y$ plane of the Bloch sphere.} of all qubits in $M$, we can make it $1-2^{-(m-n)}$ in the following way:
\begin{enumerate}
\item Let $c=0$.
\item Suppose that we now have $|\psi\rangle$.
Measure its first qubit (in $M$) in the basis
\begin{eqnarray}
\{|+_\theta\rangle\langle +_\theta |,|+_{\theta+\pi}\rangle\langle +_{\theta+\pi}|\}
\end{eqnarray}
with $\theta\in\mathbb{R}$.
\item Depending on the measurement outcome and the value of $c$, perform one of following steps:
\begin{enumerate}
\item[{\red(a)}] If the measurement outcome is $0$ (corresponding to $|+_\theta\rangle\langle +_\theta |$), then halt the protocol.
\item[{\red(b)}] If it is $1$ (corresponding to $|+_{\theta+\pi}\rangle\langle +_{\theta+\pi}|$) and $c<m-n$, replace $c$ with $c+1$, apply the rewinding operator $R$ on the post-measurement state, and perform step $2$ by replacing $\{|+_\theta\rangle\langle +_\theta |,|+_{\theta+\pi}\rangle\langle +_{\theta+\pi}|\}$ with the $Z$ basis.
\item[{\red(c)}] Otherwise, halt the protocol.
\end{enumerate}
\end{enumerate}
The reason we replace the measurement basis in step 2 with the $Z$ basis in step 3(b) comes from Def.~\ref{Rewinding}.
Since our rewinding operator works for only Pauli-$Z$ measurements, when we choose the measurement basis $\{|+_\theta\rangle\langle +_\theta |,|+_{\theta+\pi}\rangle\langle +_{\theta+\pi}|\}$ in step 2, the state after the rewinding is
\begin{eqnarray}
\label{rotate}
H_1R_{z,1}(-\theta)|\psi\rangle,
\end{eqnarray}
where $R_z(\phi)\equiv|0\rangle\langle 0|+e^{i\phi}|1\rangle\langle1|$ for any $\phi\in\mathbb{R}$, and the subscript number represents the qubit on which the gate is applied.
Therefore, the measurement of $|\psi\rangle$ in the basis $\{|+_\theta\rangle\langle +_\theta |,|+_{\theta+\pi}\rangle\langle +_{\theta+\pi}|\}$ can be simulated by measuring the state in Eq.~(\ref{rotate}) in the $Z$ basis.

In the case of step $3(a)$, it is obvious that our purpose is achieved, i.e., the above protocol succeeds.
On the other hand, in the case of step $3(c)$, the protocol fails to obtain the outcome $0$, but it happens with the exponentially small probability $2^{-(m-n)}$.
 Therefore, the probability of the above protocol succeeding for all qubits in $M$ is $(1-2^{-(m-n)})^{m-n}$, which is exponentially close to one.

Second, by using the above argument, for any instance $x$ of the ${\sf PostBQP}$ problem, we can prepare
\begin{eqnarray}
\label{IQPPP}
\left(\prod_{j=1}^qCH_{2,n+j}\right)\left[U_x|0^n\rangle\otimes\left(\bigotimes_{i=n+1}^{n+q}|+\rangle_i\right)\right]
\end{eqnarray}
as an output state of the polynomial-size rewindable IQP circuit.
Here, $CH_{2,n+j}$ is the controlled-Hadamard gate whose control and target qubits are the second and $(n+j)$th ones, respectively, for $1\le j\le q$.
For the definitions of $U_x$, $n$, and $q$, see Def.~\ref{postbqpa}.
For the convenience, let $U_x|0^n\rangle=\alpha|0\rangle_2|\psi_0\rangle+\beta|1\rangle_2|\psi_1\rangle$ for some $(n-1)$-qubit states $|\psi_0\rangle$ and $|\psi_1\rangle$ and complex numbers $\alpha$ and $\beta$ such that $|\alpha|^2+|\beta|^2=1$.
According to our mitigation protocol, when the last $q$ qubits in Eq.~(\ref{IQPPP}) are measured in the computational basis, and the measurement outcome is $0^q$, we obtain $|\phi\rangle=\alpha'|0\rangle_2|\psi_0\rangle+\beta'|1\rangle_2|\psi_1\rangle$ such that $\beta'/\alpha'={\red\sqrt{2^q}}\beta/\alpha$ (see also Fig.~\ref{mitigationpara}).
From Appendix~\ref{D}, we can efficiently obtain the outcome $0^q$ with probability exponentially close to one by using a polynomial number of rewinding operators.
From $|\alpha|^2+|\beta|^2=1$ and Def.~\ref{postbqpa}, ${\red|\beta'/\alpha'|}\ge1$, and hence the postselection onto $|1\rangle_2$ succeeds with probability of at least $1/2$.
By using the rewinding operators again, this postselection probability is increased to $1-o(1)$.
Finally, the ${\sf PostBQP}$ problem is solved by measuring the first qubit of $|\psi_1\rangle$ in the computational basis.
The proof is completed from ${\sf PostBQP}={\sf PP}$ and the fact that the probability amplification is possible in {\sf PostBQP}~\cite{A05}.
\end{proof}

 \begin{figure}[t]
\begin{center}
\includegraphics[width=16.5cm, clip]{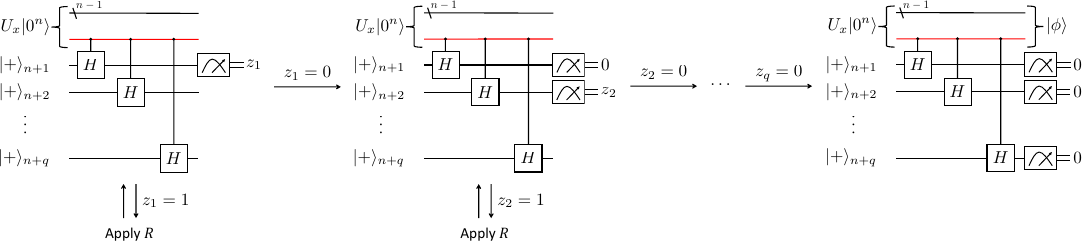}
\end{center}
\caption{Quantum circuit diagram of our mitigation protocol.
$R$ represents the rewinding operator, and $|\phi\rangle=\alpha'|0\rangle_2|\psi_0\rangle+\beta'|1\rangle_2|\psi_1\rangle$.
The red wire represents the second qubit of $U_x|0^n\rangle$ that will be postselected.}
\label{mitigationpara}
\end{figure}

It is worth mentioning why it should be hard or unlikely to show that ${\sf RwBQP}$ problems are solvable with rewindable IQP circuits.
In ${\sf RwBQP}$ computations, measurements and rewinding are allowed to be performed during the computations.
On the other hand, they are performed at the end of the computations in rewindable IQP circuits.
This implies that the principle of deferred measurement~\cite{NC00} is necessary to be held even when the rewinding operators are supplied.

Here, we would like to emphasize that our rewinding operator seems to not violate the no-signaling principle.
Although one may think that Theorem~\ref{rewindIQP} contradicts the no-signaling principal from the result in \cite{M14}, this is not the case.
\cite{M14} shows that if MBQC can be realized without byproduct operators on an entangled state shared by Alice and Bob, then Alice can send her message to Bob without any signal.
To apply their argument to our rewindable IQP circuits (i.e., MBQC with the rewinding), Alice has to rewind her measurements without Bob's qubits.
However, this is impossible because our rewinding operator defined in Def.~\ref{Rewinding} only works for pure states.

From~\cite{BJS11}, IQP circuits can solve any ${\sf PP}$ problem in quantum polynomial time under postselection.
Therefore, Theorem~\ref{rewindIQP} may imply an equivalence between the postselection and rewinding.
On the other hand, our argument on rewinding operators seems to be not applicable to DQC1~\cite{MFF14}, which can also solve any ${\sf PP}$ problem under postselection.
This is because the input state of DQC1 includes maximally mixed states.
This simple example may imply that the rewinding is weaker than the postselection in some situations.

\section{Conclusion}
\label{conclusion}
We have defined ${\sf RwBQP}$ and ${\sf CBQP}$ by using rewinding and cloning operators, respectively, and ${\sf AdPostBQP}$ by using postselections and have shown ${\sf BPP}^{\sf PP}\subseteq{\sf RwBQP}={\sf CBQP}={\sf AdPostBQP}\subseteq{\sf PSPACE}$.
To this end, we have proposed a mitigation protocol that exponentially reduces amplitudes of nontarget states.
To tighten the lower bound on ${\sf RwBQP}$, it may be useful to consider whether ${\sf PP}^{\sf PP}\subseteq{\sf RwBQP}^{\sf RwBQP}_{{\rm classical}}$.
This inclusion cannot be straightforwardly shown from Theorem~\ref{lemma2} because this theorem does not necessarily mean that any process in the ${\sf PP}$ machine can be efficiently simulated with the ${\sf RwBQP}$ machine.

Our mitigation protocol also implies that a polynomial number of non-adaptive postselections of outputs whose probabilities are polynomially close to one are sufficient for solving any ${\sf PostBQP}$ problem.
Then, we have considered how much rewinding operators are useful for restricted classes of quantum computation.
Particularly, we have shown that Clifford circuits are still classically simulatable even if a polynomial number of rewinding operators are supplied.
On the other hand, the rewinding promotes IQP circuits, which are believed to be neither universal nor classically simulatable, so that they can efficiently solve any ${\sf PP}$ problem.
To this end, we have shown that if the rewinding is possible, then universal quantum computation can be efficiently performed by measurement-based quantum computation (MBQC)~\cite{RB01} without adaptive measurements.
We have also shown that a single rewinding operator is sufficient to achieve tasks that are intractable for quantum computation under the post-quantum cryptographic assumption and ${\sf BQP}\nsupseteq{\sf SZK}$.

Some of our results imply an equivalence between the postselection and rewinding for uniform families of quantum circuits with pure input states.
How about non-uniform families and quantum interactive proof systems?
So far, several advice complexity classes with ${\sf PostBQP}$ and quantum {\red non-interactive} proof systems with a ${\sf PostBQP}$ verifier have been explored.
For example, the following results are already known:
\begin{enumerate}
\item ${\sf BQP/qpoly}\subseteq{\sf PostBQP/poly}\neq{\sf ALL}$~\cite{A04}
\item ${\sf PostBQP/qpoly}={\sf ALL}$~\cite{A06}
\item ${\sf PostQMA}={\sf PSPACE}$~\cite{MN17}
\item ${\sf PostQCMA}={\sf NP}^{\sf PP}$~\cite{MN17}
\item ${\sf PostQMA(2)}={\sf NEXP}$~\cite{B09,K18}
\end{enumerate}
Here, ${\sf PostQMA}$, ${\sf PostQCMA}$, and ${\sf PostQMA(2)}$ are defined by replacing a ${\sf BQP}$ verifier with a ${\sf PostBQP}$ one in ${\sf QMA}$, ${\sf QCMA}$, and ${\sf QMA(2)}$, respectively\footnote{In \cite{MN17}, ${\sf PostQMA}$ and ${\sf PostQCMA}$ are denoted by ${\sf QMA}_{\sf PostBQP}$ and ${\sf QCMA}_{\sf PostBQP}$, respectively. However, we use the terminology given in \cite{UHB17}}.
It would be interesting to consider what happens if we replace ${\sf PostBQP}$ with ${\sf RwBQP}$ or ${\sf AdPostBQP}$ in above complexity classes.
Since some classical description is necessary to do our rewinding, while it is not for the postselection, a difference between the postselection and rewinding may be observed in the settings of $2$, $3$, and $5$.
As another direction, it would be interesting to precisely characterize ${\sf RwBQP}(1)$ and investigate the computational capability of rewindable boson sampling~\cite{AA11,LLKROR14,HKSBSJ17} by defining a bosonic rewinding operator that inverts the Fock-basis measurement in a single mode.
It is also still open whether ${\sf RwBQP}$ is closed under composition even if quantum queries are allowed.

\section*{Acknowledgments}
We thank Yasuhiro Takahashi, Yusuke Aikawa, and Mikito Nanashima for helpful discussions.
We also thank Tomoyuki Morimae for fruitful discussions and pointing out \cite{ABFL14,ABFL16} to us.
YT is supported by the MEXT Quantum Leap Flagship Program (MEXT Q-LEAP) Grant Number JPMXS0118067394 and JPMXS0120319794 and JST [Moonshot R\&D -- MILLENNIA Program] Grant Number JPMJMS2061.
AM is supported by JST, ACT-X Grant Number JPMJAX210O, Japan.
ST is supported by the Grant-in-Aid for Transformative Research Areas No.JP20H05966 of JSPS, JST [Moonshot R\&D -- MILLENNIA Program] Grant Number JPMJMS2061, and the Grant-in-Aid for Scientific Research (A) No.JP22H00522 of JSPS.

\appendix
\section{Proofs of Corollaries~\ref{corollary}, \ref{corollary2}, and \ref{corollary3}}
\label{E}
The proof of Corollary~\ref{corollary} is as follows:
\begin{proof}
Since proofs are essentially the same for all three classes, we only write a concrete proof for ${\sf RwBQP}$.
Let $\overline{L}$ be the complement of $L$.
From Def.~\ref{BRQP}, when $x\in \overline{L}_{\rm yes}$ (i.e., $x\in L_{\rm no}$),
\begin{eqnarray}
\sum_{z\in A}\left|\left|\left(|1\rangle\langle 1|\otimes I^{\otimes n+m+\ell-1}\right)\left(X\otimes I^{\otimes n+m+\ell-1}\right)Q_n^{(z)}\left(|x\rangle|0^m\rangle|\tilde{\mathcal{D}}\rangle\right)\right|\right|^2\ge 2/3.
\end{eqnarray}
On the other hand, when $x\in \overline{L}_{\rm no}$ (i.e., $x\in L_{\rm yes}$),
\begin{eqnarray}
\sum_{z\in A}\left|\left|\left(|1\rangle\langle 1|\otimes I^{\otimes n+m+\ell-1}\right)\left(X\otimes I^{\otimes n+m+\ell-1}\right)Q_n^{(z)}\left(|x\rangle|0^m\rangle|\tilde{\mathcal{D}}\rangle\right)\right|\right|^2\le 1/3.
\end{eqnarray}
Therefore, ${\sf coRwBQP}\subseteq{\sf RwBQP}$.
By using the same argument, we can also show ${\sf coRwBQP}\supseteq{\sf RwBQP}$ and thus ${\sf RwBQP}={\sf coRwBQP}$.
\end{proof}

The proof of Corollary~\ref{corollary2} is as follows:
\begin{proof}
Since proofs are essentially the same for all three classes, we only write a concrete proof for ${\sf RwBQP}$.
By repeating the same ${\sf RwBQP}$ computation $m$ times and taking the majority vote on the outcomes, due to the Chernoff bound~\cite{AB09}, the error probability is improved from $1/3$ to $2^{-q(m)}$ for a positive polynomial function $q(m)$ in $m$.
Therefore, by setting $m$ so that $q(m)\ge p(n)$, we obtain this corollary.
\end{proof}

The proof of Corollary~\ref{corollary3} is as follows:
\begin{proof}
Since proofs are essentially the same for all three classes, we only write a concrete proof for ${\sf RwBQP}$.
From Def.~\ref{BRQP}, when a polynomial-time algorithm calls another polynomial-time algorithm as a subroutine, the resultant algorithm can still be realized in polynomial time.
Since the ${\sf RwBQP}$ computation has some error probability, a remaining concern is that errors may accumulate every time polynomial-time algorithms are called.
However, the accumulation of errors is negligible from Corollary~\ref{corollary2}.
As a result, we obtain ${\sf RwBQP}^{\sf RwBQP}_{{\rm classical}}={\sf RwBQP}$.
\end{proof}

\section{Proof of Lemma~\ref{lemmapspace}}
\label{C}
We give a proof of Lemma~\ref{lemmapspace}.
\begin{proof}
To show this lemma, it is sufficient to show that the acceptance probability
\begin{eqnarray}
p_{\rm acc}=\sum_{z\in A}q_z\left|\left|\left(|1\rangle\langle 1|\otimes I^{\otimes n+m+\ell-1}\right)\mathcal{N}[Q_n^{(z)}\left(|x\rangle|0^m\rangle|\tilde{\mathcal{D}}\rangle\right)]\right|\right|^2
\end{eqnarray}
can be computed in polynomial space.
To this end, we use the Feynman path integral.
Note that we only concretely show that
\begin{eqnarray}
\left|\left|\left(|1\rangle\langle 1|\otimes I^{\otimes n+m+\ell-1}\right)\mathcal{N}[Q_n^{(z)}\left(|x\rangle|0^m\rangle|\tilde{\mathcal{D}}\rangle\right)]\right|\right|^2
\end{eqnarray}
can be computed in polynomial space because the derivation of $q_z$ in polynomial space can be done in a similar way.
Let $k$ be some polynomial in $n$.
By using $q_i^{(z)}$ that is an elementary gate in a universal gate set or a single-qubit postselection onto $|1\rangle\langle 1|$ or $|0\rangle\langle 0|$ for $1\le i\le k$, we can decompose $\mathcal{N}[Q_n^{(z)}(\cdot)]$ as $\mathcal{N}[Q_n^{(z)}(\cdot)]=\prod_{i=1}^kq_i^{(z)}(\cdot)$ for any $z\in A$.
Let $N\equiv n+m+\ell$.
Therefore,
\begin{eqnarray}
\nonumber
&&\left(|1\rangle\langle 1|\otimes I^{\otimes N-1}\right)\mathcal{N}[Q_n^{(z)}\left(|x\rangle|0^m\rangle|\tilde{\mathcal{D}}\rangle\right)]\\
\nonumber
&=&\left[|1\rangle\langle 1|\otimes \left(\sum_{d\in\{0,1\}^{N-1}}|d\rangle\langle d|\right)\right]q_k^{(z)}\left[\prod_{i=1}^{k-1}\left(\sum_{s^{(i)}\in\{0,1\}^N}|s^{(i)}\rangle\langle s^{(i)}|\right)q_i^{(z)}\right]|x\rangle|0^m\rangle|\tilde{\mathcal{D}}\rangle.\\
\label{pathpacc}
\end{eqnarray}
Let $s$ be a shorthand notation of a $(k-1)N$-bit string $s^{(1)}s^{(2)}\ldots s^{(k-1)}$.
By defining
\begin{eqnarray}
g(s,d,z)\equiv\langle 1|\langle d|\left[q_k^{(z)}\left(\prod_{i=1}^{k-1}|s^{(i)}\rangle\langle s^{(i)}|q_i^{(z)}\right)\right]|x\rangle|0^m\rangle|\tilde{\mathcal{D}}\rangle,
\end{eqnarray}
$p_{\rm acc}$ can be written as
\begin{eqnarray}
\label{finalpath}
\sum_{z\in A}q_z\sum_{s,\tilde{s}\in\{0,1\}^{(k-1)N},d\in\{0,1\}^{N-1}}g(s,d,z)g^{\ast}(\tilde{s},d,z).
\end{eqnarray}
{\red Since} $q_i^{(z)}$ is just a constant-size matrix, each term $g(s,d,z)g^{\ast}(\tilde{s},d,z)$ can be computed in polynomial space\footnote{It may not be able to be computed in polynomial time, because $q_i^{(z)}$ may be a postselection.}.
Therefore, Eq.~(\ref{finalpath}) can also be computed in polynomial space.
\end{proof}

\section{Success Probability of Our Mitigation Protocol}
\label{D}
We show that the probability that the state in Eq.~(\ref{ideal}) is output in our mitigation protocol is at least $1-5n/8^n$ and that $(1-p_{2n+3})/p_{2n+3}\ge 1$.
For clarity, we show a schematic diagram of our mitigation protocol in Fig.~\ref{miti}.

When the outcome $0$ is obtained by measuring the last register of Eq.~(\ref{prepare}) in the $Z$ basis, the amplitude of the nontarget state $|\psi_t^\perp\rangle$ is mitigated with the factor $1/\sqrt{2}$ (up to normalization) because
\begin{eqnarray}
\label{projectionTOCS}
\left(I^{\otimes 2}\otimes \langle 0|\right)\left(\sqrt{p_i}|\psi_t^\perp\rangle|+\rangle+\sqrt{1-p_i}|\psi_t\rangle|0\rangle\right)=\sqrt{\frac{p_i}{2}}|\psi_t^\perp\rangle+\sqrt{1-p_i}|\psi_t\rangle.
\end{eqnarray}
Therefore, for any $i$, the probability $q_i$ that the outcome $0$ is obtained by measuring the last register of Eq.~(\ref{prepare}) in the $Z$ basis is
\begin{eqnarray}
q_i=\frac{p2^{-(i+1)}+(1-p)}{1-(1-2^{-i})p},
\end{eqnarray}
where we have used $p_0=p$.
Therefore, for any $i$, the probability that we obtain $0$ by measuring the last register of Eq.~(\ref{prepare}) in the $Z$ basis before or at $c=3n$ is
\begin{eqnarray}
\label{suci}
1-\left(1-q_i\right)^{3n}\ge1-\left(1-q_0\right)^{3n}=1-\left(\frac{p}{2}\right)^{3n}.
\end{eqnarray}

Our purpose is to sufficiently mitigate the amplitude of $|\psi_t^\perp\rangle$ so that we obtain the outcome $1$ by measuring the second register of Eq.~(\ref{i+1}) in the $Z$ basis with probability of at least $1/2$.
To this end, it is sufficient to run our mitigation protocol until $i=N$ such that
\begin{eqnarray}
\frac{1-p_N}{p_N}\ge 1.
\end{eqnarray}
From Eq.~(\ref{suc2}), this condition can be satisfied by setting
\begin{eqnarray}
\label{N}
N=\left\lceil\log_2{\left(\frac{p}{1-p}\right)}\right\rceil.
\end{eqnarray}

By combining Eqs.~(\ref{suci}) and (\ref{N}), the probability that the state in Eq.~(\ref{ideal}) is output in our mitigation protocol (i.e., the probability of our mitigation protocol reaching to {\bf Success} in Fig.~\ref{miti}) is at least
\begin{eqnarray}
\left[1-\left(\frac{p}{2}\right)^{3n}\right]^N&\ge& 1-\left[\log_2{\left(\frac{p}{1-p}\right)}+1\right]\left(\frac{p}{2}\right)^{3n}\\
\label{lower1}
&\ge&1-\left[\log_2{\left(\frac{p}{1-p}\right)}+1\right]\left(\frac{1}{2}\right)^{3n}.
\end{eqnarray}
Since $\log_2{[p/(1-p)]}$ is a monotonically increasing function of $p$ in the range of $0<p<1$, the remaining task is to upper bound $p$.
(Recall that our goal in this appendix is to show that Eq.(\ref{lower1}) is lower bounded by $1-5n/8^n$ and $N\le 2n+3$.)

From the simple observation that $2[(2^n-s)^2+s^2]$, which is the denominator of $p$ in Eq.~(\ref{p}), is a symmetric convex downward function that becomes minimum at $s=2^{n-1}$, and that $(2^n-s)^2$ in the numerator of $p$ is a monotonically decreasing function in the range of $0<s\le 2^n$, the value of $s$ maximizing $p$ (for any $\alpha$, $\beta$, and $n$) is between $1$ and $2^{n-1}$.
To upper bound $p$, we separately consider three cases: (i) $1\le s\le (1-1/\sqrt{2})2^n$, (ii) $(1-1/\sqrt{2})2^n<s\le 2^{n-1}-1$, and (iii) $s=2^{n-1}$.

(i) When $1\le s\le (1-1/\sqrt{2})2^n$, the inequality $2(2^n-s)^2\ge 4^n$ holds.
From Eq.~(\ref{p}),
\begin{eqnarray*}
p=\frac{2\alpha^2(2^n-s)^2+\beta^24^n}{2[(2^n-s)^2+s^2]}.
\end{eqnarray*}
Since $\alpha^2+\beta^2=1$, the numerator of $p$ is upper bounded by $2\alpha^2(2^n-s)^2+2\beta^2(2^n-s)^2=2(2^n-s)^2$.
Therefore,
\begin{eqnarray}
\label{(i)}
p\le\frac{(2^n-s)^2}{(2^n-s)^2+s^2}\le\frac{(2^n-1)^2}{(2^n-1)^2+1}=1-\frac{1}{(2^n-1)^2+1}.
\end{eqnarray}

(ii) When $(1-1/\sqrt{2})2^n<s\le 2^{n-1}-1$, the inequality $2(2^n-s)^2<4^n$ holds, and hence
\begin{eqnarray}
\label{(ii)}
p\le\frac{4^n}{2[(2^n-s)^2+s^2]}\le\frac{4^n}{2[(2^{n-1}+1)^2+(2^{n-1}-1)^2]}=1-\frac{1}{4^{n-1}+1}.
\end{eqnarray}
The first inequality can be derived in a similar way as in (i).

(iii) When $s=2^{n-1}$,
\begin{eqnarray}
\label{(iii)}
p=\frac{\alpha^2}{2}+\beta^2\le1-\frac{1}{2(4^n+1)},
\end{eqnarray}
where we have used $2^{-n}\le\beta/\alpha\le2^n$ and $\alpha^2+\beta^2=1$.

From Eqs.~(\ref{(i)}), (\ref{(ii)}), and (\ref{(iii)}), $p\le 1-1/[2(4^n+1)]\equiv p_{\rm max}$, and hence
\begin{eqnarray}
\label{ubTOCS}
N\le\log_2{\frac{p}{1-p}}+1\le\log_2{\frac{p_{\rm max}}{1-p_{\rm max}}}+1\le 2n+3.
\end{eqnarray}
This implies that Eq.(\ref{lower1}) is lower bounded by
\begin{eqnarray}
1-\left[\log_2{\left(\frac{p_{\rm max}}{1-p_{\rm max}}\right)}+1\right]\left(\frac{1}{2}\right)^{3n}\ge 1-\frac{2n+3}{2^{3n}}\ge 1-\frac{5n}{8^n}.
\end{eqnarray}

\section{Proof of Theorem~\ref{easy}}
\label{G3}
The proof of Theorem~\ref{easy} is as follows:
\begin{proof}
To solve the problem in Def.~\ref{problem} with a constant probability, we use the idea used in \cite{CCKW19}.
We prepare the state
\begin{eqnarray}
\label{firststate}
\frac{\sum_{s\in\mathbb{Z}_q^n,e\in\chi^m,d\in\{0,1\}}|s,e,d\rangle|f_K(s,e,d)\rangle}{\sqrt{2q^n(2\mu+1)^m}},
\end{eqnarray}
where $1/\sqrt{2q^n(2\mu+1)^m}$ is the normalization factor (see Theorem~\ref{crf}).
When there exists a natural number $N$ satisfying $2^N=2q^n(2\mu+1)^m$, this preparation is trivially possible in quantum polynomial time with unit probability.
If this is not the case, we prepare
\begin{eqnarray}
\frac{\sum_{(s,e,d)\in\mathbb{Z}_q^n\times\chi^m\times\{0,1\}}|s,e,d\rangle|f_K(s,e,d)\rangle|1\rangle+\sum_{(s,e,d)\notin\mathbb{Z}_q^n\times\chi^m\times\{0,1\}}|s,e,d\rangle|0^{m\log_2{q}}\rangle|0\rangle}{\sqrt{2^{\tilde{N}}}}
\end{eqnarray}
with unit probability, where $\tilde{N}$ is the smallest natural number satisfying $2^{\tilde{N}}\ge2q^n(2\mu+1)^m$.
If we obtain the outcome $1$ by measuring the third register in the computational basis, we can prepare the state in Eq.~(\ref{firststate}).
From $2q^n(2\mu+1)^m>2^{\tilde{N}-1}$, the probability of $1$ being observed is larger than $1/2$.
Therefore, by repeating these procedures, we can obtain the outcome $1$ at least once with probability of at least $1-o(1)$.

By measuring the second register in Eq.~(\ref{firststate}), we obtain a value of $f_K(s,e,d)$.
From the $\delta$-$2$ regularity of $\mathcal{F}$, the obtained output $f_K(s,e,d)$ has exactly two different preimages with probability of at least $\delta$.
When $f_K(s,e,d)$ has exactly two different preimages, the state of the first register becomes
\begin{eqnarray}
\label{two}
\frac{|s,e,1\rangle+|s+s_0,e+e_0,0\rangle}{\sqrt{2}},
\end{eqnarray}
where $f_K(s,e,1)=f_K(s+s_0,e+e_0,0)$.
Then, we measure the state in Eq.~(\ref{two}) and obtain the values of $(s,e,1)$ or $(s+s_0,e+e_0,0)$.

To obtain the other one with probability $1/2$, we would like to obtain the state in Eq.~(\ref{two}) again.
It is possible by applying the rewinding operator $R$ on $|s,e,1\rangle$ or $|s+s_0,e+e_0,0\rangle$ and a classical description\footnote{More precisely, the classical description means a transcript of how to prepare the state in Eq.~(\ref{two}) from a tensor product of $|0\rangle$'s. Let $V$ be an operator that prepares the state in Eq.~(\ref{firststate}) from $|0\rangle$'s and $\ell$ be the number of qubits required in the first register in Eq.~(\ref{firststate}). Then, the classical description is $(I^{\otimes\ell}\otimes|f_K(s,e,d)\rangle\langle f_K(s,e,d)|)V$. Note that $V$ can be decomposed into a polynomial number of elementary gates in a universal gate set and the postselection onto $|1\rangle\langle 1|$.} of the state in Eq.~(\ref{two}).
As an important point, since the state in Eq.~(\ref{two}) becomes $|s,e,1\rangle$ or $|s+s_0,e+e_0,0\rangle$ by measuring only the last single qubit in the $Z$ basis, a single rewinding operator is sufficient to rewind it.

On the other hand, if rewinding operator is not allowed, the probability of the problem being solved is super polynomially small from the collision resistance of the function family $\mathcal{F}$.
\end{proof}

\section{Proof of Theorem~\ref{newlemma}}
\label{newH}
In this appendix, we show that the ${\sf RwBQP}(1)$ machine can solve the SD problem with probabilities at least $1/2-2^{-O(n^c)}$ and $1-2\cdot2^{-O(n^c)}$ when $D_{\rm TV}(P_0,P_1)<2^{-O(n^c)}$ and $D_{\rm TV}(P_0,P_1)>1-2^{-O(n^c)}$, respectively.
To this end, we use an argument inspired by \cite{ABFL16}\footnote{As a difference between their argument in \cite{ABFL16} and ours, we replace their non-collapsing measurement with a single rewinding operator and an ordinary (i.e., a collapsing) measurement. Furthermore, although they use three non-collapsing measurements, we can perform the rewinding operator only once.}.
First, the ${\sf RwBQP}(1)$ machine prepares
\begin{eqnarray}
\frac{1}{\sqrt{2^{n+1}}}\sum_{b\in\{0,1\},x\in\{0,1\}^n}|b\rangle|x\rangle|C_b(x)\rangle.
\end{eqnarray}
By measuring the last register in the computational basis, it obtains the outcome $y\in\{0,1\}^m$ and
\begin{eqnarray}
\label{newy}
\frac{|0\rangle\left(\sum_{x: C_0(x)=y}|x\rangle\right)+|1\rangle\left(\sum_{x: C_1(x)=y}|x\rangle\right)}{\sqrt{2^n(P_0(y)+P_1(y))}},
\end{eqnarray}
where for $b\in\{0,1\}$, $P_b(y)=|\{x\in\{0,1\}^n: C_b(x)=y\}|/2^n$ is the probability of $C_b$ outputting $y$ for uniformly random inputs $x\in\{0,1\}^n$.
This event occurs with probability $(P_0(y)+P_1(y))/2$.
Then, it measures the first register in Eq.~(\ref{newy}) in the computational basis and obtain the outcome $b_1\in\{0,1\}$.
By using a single rewinding operator, it can perform the same measurement again and obtain another outcome $b_2\in\{0,1\}$.
Finally, it outputs $1$ if $b_1\neq b_2$.
Otherwise, it outputs $0$.

We now calculate error probabilities, i.e., probabilities of the machine outputting $0$ and $1$ when $D_{\rm TV}(P_0,P_1)<2^{-O(n^c)}$ and $D_{\rm TV}(P_0,P_1)>1-2^{-O(n^c)}$, respectively.
First, we consider the case of $D_{\rm TV}(P_0,P_1)<2^{-O(n^c)}$.
The probability $p_{\rm err}$ of the machine outputting $0$, i.e., that of $b_1=b_2$ is
\begin{eqnarray}
\label{newerr}
p_{\rm err}=\sum_{y\in\{0,1\}^m}\frac{P_0(y)+P_1(y)}{2}\frac{P_0(y)^2+P_1(y)^2}{(P_0(y)+P_1(y))^2}{\red=\sum_{y\in\{0,1\}^m}\frac{P_0(y)^2+P_1(y)^2}{2(P_0(y)+P_1(y))}}.
\end{eqnarray}
Let $\delta(y)\equiv{\rm max}\{P_0(y)-P_1(y),P_1(y)-P_0(y)\}$ and $P_{\rm min}(y)\equiv{\rm min}\{P_0(y),P_1(y)\}$.
From Eq.~(\ref{newerr}),
\begin{eqnarray}
\label{newc}
p_{\rm err}=\frac{1}{2}\left(1+\sum_{y\in\{0,1\}^m}\delta(y)\frac{P_{\rm min}(y)+\delta(y)}{2P_{\rm min}(y)+\delta(y)}\right)\le\frac{1}{2}\left(1+\sum_{y\in\{0,1\}^m}\delta(y)\right)<\frac{1}{2}+2^{-O(n^c)},
\end{eqnarray}
where we have used $\sum_{y\in\{0,1\}^m}\delta(y)=2D_{\rm TV}(P_0,P_1)$ in the last inequality.

Then, we consider the case of $D_{\rm TV}(P_0,P_1)>1-2^{-O(n^c)}$.
The probability $p'_{\rm err}$ of the machine outputting $1$, i.e., that of $b_1\neq b_2$ is
\begin{eqnarray}
\label{newerr2}
p'_{\rm err}=\sum_{y\in\{0,1\}^m}\frac{P_0(y)+P_1(y)}{2}\frac{2P_0(y)P_1(y)}{(P_0(y)+P_1(y))^2}{\red=\sum_{y\in\{0,1\}^m}\frac{P_0(y)P_1(y)}{P_0(y)+P_1(y)}}.
\end{eqnarray}
Since $D_{\rm TV}(P_0,P_1)>1-2^{-O(n^c)}$, there exists a set $S$ such that $\sum_{y\in S}P_0(y)\ge 1-2^{-O(n^c)}$ and $\sum_{y\in S}P_1(y)\le 2^{-O(n^c)}$.
Let $\bar{S}$ be a complement of $S$.
From Eq.~(\ref{newerr2}),
\begin{eqnarray}
p'_{\rm err}=\sum_{y\in S}\frac{P_0(y)P_1(y)}{P_0(y)+P_1(y)}+\sum_{y\in \bar{S}}\frac{P_0(y)P_1(y)}{P_0(y)+P_1(y)}\le\sum_{y\in S}P_1(y)+\sum_{y\in\bar{S}}P_0(y)\le 2\cdot2^{-O(n^c)},
\end{eqnarray}
where we have used $\sum_{y\in S\cup\bar{S}}P_0(y)=1$ in the last inequality.

\end{document}